\documentclass[aps,prd,10pt,onecolumn,notitlepage,superscriptaddress,nofootinbib,longbibliography,showkeys]{revtex4-2}
\usepackage{amsmath,amssymb,amsfonts,amsthm,mathtools,bm}
\usepackage{graphicx}
\usepackage[margin=1in]{geometry}
\graphicspath{{figs/}}
\usepackage{microtype}
\usepackage{enumitem}
\usepackage{xcolor}
\usepackage[colorlinks=true,allcolors=blue]{hyperref}
\usepackage{orcidlink}

\newtheorem{theorem}{Theorem}
\newtheorem{lemma}[theorem]{Lemma}
\newtheorem{proposition}[theorem]{Proposition}
\newtheorem{corollary}[theorem]{Corollary}

\newcommand{\E}{\mathbb E}
\newcommand{\Prob}{\mathbb P}
\newcommand{\Tr}{\operatorname{Tr}}
\newcommand{\id}{\operatorname{id}}

\newcommand{\D}{\mathcal D}
\newcommand{\Ochan}{\mathcal O}

\newcommand{\norm}[1]{\lVert#1\rVert}

\newcommand{\bbone}{\mathbf 1}
\newcommand{\Fopt}{F_{\rm e}^{\rm opt}}
\newcommand{\rec}{\mathrm{rec}}
\newcommand{\Prb}{\mathbb P}

\hypersetup{
  pdftitle={Microscopic Side Information Controls Ordered Hayden--Preskill Recovery},
  pdfauthor={Joao V. R. Alencar; Allan R. P. Moreira; Joao B. R. Silva},
  pdfsubject={Quantum information recovery under ordered quantum deletion},
  pdfkeywords={Hayden--Preskill protocol, quantum deletion channels, quantum recovery, decoupling, planted subsequences, microscopic side information}
}

\begin{document}

\title{Microscopic Side Information Controls Ordered Hayden--Preskill Recovery}

\author{Jo\~ao V. R. Alencar\,\orcidlink{0009-0008-3042-5807}}
\email{joaoalencar@alu.ufc.br}
\affiliation{Department of Teleinformatics Engineering, Federal University of Cear\'a (UFC), Fortaleza, CE, 60440-900, Brazil}

\author{Allan R. P. Moreira\,\orcidlink{0000-0002-6535-493X}}
\email{allan.moreira@fisica.ufc.br}
\affiliation{Secretaria da Educa\c{c}\~ao do Cear\'a (SEDUC), Coordenadoria Regional de Desenvolvimento da Educa\c{c}\~ao (CREDE 9), Horizonte, Cear\'a, 62880-384, Brazil}
\affiliation{Postgraduate Program in Electrical and Computer Engineering, Federal University of Cear\'a, Sobral, Cear\'a, 62010-560, Brazil}

\author{Jo\~ao B. R. Silva\,\orcidlink{0000-0002-4004-0926}}
\email{joaobrs@ufc.br}
\affiliation{Department of Teleinformatics Engineering, Federal University of Cear\'a (UFC), Fortaleza, CE, 60440-900, Brazil}


\begin{abstract}
In the Hayden--Preskill protocol, the decoder is usually assumed to know the microscopic identity of the collected output qubits. We study what happens when these labels are unavailable and only the relative order of the received qubits is preserved. The resulting order-preserving deletion channel maps an $n$-qubit scrambled register to a subsequence of length $\ell$. For a diary of fixed size $k$, we prove that the optimal entanglement fidelity converges to the no-output value $4^{-k}$ when $\ell=o(n^{2/3})$, uniformly over the scrambling unitary. For a Haar-random scrambling unitary, it converges to one when $\ell=\omega(n^{2/3})$ and $\ell=o(n)$. Monotonicity then gives the fixed-error recovery scale $\ell_{\mathrm{rec}}=\Theta(n^{2/3})$. We also consider partial position information obtained by dividing the register into $B$ consecutive blocks and revealing the block of origin of each received qubit. The recovery scale becomes $\ell_{\mathrm{rec}}(n,B)\asymp n^{2/3}B^{-1/3}$ for $B\leq\sqrt n$ and $\ell_{\mathrm{rec}}(n,B)\asymp n/B$ for $B\geq\sqrt n$. The exponent $2/3$ is traced to rank-aligned coincidences between random subsequences, which control both the converse and the recovery argument. Thus, even purely classical information about the origin of the received subsystems can change the amount of quantum output required for Hayden--Preskill recovery.
\end{abstract}

\keywords{Hayden--Preskill protocol, quantum deletion channels, quantum recovery, decoupling, planted subsequences, microscopic side information}

\maketitle

\section{Introduction}
\label{sec:introduction}

Quantum scrambling spreads initially localized information across many degrees of freedom and encodes it in nonlocal correlations. In black-hole physics, this mechanism underlies the Hayden--Preskill thought experiment. After the Page time, a small diary deposited in an old black hole can be reconstructed from a relatively small amount of subsequently emitted radiation, provided that the internal dynamics is sufficiently mixing \cite{Page1993,HaydenPreskill2007,SekinoSusskind2008}. The protocol has since become a standard model for quantum information transfer through chaotic dynamics and for the error-correcting structure associated with evaporation.

Later work has examined the complexity of decoding and developed explicit recovery procedures \cite{HarlowHayden2013,YoshidaKitaev2017,YoshidaYao2019,Yoshida2022,NakataMatsuuraKoashi2025}. Variants incorporating noise, finite temperature, conserved quantities, and structured many-body dynamics have extended the original random-unitary setting to more general channels and physical systems \cite{BaoKikuchi2020,LiWang2022,NakataTezuka2024,RamppClaeys2024}. Recovery has also been connected to operator growth and channel entanglement \cite{HosurEtAl2016}, while teleportation-based protocols have provided experimentally accessible probes of Hayden--Preskill scrambling \cite{LandsmanEtAl2019,SekiEtAl2025}. These developments generally assume that the decoder knows which output subsystems were collected.

The identities of those subsystems are part of the classical information supplied to the decoder. A receiver who obtains qubits $i_1,\ldots,i_\ell$ together with their indices has access to a different channel from one who receives the same qubits after the indices have been erased. In the latter case, the channel averages over all microscopic embeddings compatible with the observed output. The two receivers hold systems of the same Hilbert-space dimension, but their optimal recovery fidelities can differ parametrically.

We study this distinction through an ordered deletion channel. An $n$-qubit register is scrambled, a uniformly random set of $\ell$ positions is retained, and the surviving qubits are presented in their original relative order. Their absolute positions are unavailable to the receiver. The model preserves the relative canonical order inherited from the scrambled register while erasing the microscopic locations of the surviving systems.

For a diary of fixed size $k$, the resulting recovery scale differs from that of the usual labeled Hayden--Preskill setting. The optimal entanglement fidelity satisfies
\begin{equation}
\ell=o(n^{2/3})
\quad\Longrightarrow\quad
F_{\rm e}^{\rm opt}\to4^{-k},
\qquad
\ell=\omega(n^{2/3}),\ \ell=o(n)
\quad\Longrightarrow\quad
F_{\rm e}^{\rm opt}\to1.
\label{eq:intro-regimes}
\end{equation}
The first convergence is uniform over the scrambling unitary. Since $F_{\rm e}^{\rm opt}$ is already optimized over all recovery maps, no decoder asymptotically exceeds the baseline $4^{-k}$. The second convergence holds in probability for a Haar-random scrambling unitary. Monotonicity under further deletion then yields the fixed-error law
\[
\ell_{\mathrm{rec}}=\Theta(n^{2/3}).
\]
In the standard labeled-subsystem setting, the corresponding recovery length is $O(k)$ and therefore remains constant when $k$ is fixed. Ordered deletion makes the required output grow as $n^{2/3}$. This scale is mesoscopic. It diverges with the size of the register while remaining parametrically smaller than the full system.

The exponent $2/3$ comes from the order statistics of random subsets. For ranks away from the endpoints, the position of the $a$th survivor fluctuates over a window of width of order $n/\sqrt{\ell}$. Two independent subsequences therefore place their $a$th survivors at the same microscopic site with probability of order $\sqrt{\ell}/n$. Summing over the bulk ranks gives the alignment strength
\begin{equation}
V_{n,\ell}\asymp\frac{\ell^{3/2}}{n}.
\label{eq:intro-alignment}
\end{equation}
The condition $V_{n,\ell}\asymp1$ gives $\ell\asymp n^{2/3}$. The same statistic appears in both parts of the argument. Below the recovery scale, it controls the purity of the channel Choi state. Above the scale, it governs the separation of a planted-subsequence model from the competing product laws.

Partial position information produces a family of intermediate recovery problems. We divide the register into $B$ consecutive blocks and reveal the block containing each surviving qubit, while keeping its position inside the block hidden. The recovery length obeys
\begin{equation}
\ell_{\mathrm{rec}}(n,B)\asymp
\begin{cases}
n^{2/3}B^{-1/3},& B\leq\sqrt n,\\[1mm]
n/B,& B\geq\sqrt n.
\end{cases}
\label{eq:intro-phase}
\end{equation}
When a typical block contains many survivors, recovery relies on the accumulation of weak rank information within each block. When the blocks are smaller, their labels almost localize the surviving systems, and the recovery scale is determined by the microscopic block size $n/B$. The crossover at $B\asymp\sqrt n$ separates these two mechanisms.

The converse starts from an exact identity for the normalized Choi purity of the ordered deletion channel. This purity is the exponential moment of the number of rank-aligned coincidences between two independent subsequences. Factorial-moment estimates show that the normalized purity tends to one in the subcritical regime and provide the mixing estimate used in the converse. For achievability, a computational-basis measurement reduces the problem to distinguishing the planted law from $P_XQ_Y$, uniformly over the output law $Q_Y$, with $P_X$ fixed as the uniform input marginal. A weighted alignment statistic gives the required separation \cite{Sibson1969}. Data processing for sandwiched R\'enyi divergences and joint state--channel decoupling then lift the classical separation to quantum recovery \cite{FrankLieb2013,Beigi2013,LeditzkyRouzeDatta2016,ChengDupuisGao2024}. The final step uses the one-shot decoupling framework developed for random and approximate-design dynamics \cite{DupuisEtAl2014,SzehrEtAl2013,BrownFawzi2015}.

The closest classical parallels are deletion channels and synchronization-error models, where symbol positions are lost during transmission \cite{Mitzenmacher2009,CheraghchiRibeiro2021,WangDumanAktas2013}. Quantum insertion--deletion codes address related synchronization errors by constructing encodings and decoders that tolerate unknown insertions or deletions \cite{LeahyTouchetteYao2019,HagiwaraNakayama2020,ShibayamaHagiwara2021,Ouyang2021,ShibayamaOuyang2021,Hagiwara2023,BulledOuyang2026,OuyangBrennen2026,SasakiEtAl2026}. The present setting differs in that the encoding is generated by scrambling, the input is the entanglement-assisted Hayden--Preskill state, and the objective is recovery of a fixed-size diary. The planted distribution used in the achievability proof also connects the analysis to recent work on random subsequences in classical deletion channels \cite{JeongPernice2026}.

Section~\ref{sec:model} defines the channels and the operational recovery length. Section~\ref{sec:results} states the ordered and block-resolved recovery laws. Section~\ref{sec:proofideas} presents the main ideas behind their proofs. Section~\ref{sec:chronological} studies a permutation-twirled chronological emission channel for comparison. Section~\ref{sec:interpretation} discusses subsystem identity as classical side information. The appendices contain the complete proofs.

\section{Protocol, channels, and operational quantities}
\label{sec:model}

\subsection{Old-black-hole Hayden--Preskill setup}

Let $M$ be a $k$-qubit diary maximally entangled with a reference system $R$, and let $H$ denote the remaining $n-k$ qubits of an old black hole. The early radiation $E$ purifies the maximally mixed state on $H$. The input register is $C=MH$, with reduced state
\begin{equation}
 \rho_{CR}=\Phi_{MR}\otimes\pi_H.
 \label{eq:hp-input}
\end{equation}
The full state on $CRE$ is pure. A unitary $U$ acts on $C$, after which an emission channel produces the accessible output $D$ and, when present, a classical flag $\mathsf F$. The decoder knows $U$ and has access to $D$, $E$, and $\mathsf F$.

Let $\widehat M$ denote the recovered diary register. For fixed $U$ and fixed channel parameters, the optimal entanglement fidelity is
\begin{equation}
 \Fopt(U)
 =\sup_{\mathcal R}
 \langle\Phi_{R\widehat M}|
 (\id_R\otimes\mathcal R)
 (\rho^{U}_{RDE\mathsf F})
 |\Phi_{R\widehat M}\rangle,
 \label{eq:fopt-def}
\end{equation}
where $\mathcal R:DE\mathsf F\to\widehat M$ is a quantum channel. The recovery map may depend on $U$, $n$, $\ell$, and the available side information.

\subsection{Ordered quantum deletion}

For an ordered subset
$S=(S_1<\cdots<S_\ell)\subset[n]$, let $J_S$ denote the canonical order-preserving identification
\begin{equation}
 (\mathbb C^2)_{S_1}\otimes\cdots\otimes(\mathbb C^2)_{S_\ell}
 \longrightarrow
 D_1\otimes\cdots\otimes D_\ell.
\end{equation}
The fixed-length ordered deletion channel is
\begin{equation}
 \D_{n\to\ell}(\rho)
 =\binom n\ell^{-1}
 \sum_{\substack{S\subset[n]\\|S|=\ell}}
 J_S\Tr_{S^c}(\rho)J_S^\dagger.
 \label{eq:ordered-deletion}
\end{equation}
The receiver is not told the surviving set $S$. The output nevertheless preserves the relative order inherited from the microscopic input labels. This is the sense in which the output is unlabeled but ordered.

Let $\tau^{(n,\ell)}_{C'D}$ denote the normalized Choi state of \eqref{eq:ordered-deletion}, where $C'$ is a copy of the input register. Its marginals are maximally mixed:
\begin{equation}
 \tau_{C'}=\pi_{C'},
 \qquad
 \tau_D=\pi_D.
\end{equation}

\subsection{Block-resolved partial labels}

Assume that $n=Bm$ and divide the microscopic positions into $B$ consecutive blocks
$I_1,\ldots,I_B$, each of size $m$. For a surviving set $S$, define
\begin{equation}
 L_b=|S\cap I_b|,
 \qquad
 L=(L_1,\ldots,L_B).
\end{equation}
The receiver learns the block label of each surviving qubit, but not its position within the block. Since the blocks are consecutive and the output retains relative order, this information is equivalent to the count vector $L$ together with the canonical grouping of the output into blocks. The relative microscopic order within each block is preserved.

The corresponding flagged channel is
\begin{equation}
 \D^{(B)}_{n\to\ell}(\rho)
 =\binom n\ell^{-1}
 \sum_{\substack{S\subset[n]\\|S|=\ell}}
 |L(S)\rangle\!\langle L(S)|_{\mathsf L}\otimes
 J_S\Tr_{S^c}(\rho)J_S^\dagger.
 \label{eq:block-channel}
\end{equation}
A count vector $L$ with $\sum_bL_b=\ell$ occurs with probability
\begin{equation}
 p_L=
 \frac{\displaystyle\prod_{b=1}^B\binom m{L_b}}
 {\displaystyle\binom n\ell}.
 \label{eq:count-law}
\end{equation}
Conditioned on $L$, the surviving positions are independent and uniformly distributed within the different blocks. Hence
\begin{equation}
 \left.\D^{(B)}_{n\to\ell}\right|_L
 \simeq
 \bigotimes_{b=1}^B\D_{m\to L_b}.
 \label{eq:conditional-factorization}
\end{equation}
This factorization is the main structural ingredient in the block-resolved phase diagram.

\subsection{Fixed-error operational recovery length}

Choose fixed tolerances
\begin{equation}
 0<\varepsilon<1-4^{-k},
 \qquad
 0<\delta<1.
\end{equation}
The restriction on $\varepsilon$ ensures that the target fidelity lies strictly above the baseline $4^{-k}$.

For the ordered deletion channel, define
\begin{equation}
 \ell_{\rec}(n;\varepsilon,\delta)
 =\min\left\{\ell:
 \Prob_{U\sim\mathrm{Haar}}\!\left[
 \Fopt(U;\ell)\ge1-\varepsilon
 \right]\ge1-\delta\right\}.
 \label{eq:operational-threshold}
\end{equation}
For the block-resolved channel, define similarly
\begin{equation}
 \ell_{\rec}(n,B;\varepsilon,\delta)
 =\min\left\{\ell:
 \Prob_{U\sim\mathrm{Haar}}\!\left[
 \Fopt(U;\ell,B)\ge1-\varepsilon
 \right]\ge1-\delta\right\}.
 \label{eq:operational-threshold-block}
\end{equation}

Discarding one uniformly chosen survivor after applying
$\D_{n\to\ell+1}$ reproduces the channel with $\ell$ survivors:
\begin{equation}
 \D_{n\to\ell}
 =\D_{\ell+1\to\ell}\circ\D_{n\to\ell+1}.
 \label{eq:degradation-single}
\end{equation}
The block-resolved channels satisfy the analogous relation, with the classical count flag updated after the additional deletion. It follows that the optimal fidelity is nondecreasing in $\ell$. This monotonicity converts the subcritical and supercritical asymptotic regimes into constant-factor bounds on the operational recovery length.

\section{Main results}
\label{sec:results}

\subsection{Ordered deletion without microscopic labels}

\begin{theorem}[Ordered-deletion recovery transition]
\label{thm:ordered}
Fix the diary size $k$ and let $n\to\infty$.
\begin{enumerate}[label=(\roman*)]
\item If $\ell=o(n^{2/3})$, then
\begin{equation}
 \Fopt(U;\ell)\longrightarrow 4^{-k}
 \label{eq:ordered-subcritical}
\end{equation}
uniformly over the scrambling unitary $U$.

\item If $\ell/n^{2/3}\to\infty$ and $\ell/n\to0$, then
\begin{equation}
 \Fopt(U;\ell)\longrightarrow 1
 \label{eq:ordered-supercritical}
\end{equation}
in probability over a Haar-random scrambling unitary.
\end{enumerate}
\end{theorem}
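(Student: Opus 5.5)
The plan is to treat the two regimes separately. Both are controlled by the alignment statistic $V_{n,\ell}\asymp\ell^{3/2}/n$ introduced in the introduction.

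\emph{Converse (part (i)).} Write the Choi state as $\tau=\binom n\ell^{-1}\sum_S\tau_S$, where $\tau_S$ is the Choi state of the labeled channel $J_S\Tr_{S^c}(\cdot)J_S^\dagger$. Each $\tau_S$ is a product of $\ell$ maximally entangled pairs $\Phi_{C'_{S_a}D_a}$ with $\pi$ on $C'_{S^c}$. For two sets $S,T$ one computes
\[
\Tr\tau_S\tau_T=2^{-n-\ell}\,4^{N(S,T)},
\]
where $N(S,T)=\#\{a:S_a=T_a\}$ is the number of rank-aligned coincidences. Hence the normalized purity is
\[
2^{n+\ell}\Tr\tau^2=\E\,4^{N},
\]
with $S,T$ independent uniform $\ell$-subsets. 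I expand $\E 4^N=\sum_r 3^r\E\binom Nr$ and bound the factorial moments by $\E\binom Nr\le V^r/r!$ up to constants. Here $V=\E N=\sum_a\Prob[S_a=T_a]$, and a local-CLT bound for the order statistics gives $\Prob[S_a=T_a]\lesssim\sqrt\ell/(n\sqrt{\min(a,\ell-a)/\ell})$, which sums to $O(\ell^{3/2}/n)$. For joint coincidences at ranks $a_1<\dots<a_r$ I would condition successively: given $S_{a_i}=T_{a_i}$, the gaps to the next rank are independent hypergeometric-type blocks, so the conditional probabilities factor approximately. This is the step I expect to be the main obstacle, especially near the endpoint ranks and for $r$ growing with $n$. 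The fallback is to control only $\E N^2$ and to use a truncation, which is enough to show $\E4^N\to1$ when $V\to0$.

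Once $\norm{2^{n+\ell}\tau-\bbone}_2^2=\E4^N-1\to0$, Cauchy--Schwarz gives $\norm{\tau-\pi_{C'}\otimes\pi_D}_1\to0$. The input \eqref{eq:hp-input} is obtained from the Choi state by local operations on $C'$ (apply $U^{T}$ and restrict to $MH$ with purification $E$). Under those operations, the state on $RDE$ is close in trace norm to the one in which $D$ is replaced by $\pi_D$, uniformly in $U$. With $D$ independent of $RE$, and $R$ decoupled from $E$ because $\rho_{RE}=\pi_R\otimes\pi_E$, the optimal fidelity is $4^{-k}$. Continuity of $\Fopt$ in trace distance then gives (i).

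\emph{Achievability (part (ii)).} I would use one-shot decoupling for Haar $U$ \cite{DupuisEtAl2014}. The fidelity is at least $1-O(\sqrt{\epsilon})$ whenever $-\tilde H_{\min}$-type conditional entropy terms satisfy $2^{-\frac12(H_{\min}(M|R)+H_{\min}(C'|D)_{\tau^c})}\to0$, where $\tau^c$ is the complementary Choi state. Equivalently, I need the Choi state to carry many bits of correlation beyond what $k$ requires. I would lower-bound a smoothed mutual-information-type quantity (sandwiched Rényi) by data processing through computational-basis measurements on $C'$ and $D$, which produce the classical pair $(X,Y)$. Here $X\in\{0,1\}^n$ is uniform and $Y$ is $X$ restricted to a random subsequence: the planted law. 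I then show that $D_\alpha(P_{XY}\|P_X Q_Y)\to\infty$ for every $Q_Y$. I would build this from a weighted alignment statistic $W=\sum_a w_a(-1)^{X_{s_a}+Y_a}$, with $s_a=a n/\ell$ the typical positions and $w_a$ a kernel over a window of width $n/\sqrt\ell$. Under the product law $W$ has mean zero and variance $\sum w_a^2$. Under the planted law it acquires mean $\asymp\sum_a\Prob[S_a\approx s_a]$, giving signal-to-noise $\asymp\ell^{3/2}/n\to\infty$; the condition $\ell=o(n)$ is used to keep the window and variance estimates valid. A Chernoff bound on $W$ yields hypothesis-testing separation, hence a Rényi divergence growing with $V$. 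Feeding this into joint state--channel decoupling \cite{ChengDupuisGao2024} with fixed $k$ gives $\E_U(1-\Fopt)\to0$, and Markov's inequality gives convergence in probability.
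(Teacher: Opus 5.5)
\textbf{Part (ii): a genuine gap in the classical separation step.} The architecture of both parts matches the paper, but part (ii) fails at the classical separation. Because $I^{\rm S}_{1/2}$ and $I^*_{2/3}(C':D)$ are infima over the output law, you need an event that is likely under the planted law and unlikely under \emph{every} $P_XQ_Y$. Your statistic does not supply one.

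Read literally, with one position $s_a$ per rank, $W$ has planted mean $\sum_a w_a\Prob[S_a=s_a]$ and product-law variance $\sum_a w_a^2$. Since $\Prob[S_a=s_a]\asymp\sqrt\ell/n$, Cauchy--Schwarz caps the signal-to-noise ratio at $(\sum_a\Prob[S_a=s_a]^2)^{1/2}\asymp\ell/n\to0$, whatever the weights.

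To reach signal-to-noise $\sqrt V$, each rank must be spread over its window, e.g.\ $T=\sum_{a,i}p_{ai}\xi_i\eta_a$ as in the paper. But then the product-law variance is no longer $\sum w^2$. Conditionally on $Y=y$ it is $v(y)=\sum_i(\sum_a p_{ai}(-1)^{y_a})^2$, because each position lies in the windows of about $\sqrt\ell$ ranks. For the constant word, $v(y)=n(\ell/n)^2=\ell^2/n\gg V^2\asymp\ell^3/n^2$. So under $P_X\delta_y$ the event $\{T\ge cV\}$ has probability close to $1/2$, and no separation follows.

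The missing idea is the paper's output truncation. Since the planted $Y$-marginal is uniform, $\E_P v(Y)=V$. Hence the event $\{T\ge a_n/2\}\cap\{v(Y)\le h_nV\}$ with $h_n\to\infty$ has planted probability $1-o(1)$ by Markov's inequality. Hoeffding's inequality bounds its $P_XQ_Y$-probability by $e^{-h_n/8}$ uniformly in $Q_Y$.

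There is a second issue under the planted law. There $T$ is a Rademacher quadratic form ($\eta_a=\xi_{S_a}$) whose trace $m(S)=\sum_a p_{a,S_a}$ is random, so a Chernoff bound does not apply. The paper needs two ingredients:
\begin{itemize}
\item a high-probability lower bound $m(S)\gtrsim V/\mathrm{polylog}(V)$, from Freedman's inequality for the hypergeometric bridge plus a local hypergeometric lower bound;
\item the Hanson--Wright inequality with $\norm{A_S}_F^2\le V$ and $\norm{A_S}_2^2\le\ell/n$.
\end{itemize}
The rest of your chain is what the paper does: Hellinger affinity, R\'enyi monotonicity, the duality $H_2^*(C'|F_{\rm env})=I^*_{2/3}(C':D)-n$, Cheng--Dupuis--Gao, Markov, and Uhlmann.

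\textbf{Part (i): correct outline, but the crux is left open.} Your outline is the paper's: the exact identity $\E4^N$, factorial moments, trace distance via purity, a unitary on the reference side, and the baseline $4^{-k}$. Two of your stated steps would fail, however.

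First, the Poisson-type target $\E\binom Nr\lesssim (CV)^r/r!$ is false once $r$ is a large multiple of $\sqrt\ell$. Coincidences cluster: given $S_a=T_a$, the next rank coincides with conditional probability $\approx\ell/(2n)\gg\sqrt\ell/n$. Runs of consecutive central ranks therefore already give $\E\binom Nr\gtrsim V(c\ell/n)^{r-1}$. What is true, and enough, is $\E\binom Nr\le(C(\ell+1)^{3/2}/n)^r$, with no $1/r!$.

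The paper proves this without sequential conditioning, which also removes your endpoint worries. Coincidence at ranks $a_1<\dots<a_r$ is equality of the $r+1$ grouped gap sums. So its probability is $\sum_z\Prob\{Z=z\}^2\le\max_z\Prob\{Z=z\}$. This Dirichlet--multinomial maximum is a ratio of negative-binomial point masses, bounded by $C^r(L/n)^r\sqrt L/\sqrt{d_0\cdots d_r}$ uniformly in the ranks. A composition-sum estimate then finishes.

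Second, your fallback of controlling only $\E N^2$ with a truncation cannot bound the exponential moment $\E4^N$. That requires control of the tail of $N$ at exponential scale.
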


The value $4^{-k}$ is achieved by discarding the available systems and preparing the maximally mixed state on the recovered $k$-qubit diary. Since $\Fopt$ is already optimized over all recovery channels, part (i) shows that no decoder asymptotically exceeds this baseline below the $n^{2/3}$ scale.

\begin{corollary}[Operational recovery exponent]
\label{cor:ordered-operational}
For fixed $k,\varepsilon,\delta$ as in
\eqref{eq:operational-threshold}, there exist constants
$0<c_{k,\varepsilon,\delta}\le C_{k,\varepsilon,\delta}<\infty$
such that, for all sufficiently large $n$,
\begin{equation}
 c_{k,\varepsilon,\delta}n^{2/3}
 \le
 \ell_{\rec}(n;\varepsilon,\delta)
 \le
 C_{k,\varepsilon,\delta}n^{2/3}.
 \label{eq:ordered-theta}
\end{equation}
Consequently,
\[
 \ell_{\rec}(n;\varepsilon,\delta)=\Theta(n^{2/3})
\]
for fixed error and failure tolerances.
\end{corollary}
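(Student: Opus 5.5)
The corollary should follow from Theorem~\ref{thm:ordered} combined with the monotonicity of $\Fopt(U;\ell)$ in $\ell$, which comes from the degradation relation \eqref{eq:degradation-single}. Both bounds are obtained by contradiction using subsequences: assuming that one of the constants does not exist produces a sequence $n_j\to\infty$ along which $\ell_{\rec}$ sits in a regime covered by Theorem~\ref{thm:ordered}. That contradicts the defining property of $\ell_{\rec}$.

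\emph{Lower bound.} Suppose no $c>0$ works. Then there are $n_j\to\infty$ with $\ell_j:=\ell_{\rec}(n_j;\varepsilon,\delta)\le n_j^{2/3}/j$, so $\ell_j=o(n_j^{2/3})$ along this subsequence. Theorem~\ref{thm:ordered}(i) is stated for a general sequence $\ell=\ell(n)$, and it therefore applies along the subsequence after extending $\ell_j$ arbitrarily to an $o(n^{2/3})$ sequence. It gives $\sup_U \Fopt(U;\ell_j)\to4^{-k}$. Because $\varepsilon<1-4^{-k}$, we have $1-\varepsilon>4^{-k}$. Hence for large $j$ every $U$ satisfies $\Fopt(U;\ell_j)<1-\varepsilon$, and the Haar probability in \eqref{eq:operational-threshold} equals $0<1-\delta$. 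This contradicts the definition of $\ell_j$ as an element of the admissible set. The uniformity over $U$ in part (i) is essential here, because it converts the limit into a deterministic statement valid for all $U$.

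\emph{Upper bound.} Suppose no $C<\infty$ works. Then there are $n_j\to\infty$ with $\ell_{\rec}(n_j)>j\,n_j^{2/3}$. Choose $\ell'_j=\lceil \sqrt{j}\,n_j^{2/3}\rceil$; this satisfies $\ell'_j/n_j^{2/3}\to\infty$ and $\ell'_j/n_j\to0$. Theorem~\ref{thm:ordered}(ii) then gives $\Fopt(U;\ell'_j)\to1$ in probability. So for large $j$,
\[
\Prob\bigl[\Fopt(U;\ell'_j)\ge1-\varepsilon\bigr]\ge1-\delta,
\]
which means $\ell'_j$ belongs to the admissible set. It follows that $\ell_{\rec}(n_j)\le\ell'_j\le 2\sqrt j\,n_j^{2/3}<j\,n_j^{2/3}$ for large $j$, a contradiction. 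Strictly speaking this direction needs only the definition of $\ell_{\rec}$ as a minimum. Monotonicity matters for the stronger fact that the admissible set is upward closed, so every $\ell\ge\ell_{\rec}$ also succeeds. That fact follows from \eqref{eq:degradation-single} by data processing: any recovery map for $\ell$ survivors, precomposed with $\D_{\ell+1\to\ell}$, is a valid recovery map for $\ell+1$ survivors, so $\Fopt(U;\ell+1)\ge\Fopt(U;\ell)$ pointwise in $U$.

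\emph{Constants and the main subtlety.} The bounds hold ``for all sufficiently large $n$'' because the contradiction arguments only show that the ratio $\ell_{\rec}/n^{2/3}$ is eventually bounded above and eventually bounded away from zero. We must also confirm that the admissible set is nonempty, so that $\ell_{\rec}$ is finite; this follows from the upper-bound argument at large $n$. The step needing the most care is the passage to subsequences in Theorem~\ref{thm:ordered}. Its hypotheses are growth conditions on a sequence $\ell(n)$, so we should interpolate the subsequence values to a full sequence satisfying the same asymptotic condition before invoking the theorem. This is routine, and I expect no genuine obstacle once Theorem~\ref{thm:ordered} is granted.
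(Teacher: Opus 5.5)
Your proposal is correct and follows essentially the same route as the paper: a contradiction along subsequences, using the uniform-in-$U$ subcritical limit at $\ell=\ell_{\rec}(n_j)$ for the lower bound, and the supercritical in-probability limit at an intermediate length for the upper bound (you use $\lceil\sqrt{j}\,n_j^{2/3}\rceil$, the paper uses $\lfloor C_jn_j^{2/3}\rfloor$). The one detail you should make explicit is $\ell'_j/n_j\to0$; the paper secures it by requiring $C_jn_j^{-1/3}\to0$, and in your version it follows either by choosing $n_j>j^3$ (possible because failure of the upper bound gives infinitely many $n$ with $\ell_{\rec}(n)>j\,n^{2/3}$ for each $j$) or directly from $\ell_{\rec}(n_j)\le n_j$ when $\ell_{\rec}(n_j)$ is finite.
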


\subsection{Block-resolved microscopic side information}

For a block of $m$ microscopic positions containing $r$ survivors, define
\begin{equation}
 p^{(m,r)}_{ai}
 =
 \frac{\binom{i-1}{a-1}\binom{m-i}{r-a}}
 {\binom mr},
 \qquad
 V_{m,r}
 =
 \sum_{a=1}^r\sum_{i=1}^m
 \left(p^{(m,r)}_{ai}\right)^2,
 \label{eq:block-kernel}
\end{equation}
with $V_{m,0}=0$. Here $p^{(m,r)}_{ai}$ is the probability that the $a$th survivor occupies microscopic position $i$. For a branch $L$, let
\begin{equation}
 X(L)=\sum_{b=1}^B V_{m,L_b}.
 \label{eq:Xbranch}
\end{equation}
The scale governing the transition is
\begin{equation}
 \Lambda_{n,\ell,B}
 =
 \begin{cases}
 B\ell/n, & \ell/B\le1,\\[1mm]
 \sqrt B\,\ell^{3/2}/n, & \ell/B\ge1.
 \end{cases}
 \label{eq:Lambda}
\end{equation}

\begin{theorem}[Block-resolved side-information threshold]
\label{thm:block}
Fix $k$, let $n=Bm$, and assume $\ell=o(n)$.
\begin{enumerate}[label=(\roman*)]
\item If $\Lambda_{n,\ell,B}\to0$, then
\begin{equation}
 \Fopt(U;\ell,B)\longrightarrow 4^{-k}
\end{equation}
uniformly over the scrambling unitary $U$.

\item If $\Lambda_{n,\ell,B}\to\infty$, then
\begin{equation}
 \Fopt(U;\ell,B)\longrightarrow 1
\end{equation}
in probability over a Haar-random scrambling unitary.
\end{enumerate}
\end{theorem}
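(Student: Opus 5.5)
The plan is to reduce Theorem~\ref{thm:block} to the ordered case branch by branch, using the conditional factorization \eqref{eq:conditional-factorization}. The flag $\mathsf L$ is classical, so the optimal fidelity is the $p_L$-weighted average over $L$ of the optimal fidelities of the product channels $\bigotimes_b\D_{m\to L_b}$ composed with $U$. It therefore suffices to control, for typical $L$, the branch alignment strength $X(L)$ from \eqref{eq:Xbranch}. The ordered analysis should already give the estimate
\[
V_{m,r}\asymp \min\{r,\ r^{3/2}/m\}\quad\text{for } r\ge1 .
\]
For $r\le \sqrt m$, order statistics localize each survivor only to a window of width $m/\sqrt r\ge m/r$. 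Each rank then contributes about $\sqrt r/m$, so $V\asymp r^{3/2}/m$, which is at most $r$ once $r\lesssim m^2$. The first step is to compute $\E_L X(L)$ under \eqref{eq:count-law}. When $\ell/B\ge1$, the counts concentrate at $L_b\approx\ell/B$, and we get $X\asymp B(\ell/B)^{3/2}/m=\sqrt B\,\ell^{3/2}/n$. When $\ell/B\le1$, most nonzero $L_b$ equal $1$. Here $V_{m,1}=\sum_i m^{-2}=1/m$, which gives $X\asymp \ell/m=B\ell/n$. Both cases reproduce $\Lambda_{n,\ell,B}$ from \eqref{eq:Lambda}. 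Using factorial moments of multinomial/hypergeometric counts, we also show concentration: $X(L)/\Lambda\to1$ in probability when $\Lambda\to\infty$, and $\E_L X\to0$ when $\Lambda\to0$.

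\textbf{Converse.} We first extend the Choi-purity identity of the ordered case to the flagged channel. Conditioned on $L$, the normalized purity of the Choi state of $\bigotimes_b\D_{m\to L_b}$ is the product of the blockwise exponential moments of rank-aligned coincidences. This is $\E\exp(c\,N_L)$, where $N_L$ counts coincidences between two independent copies sharing the same $L$, and $\E N_L=X(L)$. The same factorial-moment bounds then give $\E_L\,[\text{purity}_L-1]\lesssim \E_L X(L)(1+o(1))\to0$, because the coincidence count is a sum of weakly dependent indicators dominated by a Poisson law. The mixing estimate from the ordered converse then turns near-unit purity into near-decoupling of $R$ from $D\mathsf L E$, uniformly in $U$. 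This yields $\Fopt\to4^{-k}$. Uniformity in $U$ holds because the purity bound concerns only the channel, exactly as in Theorem~\ref{thm:ordered}(i).

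\textbf{Achievability.} We follow the ordered-case route within each branch. The decoder measures in the computational basis, which gives a planted law for the pair $(X,Y)=(\text{input string},\text{output subsequence within blocks})$. This law must be separated from $P_XQ_Y$ uniformly in $Q_Y$. The test statistic is the weighted alignment sum $\sum_b\sum_{a,i}p^{(m,L_b)}_{ai}\,\mathbf 1[x_i=y_{b,a}]$, centered. Its mean shift under the planted law is of order $X(L)$, and its variance under product laws is of order $X(L)$. The separation is therefore of order $\sqrt{X(L)}\to\infty$, valid on the typical branches where $X(L)\asymp\Lambda$. The sandwiched R\'enyi data-processing step and the joint state--channel decoupling bound from the ordered proof then lift this to $\Fopt\to1$ in probability over Haar $U$. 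The hypothesis $\ell=o(n)$ controls the collision and non-Poisson corrections, just as it does there.

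\textbf{Main obstacle.} The hardest step is uniformity across the heterogeneous branches in the crossover regime $\ell/B\asymp1$. There the $L_b$ are Poisson-like, and $X(L)$ mixes the contributions $r=1$ (of order $1/m$) with contributions $r^{3/2}/m$. The first approach to try is to write $X(L)=\sum_b V_{m,L_b}$ as a sum of nearly independent terms. Poissonization of the counts then gives a law of large numbers whenever $\Lambda\to\infty$, and the Poisson variances should be checked to stay $O(\Lambda)$. A second point also needs care: the achievability variance bound must hold uniformly over arbitrary output laws $Q_Y$, including ones correlated across blocks.
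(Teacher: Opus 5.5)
Your architecture matches the paper's: branchwise factorization, product purity $\E 4^{N_L}$ with $\E N_L=X(L)$, a planted alignment score, R\'enyi data processing, and joint state--channel decoupling. The achievability step, however, has a genuine gap.

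\textbf{Uniformity over $Q_Y$.} You assert that the centered score has variance of order $X(L)$ under product laws. This is false uniformly in $Q_Y$, and both $I^{\rm S}_{1/2}$ and the $I^*_{2/3}(C':D\mathsf L)$ entering the decoupling bound require an infimum over the output law. Conditional on $Y=y$ under $P_XQ_Y$, the score is the Rademacher sum $\sum_b\xi_b^TP_b^T\eta_b(y)$ with variance proxy $v(y)=\sum_b\norm{P_b^T\eta_b(y)}_2^2$. For a constant word, the column sums of $P_b$ give $v=\sum_bL_b^2/m$. For example, when $B=1$ this is $\ell^2/n$, against $X\asymp\ell^{3/2}/n$. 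The planted mean shift $\ell^{3/2}/n$ is then much smaller than the standard deviation $\ell/\sqrt n$, so $P_X\delta_y$ gives the event $\{T\ge X/4\}$ probability close to $1/2$. You flag this issue but supply no mechanism. The missing idea is a truncation. Because the planted output marginal is uniform, $\E_P v(Y)=\sum_b\norm{P_b}_F^2=X(L)$. The event $\{T\ge X/4\}\cap\{v(Y)\le X^{3/2}\}$ therefore keeps planted probability $1-o(1)$ by Markov. On this event, Hoeffding conditional on $Y$ gives the bound $e^{-c\sqrt X}$ under every $P_XQ_Y$, because $X$ is uniform and independent of $Y$ under any product law. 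Flag-correlated output laws are then handled by $\operatorname{Aff}\le\sup_L\alpha_n(L)\sum_L\sqrt{p_Lq_L}\le\sup_L\alpha_n(L)$.

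\textbf{The planted mean.} The claim of a ``mean shift of order $X(L)$'' holds only on average over the hidden subsets. You need $Q(S,L)=\sum_bq_{m,L_b}(S_b)\ge X(L)/2$ with high probability, followed by Hanson--Wright with $\norm{A}_F^2\le X(L)$ and $\norm{A}_2\le\max_b\sqrt{L_b/m}\le1$. Many weak blocks cannot be handled by ``the ordered-case route within each branch'', because the individual $V_b$ need not diverge; for instance, $V_b=1/m$ when $L_b=1$. The paper obtains concentration from the pointwise bound $0\le q_{m,r}(S)\le CV_{m,r}$. This follows from log-concave anti-concentration of the beta-binomial rows, $\max_ip_{ai}\le C\sum_ip_{ai}^2$, and yields $\operatorname{Var}(Q\mid L)\le CV_{\max}X(L)$. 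That bound is $o(X^2)$ only when $V_{\max}\le\sqrt X$. In the complementary case, the paper keeps the dominant block and invokes the ordered theorem, first degrading to $\lfloor m^{3/4}\rfloor$ survivors if $L_b\ge m^{3/4}$.

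\textbf{Two further remarks.} The obstacle you single out is unnecessary. The bounds $V_{m,r}\ge r/m$ and $V_{m,r}\ge cr^{3/2}/m$, combined with convexity, give $X(L)\ge c\Lambda_{n,\ell,B}$ for every feasible branch, so no law of large numbers for $X(L)$ is needed. Your converse averages purities over $L$, which requires an unproven exponential-moment bound on tail branches. There, with $L_b>m/2$, your formula $V_{m,r}\asymp r^{3/2}/m$ also fails; for example, $V_{m,m}=m$. The paper avoids this by applying $G(L)\le e^{CX(L)}$ only on $\{X(L)\le\eta_n\}$, controlling the complement by Markov's inequality, and using $\Fopt\le1$ there.
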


\begin{corollary}[Side-information phase diagram]
\label{cor:phase}
For fixed $k,\varepsilon,\delta$ and any sequence $B=B(n)$ dividing $n$,
\begin{equation}
 \ell_{\rec}(n,B;\varepsilon,\delta)
 \asymp
 s(n,B),
 \qquad
 s(n,B)=
 \begin{cases}
 n^{2/3}B^{-1/3}, & B\le\sqrt n,\\[1mm]
 n/B, & B\ge\sqrt n.
 \end{cases}
 \label{eq:phase-diagram}
\end{equation}
The implicit constants may depend on $k,\varepsilon,\delta$, but not on $n$ or $B$. The two regimes meet at
\[
 B\asymp\sqrt n,
 \qquad
 \ell_{\rec}\asymp\sqrt n.
\]
\end{corollary}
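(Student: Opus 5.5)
The corollary should follow from Theorem~\ref{thm:block} combined with the monotonicity of $\Fopt(U;\ell,B)$ in $\ell$, which comes from the degradation relation analogous to \eqref{eq:degradation-single}. The plan is to show that $\Lambda_{n,\ell,B}\asymp1$ exactly when $\ell\asymp s(n,B)$. Then a contradiction argument along subsequences turns the ``$\to0$'' and ``$\to\infty$'' statements into constant-factor bounds that hold uniformly in $B$.

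\textbf{Step 1: solving $\Lambda\asymp 1$.}
\begin{itemize}
\item If $B\le\sqrt n$, set $\ell=c\,n^{2/3}B^{-1/3}$. Then $\ell/B=c\,n^{2/3}B^{-4/3}\ge c$, because $B^{4/3}\le n^{2/3}$. So for $c\ge1$ we are in the branch $\ell/B\ge1$, where
\[
\Lambda=\sqrt B\,\ell^{3/2}/n=c^{3/2}.
\]
\item If $B\ge\sqrt n$, set $\ell=c\,n/B$. Then $\ell/B=c\,n/B^2\le c$. So for $c\le1$ we are in the branch $\ell/B\le1$, where $\Lambda=B\ell/n=c$.
\item Near the crossover, for general $c$, the branches still give a value comparable to a power of $c$. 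In the first case with $c<1$ one may have $\ell<B$; then $\Lambda=B\ell/n=c\,n^{-1/3}B^{2/3}\le c$. In the second case with $c>1$ one gets $\Lambda\ge \min(c,\,c^{3/2}\cdot\text{const})$.
\end{itemize}
In all regimes, $\Lambda$ is monotone in $\ell$ and satisfies $\min(c,c^{3/2})\lesssim\Lambda/\text{(normalization)}\lesssim\max(c,c^{3/2})$ when $\ell=c\,s(n,B)$. Also $s(n,B)\le n^{2/3}=o(n)$ whenever $B\ge1$, so the hypothesis $\ell=o(n)$ of Theorem~\ref{thm:block} holds for $\ell=C s(n,B)$ with $C$ fixed.

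\textbf{Step 2: lower bound, by contradiction.} Suppose no constant $c$ works. Then there are sequences $n_j$, $B_j$ and $c_j\to0$ with $\ell_{\rec}(n_j,B_j)\le c_j s(n_j,B_j)$.
\begin{itemize}
\item By monotonicity, $\Fopt(U;\ell_j,B_j)\ge1-\varepsilon$ with probability at least $1-\delta$ at $\ell_j=\ell_{\rec}$.
\item By Step 1, $\Lambda_{n_j,\ell_j,B_j}\lesssim\max(c_j,c_j^{3/2})\to0$.
\item Theorem~\ref{thm:block}(i) then gives $\Fopt\to4^{-k}<1-\varepsilon$ uniformly in $U$, which is a contradiction.
\end{itemize}
Theorem~\ref{thm:block} is stated for a sequence $B=B(n)$, but its proof applies along any sequence $(n_j,B_j)$ with $\Lambda\to0$. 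This passage to arbitrary sequences is what makes the constants independent of $B$.

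\textbf{Step 3: upper bound, symmetrically.} Suppose instead $\ell_{\rec}>C_j s$ with $C_j\to\infty$.
\begin{itemize}
\item Take $\ell_j=\lfloor C_j s\rfloor$, capping $C_j$ so that $\ell_j=o(n_j)$. For example, take $C_j\le (n_j/s_j)^{1/2}$; this is possible because $n/s\ge n^{1/3}\to\infty$. This still gives $\Lambda\to\infty$.
\item Theorem~\ref{thm:block}(ii) then gives $\Prob[\Fopt\ge1-\varepsilon]\to1>1-\delta$, so $\ell_{\rec}\le\ell_j$, which is a contradiction.
\item Monotonicity is needed only to say that $\ell_{\rec}$, defined as a minimum, lies at or below any $\ell$ where the success event holds.
\end{itemize}
The constraint that $B$ divides $n$ restricts nothing else.

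\textbf{Main obstacle.} The delicate part is the bookkeeping at the crossover $B\asymp\sqrt n$ in Step 1: as $c$ varies, $\ell$ can cross the branch boundary $\ell=B$. One also needs the extraction of constants to be uniform over arbitrary sequences $B(n)$, which the contradiction-over-subsequences argument handles. The crossover values then follow by evaluating $s(n,\sqrt n)=\sqrt n$ from both formulas.
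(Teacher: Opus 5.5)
Your proof is correct and is essentially the paper's: you locate $\Lambda_{n,\ell,B}\asymp1$ at $\ell\asymp s(n,B)$ (in fact $\Lambda=1$ exactly at $\ell=s(n,B)$, and $\Lambda(c\,s)$ lies between $c$ and $c^{3/2}$), then run the same two contradiction arguments along sequences $(n_j,B_j)$ using the two parts of Theorem~\ref{thm:block}. Your capping of $C_j$ replaces the paper's choice $C_jN_j^{-1/3}\to0$, and two small remarks: the cap needs the violating $n_j$ to tend to infinity (which the negation of ``for all sufficiently large $n$'' provides), and neither step actually uses monotonicity of $\Fopt$ in $\ell$, since the minimum defining $\ell_{\rec}$ already does that work.
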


\begin{figure}[t]
 \centering
 \includegraphics[width=0.62\linewidth]{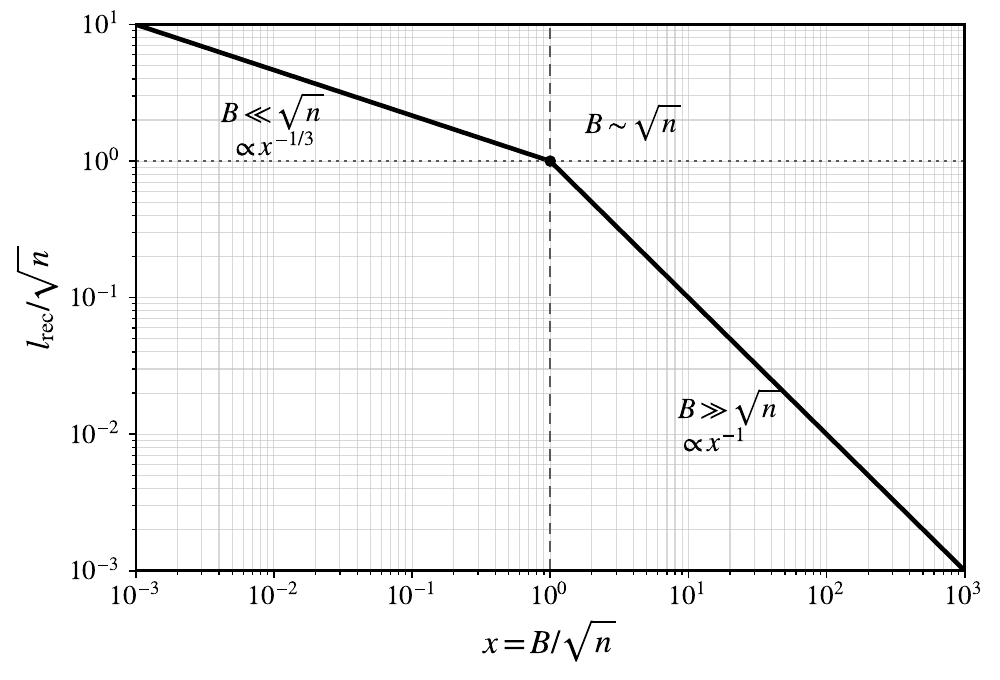}
 \caption{Scaling form of the fixed-error recovery phase diagram. With $x=B/\sqrt n$, one has
 $\ell_{\rec}/\sqrt n\asymp x^{-1/3}$ in the many-survivor-per-block regime and
 $\ell_{\rec}/\sqrt n\asymp x^{-1}$ in the sparse-block regime. The two branches meet at $B\asymp\sqrt n$ and $\ell_{\rec}\asymp\sqrt n$. The plotted branches represent the two asymptotic scaling regimes; the crossover profile for $B/\sqrt n=\Theta(1)$ is not determined by these bounds.}
 \label{fig:phase}
\end{figure}

If $B=2^q$, the block label contains $q$ bits of coarse position information, and
\begin{equation}
 \ell_{\rec}\asymp
 \begin{cases}
 n^{2/3}2^{-q/3},
 & q\le\frac12\log_2n,\\[1mm]
 n2^{-q},
 & q\ge\frac12\log_2n.
 \end{cases}
 \label{eq:qbits}
\end{equation}
Each additional label bit reduces the recovery length by a factor $2^{1/3}$ in the many-survivor-per-block regime and by a factor $2$ in the sparse regime. The phase diagram therefore quantifies the operational value of a concrete side-information resource: exact consecutive block labels together with the relative microscopic order inside each block.

\section{Proof architecture}
\label{sec:proofideas}

\subsection{Exact purity and the universal converse}

Let
\begin{equation}
 S=(S_1<\cdots<S_\ell),
 \qquad
 T=(T_1<\cdots<T_\ell)
\end{equation}
be independent uniformly distributed ordered $\ell$-subsets of $[n]$. Define the number of aligned coincidences by
\begin{equation}
 M_{n,\ell}
 =\sum_{a=1}^{\ell}\bbone_{\{S_a=T_a\}}.
\end{equation}
The normalized Choi state of the ordered deletion channel satisfies
\begin{equation}
 G_{n,\ell}
 :=2^{n+\ell}\Tr\left[(\tau^{(n,\ell)})^2\right]
 =\E 4^{M_{n,\ell}}.
 \label{eq:purity-identity-main}
\end{equation}

To see why the relative order matters, consider an overlap $S_a=T_b$. Since both subsets are increasing, these overlaps define an increasing partial map between their ranks. Such a map has no cycles other than fixed points. The closed loops in the Choi contraction therefore occur only when $a=b$, and each aligned coincidence contributes a factor of four. Nontrivial cycles may appear once the canonical order is removed, which is one of the differences between ordered deletion and the chronological channel considered below.

Writing a uniform subset in terms of its gaps reduces prescribed rank coincidences to collisions between Dirichlet--multinomial block sums. These block sums can be represented by independent negative-binomial variables conditioned on their total. This gives the uniform factorial-moment bound
\begin{equation}
 \frac{\E(M_{n,\ell})_r}{r!}
 \le
 \left[
 C\frac{(\ell+1)^{3/2}}{n}
 \right]^r.
 \label{eq:factorial-main}
\end{equation}
Using
\begin{equation}
 4^M
 =\sum_{r\ge0}3^r\frac{(M)_r}{r!},
\end{equation}
we obtain $G_{n,\ell}=1+o(1)$ whenever
$\ell=o(n^{2/3})$. The Choi state is consequently close to the maximally mixed state:
\begin{equation}
 \frac12\norm{\tau-\pi}_1
 \le
 \frac12\sqrt{G_{n,\ell}-1}.
\end{equation}
Unitary precomposition leaves this trace distance unchanged, while a recovery channel cannot increase it. Comparing with the maximally mixed Choi state gives
\begin{equation}
 \Fopt(U;\ell)
 \le
 4^{-k}
 +\frac12\sqrt{G_{n,\ell}-1}.
 \label{eq:choi-converse-main}
\end{equation}
The bound holds for every scrambling unitary and proves the subcritical part of Theorem~\ref{thm:ordered}.

\subsection{Order-statistic strength and the planted subsequence}

Measuring the Choi state in the computational basis produces the following classical experiment:
\begin{enumerate}[label=(\alph*)]
\item $X\in\{0,1\}^n$ is uniformly distributed;
\item $S$ is a uniformly distributed ordered $\ell$-subset;
\item $Y=X_S$ is the subsequence selected by $S$.
\end{enumerate}
Let
\begin{equation}
 p_{ai}=\Prob(S_a=i),
 \qquad
 V_{n,\ell}=\sum_{a,i}p_{ai}^2.
\end{equation}
The quantity $V_{n,\ell}$ is the total collision strength of the order-statistic kernel. A beta-binomial estimate gives
\begin{equation}
 V_{n,\ell}\asymp\frac{\ell^{3/2}}{n}
 \label{eq:Vscale-main}
\end{equation}
when $\ell\to\infty$ and $\ell=o(n)$. The scale $n^{2/3}$ appears when this quantity becomes of order one.

Write
\begin{equation}
 \xi_i=(-1)^{X_i},
 \qquad
 \eta_a=(-1)^{Y_a},
\end{equation}
and consider the weighted alignment score
\begin{equation}
 T(X,Y)
 =\sum_{a,i}p_{ai}\xi_i\eta_a.
 \label{eq:score-main}
\end{equation}
Under $P_XQ_Y$, where $P_X$ is the uniform marginal of $X$ and $Q_Y$ is arbitrary, this statistic has mean zero. Under the planted law, its conditional mean is
\begin{equation}
 m(S)=\sum_a p_{a,S_a}.
\end{equation}
A martingale estimate based on Freedman's inequality
\cite{Freedman1975} controls the central order statistics of $S$. Combined with a local hypergeometric lower bound, it shows that $m(S)$ is bounded below by $V_{n,\ell}$ up to polylogarithmic factors in $V_{n,\ell}$, with probability tending to one.

For fixed $S$, the score is a Rademacher quadratic form. The associated matrix satisfies
\begin{equation}
 \norm{A_S}_F^2\le V_{n,\ell},
 \qquad
 \norm{A_S}_2^2\le\ell/n.
\end{equation}
The Hanson--Wright inequality
\cite{RudelsonVershynin2013} then shows that the score remains close to its planted mean when $V_{n,\ell}\to\infty$.

Sibson information involves an optimization over the output law $Q_Y$ \cite{Sibson1969}. Accordingly, the separation must hold uniformly in $Q_Y$, not only for the planted output marginal. For a fixed output word $y$, define
\begin{equation}
 v(y)
 =
 \sum_i
 \left(
 \sum_a p_{ai}(-1)^{y_a}
 \right)^2.
\end{equation}
Under the planted output distribution,
\begin{equation}
 \E_{\rm pl} v(Y)=V_{n,\ell}.
\end{equation}
We restrict to outputs for which $v(Y)$ is at most a slowly growing multiple of $V_{n,\ell}$. This restriction has probability tending to one under the planted law. On the same set, Hoeffding's inequality gives an upper bound on the probability of a large score under $P_XQ_Y$, uniformly over $Q_Y$. The resulting event separation forces the Hellinger affinity to vanish uniformly over $Q_Y$ and gives
\begin{equation}
 I_{1/2}^{\rm S}(X:Y)\longrightarrow\infty.
 \label{eq:sibson-main}
\end{equation}

Monotonicity in the R\'enyi order and data processing for sandwiched R\'enyi divergence
\cite{LeditzkyRouzeDatta2016,FrankLieb2013,Beigi2013}
transfer this classical separation to the Choi state:
\begin{equation}
 I_{2/3}^*(C':D)_\tau\longrightarrow\infty.
\end{equation}

For the Hayden--Preskill input in \eqref{eq:hp-input},
\begin{equation}
 H_2^*(C|R)=n-2k.
\end{equation}
Let $C'DF_{\rm env}$ be a pure dilation of the normalized Choi state. Sandwiched R\'enyi duality gives
\begin{equation}
 H_2^*(C'|F_{\rm env})
 =
 -H_{2/3}^*(C'|D)
 =
 I_{2/3}^*(C':D)-n.
\end{equation}
The joint state--channel decoupling theorem of Cheng, Dupuis, and Gao
\cite{ChengDupuisGao2024} therefore bounds the mean complementary-channel error by an exponential in
\begin{equation}
 -\frac12
 \left[
 I_{2/3}^*(C':D)-2k+\log_2 3
 \right].
\end{equation}
Since $k$ is fixed and the R\'enyi information diverges, the mean error tends to zero. Markov's inequality promotes this conclusion to convergence in probability over the Haar-random scrambling unitary. Uhlmann's theorem then provides a recovery channel acting on $DE$.

\subsection{Block factorization with partial labels}

Conditioned on the observed count vector $L$, the block-resolved channel factorizes as in
\eqref{eq:conditional-factorization}. For a single block with $m$ microscopic positions and $r$ survivors, the collision strength satisfies the uniform estimate
\begin{equation}
 V_{m,r}\asymp
 \frac{r^{3/2}}{\sqrt{m(m-r+1)}}.
 \label{eq:uniform-block-V-main}
\end{equation}
Each order statistic follows a beta-binomial law. Its log-concavity makes the collision probability comparable to both its maximum mass and the inverse of its standard deviation, which gives \eqref{eq:uniform-block-V-main}.

For every feasible branch $L$,
\begin{equation}
 X(L)=\sum_bV_{m,L_b}\ge c\Lambda_{n,\ell,B}.
 \label{eq:branch-lower-main}
\end{equation}
When $\ell/B\le1$, each order-statistic row contributes at least $1/m$. When $\ell/B\ge1$, the lower bound follows from the convexity of $r^{3/2}$. In the opposite direction,
\begin{equation}
 \E_LX(L)\le C\Lambda_{n,\ell,B}.
 \label{eq:branch-upper-main}
\end{equation}
Together with the tensor-product form of the conditional Choi state, these estimates reduce the converse to the single-block bound
\eqref{eq:choi-converse-main}.

The achievability argument separates two cases. If
$V_{m,L_b}$ diverges for some block, that block already contains a supercritical ordered-deletion problem. Otherwise, information is distributed among many blocks and the corresponding alignment scores must be combined. The pointwise estimate
\begin{equation}
 q_{m,r}(S)
 =\sum_a p_{a,S_a}^{(m,r)}
 \le C V_{m,r}
\end{equation}
controls the variance of the planted mean. The sum of the block scores then concentrates on the scale $X(L)$. Applying the same variance truncation used in the single-block argument gives divergent Sibson information for each branch. Since the lower bound
\eqref{eq:branch-lower-main} holds deterministically, every feasible branch becomes informative when
$\Lambda_{n,\ell,B}\to\infty$. The classical flag identifies the branch and allows the corresponding recovery channels to be combined.

\section{Chronological emission after permutation twirling}
\label{sec:chronological}

A receiver who knows only the temporal order of randomly selected carriers is described by a different channel from ordered deletion. We model this situation by twirling the input over all permutations and then retaining the first $\ell$ systems:
\begin{equation}
\Ochan_{n\to\ell}
=\Tr_{\ell+1,\ldots,n}\circ\mathcal T_{S_n},
\label{eq:chron-channel}
\end{equation}
where $\mathcal T_{S_n}$ denotes the uniform permutation twirl. The canonical microscopic order used in the ordered-deletion channel is erased before the partial trace.

Let $F_{\bm m}^{(n)}$ be the normalized permutation-symmetric sum of Pauli strings with
\begin{equation}
\bm m=(m_x,m_y,m_z),
\qquad |\bm m|=r.
\end{equation}
The channel is diagonal in this symmetric Pauli basis. On the degree-$r$ sector, its singular value is
\begin{equation}
\lambda_{n,\ell,r}
=\sqrt{\frac{(\ell)_r}{(n)_r}},
\label{eq:chron-singular}
\end{equation}
with multiplicity
\begin{equation}
d_r=\binom{r+2}{2}.
\end{equation}
Here $(a)_r=a(a-1)\cdots(a-r+1)$. If $\tau_{C'D}^{\rm chron}$ denotes the normalized Choi state, then
\begin{equation}
G_{n,\ell}^{\rm chron}
=2^{n+\ell}\Tr[(\tau_{C'D}^{\rm chron})^2]
=\sum_{r=0}^{\ell}
\binom{r+2}{2}\frac{(\ell)_r}{(n)_r}.
\label{eq:chron-purity}
\end{equation}
When $\ell/n\to\alpha<1$, dominated convergence yields
\begin{equation}
G_{n,\ell}^{\rm chron}
\longrightarrow
\frac{1}{(1-\alpha)^3},
\label{eq:chron-limit}
\end{equation}
while each fixed-degree singular value converges to $\alpha^{r/2}$. The coefficient $\binom{r+2}{2}$ counts the degree-$r$ monomials in the three nonidentity Pauli directions, whose generating function produces the cubic denominator.

The chronological channel therefore retains information according to Pauli degree. This differs from ordered deletion, whose purity and recovery transition are governed by rank-aligned coincidences between subsequences. The two channels encode distinct notions of order on the tensor factors. The analysis above determines the spectrum of the chronological channel; establishing diary recovery would additionally require control of its complementary channel.

\section{Interpretation and relation to prior work}
\label{sec:interpretation}

\subsection{Subsystem identity as side information}

The recovery problem depends both on the dimension of the accessible output and on its embedding in the scrambled register. In the labeled Hayden--Preskill protocol, this embedding is part of the channel description, since the collected radiation corresponds to a definite tensor factor. Ordered deletion replaces that definite subsystem by an average over order-preserving embeddings. The averaging leaves a structured output, but it removes enough information to change the recovery length from a constant scale to $n^{2/3}$.

The scale $\ell^{3/2}/n$ measures how much microscopic position information can be inferred collectively from the ordered output. Each output rank gives a weak indication of its microscopic origin, and these indications become informative only after sufficiently many survivors have been collected. Below the transition, the Choi state approaches the maximally mixed state. Above it, the planted alignment becomes statistically detectable. Channel purity controls the converse, while statistical distinguishability supplies the achievability argument; both are governed by the same alignment scale.

Coarse block labels show how the recovery law changes with the resolution of the side information. In a block containing many survivors, the label narrows the order-statistic windows and strengthens their collective alignment. When most occupied blocks contain a single survivor, the block labels nearly identify the microscopic locations. The change at $B\asymp\sqrt n$ occurs when the labels begin to localize individual survivors rather than merely sharpen their collective rank information.

The ordered deletion channel is an abstract information-theoretic model for ordered acquisition without microscopic source labels. It is not derived from a specific Hawking-emission process. Within the Hayden--Preskill setting, it makes explicit the classical information about subsystem identity that is normally taken for granted.

Arrival chronology belongs to a different symmetry class. Permutation twirling erases the microscopic sequence before the carriers are retained, and the surviving information is organized by symmetric Pauli degree. The exact spectrum found in Sec.~\ref{sec:chronological} shows that two notions of order may impose different constraints on the underlying tensor factors and therefore define different quantum channels.

\subsection{Connections with scrambling and decoupling}

Hayden--Preskill recovery can be formulated as a decoupling problem. The diary becomes recoverable from the radiation once it has decoupled from the inaccessible environment. This viewpoint connects the protocol to one-shot coding theorems and to random-circuit constructions of quantum codes \cite{DupuisEtAl2014,SzehrEtAl2013,BrownFawzi2015}. Channel-state duality also relates recovery to scrambling diagnostics based on tripartite information and out-of-time-order correlators \cite{HosurEtAl2016,YoshidaYao2019}. Recent studies of Hamiltonian and circuit dynamics have shown that recovery can distinguish forms of chaos, transport, and symmetry that are not visible in spectral statistics alone \cite{NakataTezuka2024,RamppClaeys2024}. The present results show that the recovery law also depends on how the collected output is classically identified. Even for the same scrambling unitary, changing the available subsystem labels changes the asymptotic amount of output required for recovery.

Existing decoder constructions generally begin after the accessible subsystem has been specified. Yoshida--Kitaev teleportation, Clifford recovery algorithms, and complementarity-based decoders provide explicit procedures in that setting \cite{YoshidaKitaev2017,Yoshida2022,NakataMatsuuraKoashi2025}. The achievability theorem proved here is existential and optimizes over all recovery maps. It determines the output scale on which a decoder can exist after the position labels have been erased.

\subsection{Connections with deletion and synchronization channels}

Classical deletion channels are basic examples of synchronization-error models, and their capacity theory remains substantially more delicate than that of erasure channels \cite{Mitzenmacher2009,CheraghchiRibeiro2021}. Segmented deletion models illustrate how coarse location information can affect synchronization and decoding \cite{WangDumanAktas2013}. Although the input ensemble and decoding objective are different, the block-resolved model likewise shows that location metadata can change the scaling of a recovery problem.

Quantum deletion research has concentrated on code constructions, Knill--Laflamme conditions, insertion--deletion equivalences, and explicit decoding algorithms \cite{LeahyTouchetteYao2019,HagiwaraNakayama2020,ShibayamaHagiwara2021,Ouyang2021,ShibayamaOuyang2021,Hagiwara2023,BulledOuyang2026,OuyangBrennen2026,SasakiEtAl2026}. Those works design code spaces that withstand synchronization errors. In ordered Hayden--Preskill recovery, the encoding is generated by scrambling, and the question is how much unlabeled output is required to reconstruct a fixed logical system. For this encoding, the threshold is governed by the statistics of random subsequences.

The planted distribution used in the achievability proof is closely related to the random-subsequence model studied by Jeong and Pernice in connection with uniform codes for the deletion channel \cite{JeongPernice2026}. Their work concerns a dense regime, whereas the present proof uses $\ell=o(n)$. Here the classical separation is subsequently converted into quantum recovery through R\'enyi data processing and decoupling. This step depends on the channel geometry induced by ordered deletion.

\section{Conclusion}
\label{sec:conclusion}

The microscopic specification of the collected radiation is part of the Hayden--Preskill channel. If $\ell=o(n^{2/3})$, the optimal entanglement fidelity approaches the no-output value uniformly over the scrambling unitary. If $\ell=\omega(n^{2/3})$ with $\ell=o(n)$, a Haar-random scrambling unitary permits asymptotically perfect recovery with probability tending to one. Together with monotonicity, these results establish the fixed-error law $\ell_{\mathrm{rec}}=\Theta(n^{2/3})$.

The exponent $2/3$ arises from the uncertainty in the microscopic positions of the ordered survivors. Correlations accumulate through rank-aligned coincidences, whose total strength is of order $\ell^{3/2}/n$. The exact Choi-purity identity yields the universal converse, while the planted-subsequence argument and joint state--channel decoupling establish achievability on the same scale. The agreement between the two arguments identifies this order-statistic quantity as the mechanism governing the transition.

Block labels show how the recovery law changes when part of the position information is restored. The scale is $n^{2/3}B^{-1/3}$ when a typical block contains several survivors, and it becomes $n/B$ when the block labels nearly identify the individual systems. The crossover at $B\asymp\sqrt n$ marks the change between these two regimes.

Recovery therefore depends jointly on the accessible quantum systems and on the classical information that identifies them within the scrambled register. The ordered-deletion theorem and the block-resolved phase diagram quantify this dependence in the Hayden--Preskill setting.

\appendix

\section{Full proof of the ordered-deletion threshold}
\label{app:ordered-full}

This appendix proves Theorem~\ref{thm:ordered}. Throughout, the surviving tensor factors are identified with the common output register $D$ by the canonical order-preserving map $J_S$.

\subsection{Model and main statement}

The ordered deletion channel is
\begin{equation}
\D_{n\to\ell}(\rho)
=
\binom n\ell^{-1}
\sum_{\substack{S\subset[n]\\ |S|=\ell}}
J_S\Tr_{S^c}(\rho)J_S^\dagger.
\label{eq:deletion}
\end{equation}
Thus the surviving qubits are presented in their original relative order, while the set $S$ is not revealed. Logarithms and entropies are measured in bits unless stated otherwise.

\begin{theorem}[Ordered-deletion Hayden--Preskill threshold]
\label{thm:main}
Fix the diary size $k$ and assume $\ell/n\to0$. For the old-black-hole Hayden--Preskill protocol with the channel \eqref{eq:deletion},
\begin{align}
\ell=o(n^{2/3})
&\quad\Longrightarrow\quad
F_{\rm e}^{\rm opt}(U;\ell)\longrightarrow 4^{-k},
\label{eq:subcrit}\\
\ell=\omega(n^{2/3})
&\quad\Longrightarrow\quad
F_{\rm e}^{\rm opt}(U;\ell)\longrightarrow1
\quad\text{in probability over a Haar-random scrambling unitary.}
\label{eq:supercrit}
\end{align}
The first convergence is uniform over the scrambling unitary $U$. Since $F_{\rm e}^{\rm opt}$ is optimized over all recovery channels, no decoder asymptotically exceeds the value $4^{-k}$ in the subcritical regime.
\end{theorem}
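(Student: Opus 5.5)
The proof splits along the two regimes, with one common quantity: the rank-aligned coincidence count $M_{n,\ell}$ for the converse, and its first-moment analogue $V_{n,\ell}=\sum_{a,i}p_{ai}^2$ for achievability.

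\emph{Converse.} We first establish the exact purity identity \eqref{eq:purity-identity-main}. Write the Choi state as $\binom n\ell^{-1}\sum_S \tau_S$, where $\tau_S$ is the Choi state of $J_S\Tr_{S^c}(\cdot)J_S^\dagger$. Each $\tau_S$ is a tensor product of Bell pairs on the pairs $(S_a,D_a)$ and maximally mixed states on the discarded copies $C'_{S^c}$. Hence $\Tr[\tau_S\tau_T]$ reduces to a loop count in a contraction diagram. Because $S$ and $T$ are both increasing, any overlap $S_a=T_b$ defines an order-preserving partial matching, and every closed loop is a fixed point with $a=b$. Each such loop contributes $4$ relative to the product baseline $2^{-n-\ell}$, which gives $\E\,4^{M_{n,\ell}}$.

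Next we bound factorial moments as in \eqref{eq:factorial-main}. We encode a uniform $\ell$-subset by its gaps, which form a Dirichlet--multinomial vector. We represent this vector by i.i.d.\ geometric gaps conditioned on their sum; the conditioning costs at most a factor $O(\sqrt{n/\ell}\cdot\sqrt{\ell/n})$ via a local CLT. Then $\Prob(S_{a_j}=T_{a_j},\ j=1,\dots,r)$ factorizes into successive collision probabilities of independent negative-binomial increments. Each increment over a rank gap $d$ collides with probability $\lesssim (n/\ell)^{-1}d^{-1/2}$. Summing over ordered $r$-tuples of ranks gives $[C\ell^{3/2}/n]^r$. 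The identity $4^M=\sum_r 3^r\binom{M}{r}$ then yields $G_{n,\ell}-1\le\sum_{r\ge1}(3C\ell^{3/2}/n)^r\to0$.

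To conclude, we use $\tfrac12\|\tau-\pi\|_1\le\tfrac12\sqrt{2^{n+\ell}\Tr(\tau-\pi)^2}=\tfrac12\sqrt{G-1}$. The HP state equals $(\mathcal D\circ\mathcal U)$ applied to a fixed input, and the recovered fidelity is a contraction of the Choi state followed by $\mathcal R$. Replacing $\tau$ by $\pi$ gives fidelity at most $4^{-k}$, since the output is then independent of $R$. This proves \eqref{eq:choi-converse-main} uniformly in $U$.

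\emph{Achievability.} We follow the chain Sibson $\to$ sandwiched Rényi $\to$ decoupling. The steps are as follows.
\begin{enumerate}
\item Prove $V_{n,\ell}\asymp\ell^{3/2}/n$ using beta-binomial log-concavity: $\sum_i p_{ai}^2\asymp1/\mathrm{sd}(S_a)$ with $\mathrm{sd}(S_a)\asymp n\sqrt{a(\ell-a)}/\ell^{3/2}$.
\item Take the score $T$ of \eqref{eq:score-main} and show, under the planted law, $T\ge cV/\mathrm{polylog}(V)$ with high probability. This uses Freedman's inequality on the central order statistics to get $m(S)\gtrsim V$, and Hanson--Wright with $\|A_S\|_F^2\le V$ and $\|A_S\|_2^2\le\ell/n=o(1)$.
\item Under $P_XQ_Y$, condition on $y$ with $v(y)\le\omega_nV$. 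Then $T$ is a Rademacher linear form in $\xi$ with variance $v(y)$, so Hoeffding gives $\Prob(T\ge cV/\mathrm{polylog})\le e^{-cV/\omega_n\mathrm{polylog}}\to0$ uniformly in $Q_Y$.
\item Separation of these events forces the Hellinger affinity $\sum\sqrt{P_{XY}P_XQ_Y}\to0$ uniformly in $Q_Y$, so $I^{\rm S}_{1/2}\to\infty$.
\item Data processing for the computational-basis measurement, together with monotonicity in the order, gives $I^*_{2/3}(C':D)_\tau\to\infty$.
\item Apply duality: $H_2^*(C'|F_{\rm env})=I^*_{2/3}-n$. Combining with $H_2^*(C|R)=n-2k$ in the Cheng--Dupuis--Gao joint decoupling bound gives Haar-averaged decoupling error $\le 2^{-\frac12[I^*_{2/3}-2k+\log_23]}\to0$.
\item Markov's inequality and Uhlmann's theorem produce a decoder on $DE$ with fidelity $\to1$ in probability.
\end{enumerate}

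\emph{Main obstacle.} The hardest step is the factorial-moment bound: controlling the conditioned negative-binomial collisions uniformly, including near-endpoint ranks and clustered rank tuples. The planted-mean lower bound in step 2 is a secondary difficulty.
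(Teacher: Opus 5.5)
\textbf{Overall assessment.} Your architecture is the paper's: the loop-counting identity $G_{n,\ell}=\E 4^{M_{n,\ell}}$, a negative-binomial factorial-moment bound, and the Choi trace-distance converse. Then planted score, Hoeffding separation uniform in $Q_Y$, $I^{\rm S}_{1/2}\to\infty$, data processing to $I^*_{2/3}$, and Cheng--Dupuis--Gao decoupling.

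\textbf{The gap: conditioning on the total gap sum.} The step that fails as you state it is the conditioning in the factorial-moment bound. You give the conditioning cost as $O(\sqrt{n/\ell}\cdot\sqrt{\ell/n})=O(1)$. That is not true uniformly in the ranks. Moreover, under the conditioned law the block increments are not independent, so $\Prob(S_{a_j}=T_{a_j}\ \forall j)$ does not factor into $r$ unconditioned collision probabilities.

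Take $r=1$ and $a_1=\ell$. The gap $n-S_\ell$ of a uniform subset is essentially geometric with parameter $\ell/n$, so $\Prob(S_\ell=T_\ell)\asymp\ell/n$. An unconditioned negative-binomial sum of shape $\ell$ would instead collide with probability $\asymp\sqrt\ell/n$.

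The correct accounting writes the block-sum law as $\prod_{j=0}^r b_{d_j,p}(z_j)/b_{L,p}(N)$ and gives
\[
\Prob\{S_{a_j}=T_{a_j}\ \forall j\}\le\max_z\Prob\{Z=z\}\le C^r\Bigl(\frac{L}{n}\Bigr)^r\frac{\sqrt L}{\sqrt{d_0d_1\cdots d_r}} .
\]
This is your product of factors $(\ell/n)d_j^{-1/2}$, $j<r$, times an extra $\sqrt{L/d_r}$ coming from the last block and the normalizing mass. Suppose you bound that extra factor crudely by $\sqrt L$ and then sum the gaps freely. You obtain $\E(M_{n,\ell})_r/r!\le\sqrt L\,(C\ell^{3/2}/n)^r$, hence only $G_{n,\ell}-1=O(\ell^2/n)$. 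That is a converse for $\ell=o(n^{1/2})$, not $\ell=o(n^{2/3})$.

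\textbf{How to close it.} The missing idea is to use the constraint $d_0+\cdots+d_r=L$ when summing over rank tuples. The paper does this with the composition bound $\sum_{d_0+\cdots+d_r=L,\ d_j\ge1}(d_0\cdots d_r)^{-1/2}\le C^{r+1}L^{(r-1)/2}$. This absorbs the $\sqrt L$ exactly and returns $[C(\ell+1)^{3/2}/n]^r$.

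An equivalent fix keeps your picture of $r$ independent increments. Let the largest block, of size at least $L/(r+1)$, absorb the normalizing mass at cost $O(\sqrt{r+1})\le C^r$, and then sum the remaining $r$ gaps freely.

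\textbf{A smaller omission in step 2.} Freedman's inequality only localizes the central $S_a$ within $u/\sqrt\ell$ of $a/\ell$. To get $m(S)\gtrsim V_{n,\ell}/\mathrm{polylog}(V_{n,\ell})$ you also need a uniform local lower bound $p_{ai}\ge c\,(\sqrt\ell/n)e^{-Cu^2}$ on that window. The paper proves this via a local hypergeometric estimate, and it is the source of the polylogarithmic loss. The remaining steps of your proposal match the paper.
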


\subsection{Exact Choi purity and the subcritical converse}

Let $\tau_{C'D}^{(n,\ell)}$ be the normalized Choi state of \eqref{eq:deletion}. For independent uniformly distributed ordered $\ell$-subsets
\[
S=(S_1<\cdots<S_\ell),
\qquad
T=(T_1<\cdots<T_\ell),
\]
define
\begin{equation}
M_{n,\ell}
=
\sum_{a=1}^{\ell}\bm 1_{\{S_a=T_a\}}.
\end{equation}

\begin{lemma}[Exact collision identity]
\label{lem:purity}
\begin{equation}
G_{n,\ell}
:=
2^{n+\ell}
\Tr\!\left[(\tau_{C'D}^{(n,\ell)})^2\right]
=
\E\,4^{M_{n,\ell}}.
\label{eq:purity}
\end{equation}
\end{lemma}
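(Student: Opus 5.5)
The plan is to write the normalized Choi state explicitly as an average over surviving sets and compute the purity as a double average over pairs $(S,T)$ of Hilbert--Schmidt overlaps. With $C'$ a copy of $C$ and $|\Phi\rangle$ the maximally entangled state on $C'C$, we have
\begin{equation}
\tau=\binom n\ell^{-1}\sum_S \tau_S,\qquad
\tau_S=(\id_{C'}\otimes J_S\Tr_{S^c})(\Phi_{C'C})J_S^\dagger .
\end{equation}
Each $\tau_S$ factorizes over sites. For $i\notin S$ the pair $C'_i$ contributes $\bbone/2$. For $i=S_a$ the pair $(C'_i,D_a)$ carries a Bell state $\Phi_{C'_iD_a}$. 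Therefore
\begin{equation}
\Tr[\tau^2]=\E_{S,T}\Tr[\tau_S\tau_T],
\end{equation}
and it suffices to show that $\Tr[\tau_S\tau_T]=2^{-n-\ell}4^{M(S,T)}$ with $M(S,T)=\#\{a:S_a=T_a\}$.

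To evaluate $\Tr[\tau_S\tau_T]$, expand every Bell state in the Pauli basis,
\begin{equation}
\Phi_{C'_iD_a}=\tfrac14\sum_{P}P_{C'_i}\otimes\bar P_{D_a}.
\end{equation}
Every maximally mixed factor is $\tfrac12 I$. The trace then becomes a sum over Pauli assignments. A term survives only if, on every site $C'_i$ and every output slot $D_a$, the Pauli operators coming from $\tau_S$ and from $\tau_T$ agree.

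Equivalently, in graphical terms, one draws a graph whose vertices are the $C'$ sites and the $D$ slots. Each $\tau_S$ contributes edges $S_a\!-\!a$, and each $\tau_T$ contributes edges $T_b\!-\!b$. The trace then equals a product of dimension factors: each closed loop gives $2$, each open chain gives normalization factors, and the sign is always $+1$. The overall scalar comes from the normalizations: there is a $1/2$ per qubit in each of $\tau_S,\tau_T$, i.e.\ $2^{-2n}$ total up to conventions, and each loop contributes $\Tr I=2$ per free index.

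The key combinatorial step is identifying the components of this graph. A vertex $C'_i$ has $S$-degree $\le1$ and $T$-degree $\le1$, and the same holds for a slot $D_a$. Hence each component is a path or a cycle alternating between sites and slots, with $S$-edges and $T$-edges alternating. A cycle requires a chain $a\to S_a=T_{b}\to b\to S_b=T_c\to\cdots$ returning to $a$. This defines the partial map $\sigma(a)=b$ whenever $S_a=T_b$. Because both $S$ and $T$ are increasing, $\sigma$ is strictly increasing on its domain. A strictly increasing partial injection on $[\ell]$ has no cycles except fixed points: along a cycle the ranks would have to increase and also return to the start. The only cycles are therefore the 2-edge loops with $S_a=T_a$.

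The counting then runs as follows. A fixed-point loop contributes $\Tr(\Phi\Phi)=1$ in place of the $1/4$ it would contribute if the two edges were in separate paths. Each path contributes the purity of a product of maximally mixed ends, giving $2^{-(\text{number of qubits})}$ overall. The net effect is a factor of $4$ per aligned coincidence relative to the baseline $\Tr[\tau_S\tau_T]=2^{-n-\ell}$. The baseline itself is checked in the case $M=0$ with no overlaps at all, where $\tau_S\otimes$-purity gives $2^{-(n-\ell)}\cdot4^{-\ell}\cdot$ (site/slot bookkeeping) $=2^{-n-\ell}$. Averaging over $S,T$ then yields \eqref{eq:purity}.

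I expect the main obstacle to be this bookkeeping step: verifying that every open chain, of whatever length, contributes exactly the same power of $2$ as if all its edges were disjoint. For an alternating path with $j$ edges I would check this directly by contracting the Bell pairs, for example with the transpose trick $(\bbone\otimes X)|\Phi\rangle=(X^{T}\otimes\bbone)|\Phi\rangle$. Such a path is a product of maximally mixed marginals glued by maximally entangled links, and its contribution to the trace is $2^{-(\#\text{vertices})}$, independent of its shape. Only the closed loops produce an excess, and that excess is exactly $4$ per loop once the normalization is fixed.
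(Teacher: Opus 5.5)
Your proposal is correct and follows essentially the same route as the paper: decompose $\tau$ into the $\tau_S$, view $\Tr(\tau_S\tau_T)$ as a contraction graph, use that the strictly increasing partial rank map ($a\mapsto b$ when $S_a=T_b$) has only fixed points as cycles, and assign a relative factor $4$ to each aligned coincidence. The bookkeeping you flag as the main obstacle is settled directly by the Pauli expansion you already wrote. Every open path ends at a site lying in exactly one of $S,T$, which forces the identity along the whole path, while each fixed point leaves four free Paulis, so $\Tr(\tau_S\tau_T)=2^{-2(n-\ell)}4^{-2\ell}\cdot 4^{M(S,T)}\cdot 2^{n+\ell}=2^{-(n+\ell)}4^{M(S,T)}$.
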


\begin{proof}
Write
\[
\tau
=
\binom n\ell^{-1}\sum_S\tau_S,
\qquad
\tau_S
=
\Phi_{C'_S D}\otimes\pi_{C'_{S^c}},
\]
where the identification of the surviving factors with $D$ through $J_S$ is understood.

For fixed $S$ and $T$, the contraction of $\tau_S$ with $\tau_T$ can be represented by a bipartite graph whose left vertices are the output ranks $a\in[\ell]$ and whose right vertices are the microscopic positions $i\in[n]$. An equality $S_a=T_b=i$ induces a partial map $a\mapsto b$ between ranks. Since both subsets are increasing, this map is increasing and therefore has no cycles of length greater than one.

Closed contraction loops occur only at fixed points $a=b$, or equivalently when $S_a=T_a$. In the tensor contraction, paths that remain open contribute the same local factors as in the maximally mixed state. At a fixed point, the two maximally mixed factors that would contribute $1/4$ are replaced by a Bell-state contraction equal to one, producing a relative factor of four. Hence
\[
\Tr(\tau_S\tau_T)
=
2^{-(n+\ell)}4^{M(S,T)}.
\]
Averaging over $S$ and $T$ gives \eqref{eq:purity}. The absence of nontrivial cycles relies on the preservation of relative order and does not hold for the chronological channel.
\end{proof}

\begin{lemma}[Negative-binomial point masses]
\label{lem:nbmasses}
Let
\[
b_{d,p}(z)
=
\binom{z+d-1}{d-1}p^d(1-p)^z,
\qquad
z\in\mathbb Z_{\geq0},
\]
be the probability mass function for the number of failures before the $d$th success. There are absolute constants $0<c<C<\infty$ such that, whenever $0<p\leq1/2$ and $q=1-p$,
\begin{equation}
\sup_{z\geq0}b_{d,p}(z)
\leq
C\frac{p}{\sqrt{dq}}
\qquad(d\geq1).
\label{eq:nbupper}
\end{equation}
Moreover, if $D\geq2$, $N\geq D-1$, and
\[
p=\frac{D-1}{N+D-1},
\qquad
q=\frac{N}{N+D-1},
\]
then
\begin{equation}
b_{D,p}(N)
\geq
c\frac{p}{\sqrt{Dq}}.
\label{eq:nblower}
\end{equation}
\end{lemma}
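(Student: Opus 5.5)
The upper bound \eqref{eq:nbupper} comes from log-concavity of the negative-binomial law together with a variance estimate. The lower bound \eqref{eq:nblower} comes from a direct Stirling computation at the point $z=N$, which is essentially the mode for the prescribed $p$.

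\emph{Upper bound.} First treat $d=1$. The law is geometric, so $\sup_z b_{1,p}(z)=p$. Since $q\ge1/2$, this is at most $\sqrt2\,p/\sqrt q$, which is of the required form.

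For $d\ge2$, the ratio
\[
\frac{b_{d,p}(z+1)}{b_{d,p}(z)}=\frac{(z+d)q}{z+1}
\]
is nonincreasing in $z$, so the law is log-concave on $\mathbb Z_{\ge0}$. Its variance is $\sigma^2=dq/p^2$. I will invoke the standard fact that a discrete log-concave law with variance $\sigma^2$ has maximal point mass at most $C/\sqrt{1+\sigma^2}$ (Bobkov--Marsiglietti-type bound). This gives
\[
\sup_z b_{d,p}(z)\le C\bigl(1+dq/p^2\bigr)^{-1/2}\le C\,p/\sqrt{dq}.
\]
If one prefers to avoid citing this, the bound can be proved directly. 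Locate the mode $z^*\approx (d-1)q/p$. Use the ratio bound above to show that the masses decay by at most a constant factor over a window of width $\asymp\sigma$ around $z^*$. Since the total mass is $1$, the mass at the mode is $O(1/\sigma)$ whenever $\sigma\ge1$. If $\sigma<1$, the trivial bound $b\le1\le p/\sqrt{dq}$ already holds up to constants.

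\emph{Lower bound.} Write $b_{D,p}(N)=\binom{N+D-1}{D-1}p^{D}q^N$ and set $K=N+D-1$, $j=D-1$. Then $p=j/K$ and $q=N/K$, so $b_{D,p}(N)=p\cdot\binom{K}{j}(j/K)^j(N/K)^{N}$. The second factor is the binomial probability $\Prb[\mathrm{Bin}(K,j/K)=j]$ evaluated at its mean. By Stirling's formula with explicit error bounds $\sqrt{2\pi n}(n/e)^n\le n!\le e\sqrt n(n/e)^n$, valid for $j,N\ge1$, this probability is at least
\[
c\sqrt{\frac{K}{jN}}=\frac{c}{\sqrt{Kpq}}=\frac{c}{\sqrt{jq}}.
\]
Since $j=D-1\ge D/2$ for $D\ge2$, multiplying by $p$ gives $b_{D,p}(N)\ge c'\,p/\sqrt{Dq}$. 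The hypothesis $N\ge D-1$ ensures $q\ge1/2$, which keeps the constants uniform, and $N\ge1$ holds automatically, so Stirling applies to all three factorials.

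The main obstacle is keeping the upper-bound constant absolute, uniform in $d$ and in $p\le1/2$, including the small-$\sigma$ corner. I would handle it by the case split above, $\sigma\ge1$ versus $\sigma<1$. In the second case the claim is trivial because $p/\sqrt{dq}>1$.
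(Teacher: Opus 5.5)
Your proof is correct. Your lower bound is essentially the paper's argument: both write $b_{D,p}(N)=p\,\Prb\{\operatorname{Bin}(N+D-1,p)=D-1\}$ and apply two-sided Stirling bounds at the binomial mean.

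Two cosmetic points. Passing from $c/\sqrt{(D-1)q}$ to $c/\sqrt{Dq}$ uses only $D-1\le D$, not $D-1\ge D/2$. Also, $q\ge 1/2$ plays no role there, since the Stirling constant $\sqrt{2\pi}/e^2$ is already absolute.

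Your upper bound takes a genuinely different route. The paper locates the mode $z_*=\lfloor (d-1)q/p\rfloor$ and writes $b_{d,p}(z_*)=p\,\Prb\{\operatorname{Bin}(m,p)=k\}$ with $m=z_*+d-1$ and $k=d-1$. It then checks $|k-mp|<1$ and $1\le k\le m-1$, and bounds this binomial mass by the Stirling estimate $C/\sqrt{m\vartheta(1-\vartheta)}$ with $m\vartheta(1-\vartheta)\gtrsim dq$. You instead combine log-concavity, via the monotone ratio $(z+d)q/(z+1)$, with the exact variance $\sigma^2=dq/p^2$ and the general bound $M\le C/(1+\sigma)$ on the maximal mass of a discrete log-concave law.

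Your route avoids the mode bookkeeping and is more conceptual. The fact you cite is precisely the log-concave anti-concentration lemma that the paper itself proves in Appendix~\ref{app:block-full}. Citing it would make your argument self-contained and would make Appendix~\ref{app:block-full}'s machinery do double duty here.

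The paper's route has the advantage that a single binomial representation with one Stirling computation handles both the upper and lower bounds. Finally, your $\sigma<1$ case is vacuous here, because $p\le 1/2$ and $q\ge 1/2$ force $\sigma^2=dq/p^2\ge 2$.
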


\begin{proof}
For $d=1$, one has $b_{1,p}(z)=pq^z$, and \eqref{eq:nbupper} follows immediately. Suppose that $d\geq2$. The negative-binomial mass is log-concave and attains its maximum at
\[
z_*=
\left\lfloor\frac{(d-1)q}{p}\right\rfloor.
\]
Set $m=z_*+d-1$ and $k=d-1$. Then
\[
b_{d,p}(z_*)
=
p\,\Prb\{\operatorname{Bin}(m,p)=k\},
\]
and the definition of $z_*$ gives $|k-mp|<1$.

Let $\vartheta=k/m$ and
\[
D(\vartheta\|p)
=
\vartheta\log\frac{\vartheta}{p}
+
(1-\vartheta)\log\frac{1-\vartheta}{q}.
\]
The two-sided Stirling bounds
\[
c\sqrt{s}\,(s/e)^s
\leq
s!
\leq
C\sqrt{s}\,(s/e)^s
\qquad(s\geq1)
\]
imply, for $1\leq k\leq m-1$,
\begin{equation}
\Prb\{\operatorname{Bin}(m,p)=k\}
\leq
\frac{C}{\sqrt{m\vartheta(1-\vartheta)}}
e^{-mD(\vartheta\|p)}.
\label{eq:binomialstirlingupper}
\end{equation}
Since $p\leq1/2$, one has $z_*\geq d-1$, and therefore $1\leq k\leq m-1$. The estimate $|k-mp|<1$ also gives
\[
m\vartheta(1-\vartheta)
\geq
c\,mpq
\geq
c'dq.
\]
The exponential factor in \eqref{eq:binomialstirlingupper} is at most one, so
\[
\Prb\{\operatorname{Bin}(m,p)=k\}
\leq
\frac{C}{\sqrt{dq}}.
\]
This proves \eqref{eq:nbupper}.

For the lower bound, put
\[
m=N+D-1,
\qquad
k=D-1=mp.
\]
Then
\[
b_{D,p}(N)
=
p\,\Prb\{\operatorname{Bin}(m,p)=mp\}.
\]
At the mean, the same Stirling estimates give
\begin{equation}
\Prb\{\operatorname{Bin}(m,p)=mp\}
\geq
\frac{c}{\sqrt{mpq}}.
\label{eq:binomialstirlinglower}
\end{equation}
It follows that
\[
b_{D,p}(N)
\geq
c\frac{p}{\sqrt{(D-1)q}}
\geq
c'\frac{p}{\sqrt{Dq}},
\]
which proves \eqref{eq:nblower}.
\end{proof}

\begin{lemma}[Positive-composition convolution]
\label{lem:composition}
For every $k\geq1$ and $L\geq k$,
\begin{equation}
\sum_{\substack{d_1+\cdots+d_k=L\\ d_j\geq1}}
\frac1{\sqrt{d_1\cdots d_k}}
\leq
C^kL^{k/2-1}
\label{eq:composition}
\end{equation}
for an absolute constant $C$.
\end{lemma}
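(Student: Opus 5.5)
We prove the bound by induction on $k$. The engine is the two-part Dirichlet-type convolution estimate
\begin{equation*}
\sum_{d=1}^{L-1}\frac{1}{\sqrt d}\,(L-d)^{\beta}\le C_0\,\frac{\Gamma(1/2)\Gamma(\beta+1)}{\Gamma(\beta+3/2)}\,L^{\beta+1/2},
\qquad \beta\ge -\tfrac12 .
\end{equation*}
Write $A_k(L)$ for the left-hand side of \eqref{eq:composition}. For $k=1$ we have $A_1(L)=L^{-1/2}$, so the claim holds with any $C\ge1$. For $k=2$, splitting the sum at $d_1\le L/2$ and $d_1>L/2$ gives
\[
A_2(L)\le 2\sum_{d\le L/2}d^{-1/2}(L/2)^{-1/2}\le 2\sqrt2\,L^{-1/2}\cdot 2\sqrt{L/2}=4,
\]
which is of the required form $C^2L^0$.

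For the inductive step we peel off the last part $d_k=d$. This gives $A_k(L)=\sum_{d=1}^{L-k+1}d^{-1/2}A_{k-1}(L-d)$. The induction hypothesis yields $A_{k-1}(L-d)\le C^{k-1}(L-d)^{(k-1)/2-1}$, so the exponent is $\beta=(k-3)/2\ge-\tfrac12$ for $k\ge2$. We then compare the remaining sum with the Beta integral
\[
\int_0^L x^{-1/2}(L-x)^{\beta}\,dx=B(\tfrac12,\beta+1)\,L^{\beta+1/2}.
\]
Since $\beta+1/2=k/2-1$, this produces $A_k(L)\le C^{k-1}C_0B(\tfrac12,\beta+1)L^{k/2-1}$. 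The Beta factor $B(\tfrac12,\beta+1)$ is at most $\pi$ for $\beta\ge-\tfrac12$, and in fact it decays like $\beta^{-1/2}$. The constant therefore multiplies by a bounded factor $C_0\pi$ at each step. Taking $C=\max(4,C_0\pi)$ closes the induction.

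The main obstacle is the comparison between the sum and the integral. It must be uniform in $k$, because otherwise the constant would not grow merely geometrically.

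When $\beta\ge0$, the summand $x^{-1/2}(L-x)^\beta$ is not monotone. We split at $L/2$. On $[1,L/2]$ the factor $(L-x)^\beta$ is at most $L^\beta$ up to a factor $2^{\beta}$, but that route loses too much. Instead we observe that on each unit interval $[d-1,d]$ with $1\le d\le L-1$, the summand at integer $d$ is at most $\int_{d-1}^{d}x^{-1/2}(L-x)^\beta\,dx$ times a bounded factor. The reason is that $d^{-1/2}\le x^{-1/2}$ on this interval, while $(L-d)^\beta\le (L-x)^\beta$. Hence the comparison holds with factor $1$ whenever $\beta\ge0$.

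When $\beta=-\tfrac12$, which is the case $k=2$, it is simplest to use the direct computation already done above. We therefore start the induction at $k=2$ and only ever use $\beta\ge0$ in the inductive step.

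The resulting bound is
\[
A_k(L)\le C^{k-1}B(\tfrac12,\tfrac{k-1}{2})L^{k/2-1}\le \pi C^{k-1}L^{k/2-1},
\]
which completes the argument with $C=\max(4,\pi)$.
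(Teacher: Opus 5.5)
Your proof is correct and follows the paper's route. Both arguments induct on $k$ by peeling off one part, bound $A_{k-1}(L-d)$ by the hypothesis, and compare $\sum_d d^{-1/2}(L-d)^{(k-3)/2}$ with the Beta integral, whose factor $B(\tfrac12,\tfrac{k-1}{2})$ is uniformly bounded. Your unit-interval domination $d^{-1/2}(L-d)^\beta\le\int_{d-1}^{d}x^{-1/2}(L-x)^\beta\,dx$ for $\beta\ge0$, together with handling $k=2$ by hand, is a clean way of making the paper's ``endpoint terms treated separately'' precise, and it shows that in fact $C=2$ already suffices.
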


\begin{proof}
Denote the sum on the left by $A_k(L)$. The case $k=1$ is immediate. Suppose that
\[
A_{k-1}(m)
\leq
C^{k-1}m^{(k-1)/2-1}.
\]
Then
\[
A_k(L)
\leq
C^{k-1}
\sum_{d=1}^{L-k+1}
d^{-1/2}(L-d)^{(k-3)/2}.
\]
Comparison with the corresponding beta integral, with the endpoint terms treated separately, gives
\[
\sum_{d=1}^{L-k+1}
d^{-1/2}(L-d)^{(k-3)/2}
\leq
CL^{k/2-1}
B\left(\frac12,\frac{k-1}{2}\right).
\]
The beta factor is uniformly bounded for $k\geq2$. After enlarging $C$ if necessary, induction proves \eqref{eq:composition}.
\end{proof}

\begin{lemma}[Uniform factorial-moment bound]
\label{lem:factorial}
There is an absolute constant $C<\infty$ such that, if $\ell/n\leq1/2$, then for every $1\leq r\leq\ell$,
\begin{equation}
\frac{\E(M_{n,\ell})_r}{r!}
\leq
\left[
C\frac{(\ell+1)^{3/2}}{n}
\right]^r.
\label{eq:factorial}
\end{equation}
Consequently, if $\ell=o(n^{2/3})$, then
\begin{equation}
\E4^{M_{n,\ell}}
=
1+o(1).
\label{eq:mgfsmall}
\end{equation}
\end{lemma}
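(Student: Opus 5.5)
We bound $\E(M_{n,\ell})_r/r! = \sum_{a_1<\cdots<a_r}\Prb\{S_{a_j}=T_{a_j},\ j=1,\ldots,r\}$ by fixing the ranks $a_1<\cdots<a_r$, with $a_0=0$, and writing $d_j=a_j-a_{j-1}\ge1$. Encode $S$ by its gaps $g_0,\ldots,g_\ell\ge0$, where $S_a=a+g_0+\cdots+g_{a-1}$ and $\sum g=n-\ell$. The gap vector is uniform on compositions of $n-\ell$ into $\ell+1$ nonnegative parts. Hence $S_{a_j}-a_j$ equals the partial sum $G_j$ of the first $a_j$ gaps. The event $\{S_{a_j}=T_{a_j}\ \forall j\}$ is therefore the event that the block sums $Z_j=G_j-G_{j-1}$ (sums of $d_j$ consecutive gaps) coincide for $S$ and $T$, for $j=1,\ldots,r$.

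We represent the uniform composition as i.i.d.\ geometric variables with parameter $p$ conditioned on their total being $n-\ell$. We choose $p=\ell/(n+1)$ (or $(\ell)/(n-\ell+\ell)$), so that the conditioning event has mass comparable to the lower bound of Lemma~\ref{lem:nbmasses}. Under this representation, $Z_j\sim$ NB$(d_j,p)$ independently, and the remaining block of $D'=\ell+1-a_r$ gaps is NB$(D',p)$. Then
\[
\Prb\{Z^S=Z^T\}=\frac{\sum_{z}\prod_j b_{d_j,p}(z_j)^2\, b_{D',p}(N-\textstyle\sum z)^2}{b_{\ell+1,p}(N)^2},\qquad N=n-\ell .
\]
We bound one copy of each $b_{d_j,p}(z_j)$ by its supremum from \eqref{eq:nbupper}, which is at most $Cp/\sqrt{d_jq}$. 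The other copies then sum, together with one copy of the last block, to $\Prb\{\text{total}=N\}=b_{\ell+1,p}(N)$. The remaining factor $b_{D',p}(\cdot)$ is bounded by its supremum, $Cp/\sqrt{D'q}$ (or by $1$ if $D'=0$). The denominator $b_{\ell+1,p}(N)\ge cp/\sqrt{\ell q}$ by \eqref{eq:nblower}, since $N\ge\ell$ when $\ell\le n/2$. The probability is therefore at most
\[
C^r\Big(\tfrac{p}{\sqrt q}\Big)^r\prod_j d_j^{-1/2}\cdot\sqrt{\ell/D'},
\]
with a harmless adjustment when $D'$ is small. Here $q\ge1/2$ and $p\asymp\ell/n$.

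Next we sum over ranks using Lemma~\ref{lem:composition}. Group the configurations by $L=a_r$, and let $d_{r+1}=\ell+1-L$ absorb the tail. The factor $\sqrt{\ell}\, d_{r+1}^{-1/2}$ together with $\prod_{j\le r} d_j^{-1/2}$ is a composition sum of $\ell+1$ into $r+1$ parts. This sum is at most $C^{r+1}(\ell+1)^{(r+1)/2-1}$. Multiplying by $\sqrt{\ell}$ gives $C^r(\ell+1)^{r/2}$. Combined with $p^r\asymp(\ell/n)^r$, this yields $[C(\ell+1)^{3/2}/n]^r$, which is \eqref{eq:factorial}. The main obstacle is making the conditioning step rigorous: choosing $p$ so that the lower bound \eqref{eq:nblower} applies exactly (take $D=\ell+1$, $p=\ell/(N+\ell)$), and handling the tail block with $D'=0$, where it degenerates to a point mass. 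In that case we bound the factor by $1$ and compensate with the composition sum over $r$ parts only, which gives an even better power of $\ell$.

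Finally, for \eqref{eq:mgfsmall}, $4^{M}=\sum_{r\ge0}3^r(M)_r/r!$. Taking expectations gives $\E4^M-1\le\sum_{r\ge1}(3C\ell'^{3/2}/n)^r$, where $\ell'=\ell+1$. This geometric series is $O(\ell^{3/2}/n)=o(1)$ when $\ell=o(n^{2/3})$, and the condition $\ell/n\le1/2$ holds eventually in that regime.
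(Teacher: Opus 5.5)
Your proposal is correct and follows the paper's proof essentially step for step. The shared steps are: the gap representation; negative-binomial conditioning with $p=\ell/n=\ell/(N+\ell)$, so that \eqref{eq:nblower} applies with $D=\ell+1$; bounding the collision probability via one copy of each factor by its supremum, which is the paper's $\sum_z\Prb\{Z=z\}^2\le\max_z\Prb\{Z=z\}$ with the maximum controlled by the product of suprema over $b_{\ell+1,p}(N)$; and summing over ranks with Lemma~\ref{lem:composition}.

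The degenerate case $D'=0$ that you single out never arises. Since $a_r\le\ell$, the tail block always contains at least one gap, so the composition of $\ell+1$ into $r+1$ positive parts covers every rank configuration.
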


\begin{proof}
Set
\[
L=\ell+1,
\qquad
N=n-\ell.
\]
Represent an ordered subset $S=(S_1<\cdots<S_\ell)$ through its gaps:
\[
G_0=S_1-1,
\qquad
G_a=S_{a+1}-S_a-1
\quad(1\leq a<\ell),
\qquad
G_\ell=n-S_\ell.
\]
The vector $G=(G_0,\ldots,G_\ell)$ is uniformly distributed over the weak compositions of $N$ into $L$ parts. Their number is
\[
\binom{N+L-1}{L-1}
=
\binom n\ell.
\]

Fix ranks
\[
1\leq a_1<\cdots<a_r\leq\ell
\]
and define
\[
d_0=a_1,
\qquad
d_j=a_{j+1}-a_j
\quad(1\leq j<r),
\qquad
d_r=L-a_r.
\]
Then $d_j\geq1$ and $d_0+\cdots+d_r=L$. Group the gaps into the corresponding $r+1$ consecutive blocks and let $Z=(Z_0,\ldots,Z_r)$ denote their sums. Its probability mass function is
\begin{equation}
\Prb\{Z=z\}
=
\binom n\ell^{-1}
\prod_{j=0}^{r}
\binom{z_j+d_j-1}{d_j-1},
\qquad
z_j\geq0,
\quad
\sum_jz_j=N.
\label{eq:DMlaw}
\end{equation}

For two independent subsets $S$ and $T$, the conditions
\[
S_{a_j}=T_{a_j}
\qquad(1\leq j\leq r)
\]
are equivalent to equality of the first $r$ cumulative block sums. Since the total gap sum is $N$ for both subsets, this is equivalent to equality of all $r+1$ block sums. Therefore
\begin{equation}
\Prb\{S_{a_j}=T_{a_j}\ \forall j\}
=
\sum_z\Prb\{Z=z\}^2
\leq
\max_z\Prb\{Z=z\}.
\label{eq:collisionmax}
\end{equation}

To bound the maximum, set
\[
p=\frac{\ell}{n}
=
\frac{L-1}{N+L-1},
\qquad
q=1-p
=
\frac{N}{n}.
\]
Let $Y_0,\ldots,Y_r$ be independent negative-binomial variables with shapes $d_0,\ldots,d_r$ and common success probability $p$. Their sum is negative-binomial with shape $L$, and conditioning on
\[
\sum_jY_j=N
\]
gives the law in \eqref{eq:DMlaw}. Hence, for every admissible $z$,
\begin{align}
\Prb\{Z=z\}
&=
\frac{\prod_{j=0}^{r}b_{d_j,p}(z_j)}
{b_{L,p}(N)}
\\
&\leq
C^{r+1}
\frac{p^{r+1}q^{-(r+1)/2}}
{\sqrt{d_0\cdots d_r}}
\frac{\sqrt{Lq}}{p}
\\
&\leq
C^r
\left(\frac{L}{n}\right)^r
\frac{\sqrt L}{\sqrt{d_0\cdots d_r}}.
\label{eq:rankmatch}
\end{align}
The second line follows from Lemma~\ref{lem:nbmasses}. In the last line we used $q\geq1/2$ and $p\leq L/n$.

Since
\[
\frac{(M_{n,\ell})_r}{r!}
=
\sum_{1\leq a_1<\cdots<a_r\leq\ell}
\prod_{j=1}^{r}
\bm1_{\{S_{a_j}=T_{a_j}\}},
\]
taking expectations and applying \eqref{eq:rankmatch} gives
\begin{align}
\frac{\E(M_{n,\ell})_r}{r!}
&\leq
C^r
\left(\frac{L}{n}\right)^r
\sqrt L
\sum_{\substack{d_0+\cdots+d_r=L\\d_j\geq1}}
(d_0\cdots d_r)^{-1/2}
\\
&\leq
C^r
\left(\frac{L}{n}\right)^r
\sqrt L\,
C^{r+1}L^{(r+1)/2-1}
\\
&\leq
\left(
C\frac{L^{3/2}}{n}
\right)^r.
\end{align}
This proves \eqref{eq:factorial}, after adjusting the absolute constant.

For every nonnegative integer-valued random variable $M$ and every $q\geq1$,
\[
q^M
=
\sum_{r\geq0}
(q-1)^r\frac{(M)_r}{r!}.
\]
With
\[
x_n=
C\frac{L^{3/2}}{n},
\]
we therefore have
\[
1
\leq
\E4^{M_{n,\ell}}
\leq
\sum_{r=0}^{\ell}(3x_n)^r.
\]
If $\ell=o(n^{2/3})$, then $x_n\to0$, and the geometric sum tends to one. This proves \eqref{eq:mgfsmall}.
\end{proof}

\begin{proposition}[Universal converse]
\label{prop:converse}
For every scrambling unitary $U$,
\begin{equation}
F_{\rm e}^{\rm opt}(U;\ell)
\leq
4^{-k}
+
\frac12\sqrt{G_{n,\ell}-1}.
\label{eq:convbound}
\end{equation}
Consequently, \eqref{eq:subcrit} holds.
\end{proposition}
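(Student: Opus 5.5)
We bound the recovery fidelity through the Choi state $\tau=\tau^{(n,\ell)}_{C'D}$, with the scrambling unitary absorbed into the input. Write the Hayden--Preskill input as a fixed pure state on $CRE$ with $\rho_{CR}=\Phi_{MR}\otimes\pi_H$. Feed it into $\D_{n\to\ell}\circ\mathcal U$, where $\mathcal U(\cdot)=U\cdot U^\dagger$. By the standard relation between a channel and its Choi state, the output state $\rho^U_{RDE}$ is obtained from $\tau$ by a fixed map $\Lambda_U$ that does not increase trace distance. Concretely, $C'$ is rotated by $U^{T}$ and then projected and teleported, i.e. contracted against the conjugate of the input purification of $C$ by $RE$. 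Because the input on $C$ is maximally mixed ($\rho_C=\pi_C$, since $M$ is maximally mixed and $H$ is $\pi_H$), this contraction is a trace-preserving isometric relabeling. The purification $RE$ of $\pi_C$ is unitarily equivalent to $C'$ itself, so $\rho^U_{RDE}=(V\otimes\id_D)(\id\otimes\mathcal U^{T})(\tau)(V\otimes\id_D)^\dagger$ for an isometry $V:C'\to RE$.

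Let $\mathcal R:DE\to\widehat M$ be any recovery channel. We compare with the state obtained by replacing $\tau$ by $\pi_{C'}\otimes\pi_D$. First, the composite map $\tau\mapsto(\id_R\otimes\mathcal R)(\Lambda_U(\tau))$ is CPTP, so data processing bounds the difference of fidelities by $\frac12\norm{\tau-\pi_{C'}\otimes\pi_D}_1$. Second, the replaced state is $V\pi_{C'}V^\dagger\otimes\pi_D$, whose restriction to $R$ together with $DE$ has $R$ maximally mixed. Indeed, $V\pi_{C'}V^\dagger$ is the state $\Phi_{MR}\otimes\pi_H$ purified on $CRE$, traced over $C$, which is just the marginal on $RE$ with $\pi_R$. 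After $\mathcal R$ we therefore get some state $\omega_{R\widehat M}$ with $\omega_R=\pi_R$. Its overlap with $\Phi_{R\widehat M}$ is at most $4^{-k}$, because $\langle\Phi|\omega|\Phi\rangle\le 2^{-k}\cdot\lVert\omega\rVert$ type bounds are insufficient and we need correlation. The clean route is different: $R$ is in product with $DE$ in the replaced state (the correlation between $R$ and $E$ is handled because $E$ purifies only $H$, and $R$ is entangled with $M$ only). So $\omega_{R\widehat M}=\pi_R\otimes\sigma_{\widehat M}$, and $\langle\Phi|\pi_R\otimes\sigma|\Phi\rangle=2^{-k}\cdot2^{-k}\Tr\sigma=4^{-k}$.

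Care is needed here, so we verify the product structure directly rather than through $V$. Under the replacement $\D\to$ the constant channel $\rho\mapsto\pi_D\Tr\rho$, the output is $\pi_D\otimes\rho_{RE}$, and $\rho_{RE}=\pi_R\otimes\pi_E$ because $R$ purifies $M$ and $E$ purifies $H$. The statement that the output state is a CPTP image of the Choi state, with the constant channel mapping to $\pi_{C'}\otimes\pi_D$, follows from linearity in the channel. Specifically, $\rho^U_{RDE}=2^{n}\Tr_{C'}[(\bar\psi_{C'RE}\otimes\bbone)\,\tau^{\mathcal U}]$ with $\psi$ the input purification, and this is trace preserving exactly because $\psi_C$ is maximally mixed. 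It is an isometric conjugation, hence norm-nonincreasing. Unitary precomposition gives $\norm{\tau^{\mathcal U}-\pi}_1=\norm{\tau-\pi}_1$.

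Finally, we use the purity bound: $\norm{\tau-\pi}_1\le\sqrt{d}\,\norm{\tau-\pi}_2$ with $d=2^{n+\ell}$, and $\norm{\tau-\pi}_2^2=\Tr\tau^2-1/d=(G_{n,\ell}-1)/d$, since $\Tr(\tau\pi)=1/d$ by the maximally mixed marginals. Hence $\frac12\norm{\tau-\pi}_1\le\frac12\sqrt{G_{n,\ell}-1}$, which yields \eqref{eq:convbound} uniformly in $U$ and $\mathcal R$. Lemma~\ref{lem:factorial} together with Lemma~\ref{lem:purity} give $G_{n,\ell}\to1$ when $\ell=o(n^{2/3})$. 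Since $4^{-k}$ is attained trivially, \eqref{eq:subcrit} follows. The only delicate point is the trace-preservation of the Choi-to-output map, which is exactly where $\rho_C=\pi_C$ is used.
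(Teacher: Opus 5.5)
Your proposal is correct and follows essentially the paper's argument: the Hayden--Preskill output is the (unitarily rotated) Choi state up to a fixed reference isometry $C'\to RE$. Data processing compares it with the image of $\pi_{C'}\otimes\pi_D$, in which $R$ is in product with $DE$ so the overlap is exactly $4^{-k}$, and the Hilbert--Schmidt estimate gives $\frac12\norm{\tau-\pi}_1\le\frac12\sqrt{G_{n,\ell}-1}$.

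The only slips are cosmetic. Your explicit contraction formula should use the partial transpose $\psi^{T_{C'}}$ rather than $\bar\psi$; as written it returns the output partially transposed on $RE$, which is not a CPTP image. The isometric-conjugation formula in your first paragraph is the correct statement and is all the argument needs. The aside suggesting that $\omega_R=\pi_R$ alone bounds the overlap is false, and you never use it, so it should be removed.
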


\begin{proof}
Both marginals of the normalized Choi state are maximally mixed. With
\[
\pi_{C'D}
=
2^{-(n+\ell)}I_{C'D},
\]
the relation between trace and Hilbert--Schmidt norms gives
\[
\frac12\norm{\tau-\pi_{C'D}}_1
\leq
\frac12
\sqrt{
2^{n+\ell}\Tr(\tau^2)-1
}
=
\frac12\sqrt{G_{n,\ell}-1}.
\]

For the old-black-hole input, the pure state
\[
\Phi_{MR}\otimes\Phi_{HE}
\]
is maximally entangled between $C=MH$ and a reference register identified with $C'\simeq RE$, up to a fixed reordering isometry. Applying the scrambling unitary and the deletion channel to $C$ therefore produces the corresponding Choi output, up to a unitary transformation on the reference. Such a transformation preserves trace distance. Any recovery channel acting on $DE$ can only decrease it.

For the product state $\pi_{C'}\otimes\pi_D$, the diary reference $R$ is independent of the decoder input. Its overlap with a maximally entangled state on $R\widehat M$ is therefore exactly $4^{-k}$. The difference between the two overlaps with $\Phi_{R\widehat M}$ is bounded by half the trace distance of the corresponding recovered states. This yields \eqref{eq:convbound}.

Conversely, a decoder can always discard its input and prepare the maximally mixed state on $\widehat M$, achieving fidelity $4^{-k}$. Lemma~\ref{lem:factorial} then proves the subcritical limit.
\end{proof}

\subsection{The planted-subsequence statistic}

Measuring $C'$ and $D$ in the computational basis gives a classical planted-subsequence model. Under the planted law $P$, the input word $X\in\{0,1\}^n$ is uniform, $S$ is a uniformly chosen ordered $\ell$-subset of $[n]$, and
\[
Y_a=X_{S_a}.
\]
Since $X$ is uniform, the marginal distribution of $Y$ is also uniform. Define
\begin{equation}
p_{ai}
=
\Prb(S_a=i)
=
\frac{\binom{i-1}{a-1}\binom{n-i}{\ell-a}}
{\binom n\ell},
\qquad
P_0=(p_{ai})_{a,i},
\end{equation}
and
\begin{equation}
V_{n,\ell}
=
\norm{P_0}_F^2
=
\sum_{a,i}p_{ai}^2.
\end{equation}

\begin{lemma}[Kernel scale and norm]
\label{lem:kernel}
There are absolute constants $0<c<C<\infty$ such that, whenever
$\ell\to\infty$ and $\ell/n\to0$,
\begin{equation}
c\frac{\ell^{3/2}}{n}
\leq
V_{n,\ell}
\leq
C\frac{\ell^{3/2}}{n}.
\label{eq:Vscale}
\end{equation}
Moreover,
\begin{equation}
\norm{P_0}_2^2
=
\frac{\ell}{n}.
\label{eq:Pnorm}
\end{equation}
\end{lemma}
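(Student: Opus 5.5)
The plan has three parts: an exact computation of the operator norm, an upper bound on $V_{n,\ell}$ from the negative-binomial machinery behind Lemma~\ref{lem:factorial}, and a lower bound on $V_{n,\ell}$ from a second-moment argument on the bulk order statistics.

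\emph{Operator norm.} I would prove \eqref{eq:Pnorm} by a Schur test together with a matching test vector. Each row of $P_0$ is the law of $S_a$, so $\sum_i p_{ai}=1$. Each column sums to $\sum_a p_{ai}=\Prb(i\in S)=\ell/n$. The Schur bound $\norm{P_0}_2^2\le(\max_a\sum_i p_{ai})(\max_i\sum_a p_{ai})$ then gives $\norm{P_0}_2^2\le \ell/n$. For the reverse inequality, apply $P_0$ to the all-ones vector $\bbone_n$. Since $P_0\bbone_n=\bbone_\ell$, we get $\norm{P_0\bbone_n}^2/\norm{\bbone_n}^2=\ell/n$, so equality holds.

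\emph{Upper bound on $V_{n,\ell}$.} I would reuse the gap representation from the proof of Lemma~\ref{lem:factorial} with $r=1$ and a single rank $a$. Then $S_a-1$ is the block sum $Z_0$ with shapes $d_0=a$ and $d_1=L-a$, where $L=\ell+1$. The bound \eqref{eq:rankmatch} gives
\[
\max_i p_{ai}\le C\frac{L}{n}\frac{\sqrt L}{\sqrt{a(L-a)}} .
\]
Because $\sum_i p_{ai}^2\le\max_i p_{ai}$, summing over $a$ and applying Lemma~\ref{lem:composition} with $k=2$ bounds $\sum_a (a(L-a))^{-1/2}$ by an absolute constant. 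This yields $V_{n,\ell}\le C L^{3/2}/n\le C'\ell^{3/2}/n$. This step needs only $\ell/n\le 1/2$.

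\emph{Lower bound on $V_{n,\ell}$.} I would restrict to the bulk ranks $a\in[\ell/4,3\ell/4]$ and use Cauchy--Schwarz: for any window $W$,
\[
\sum_i p_{ai}^2\ge \frac{\Prb(S_a\in W)^2}{|W|}.
\]
The order statistic $S_a$ has mean $a(n+1)/(\ell+1)$. Its exact variance is
\[
\frac{a(\ell+1-a)(n+1)(n-\ell)}{(\ell+1)^2(\ell+2)},
\]
which is at most $C n^2/\ell$ for these ranks. Chebyshev's inequality then lets me choose $W$ centered at the mean with $|W|\le C n/\sqrt\ell$ and $\Prb(S_a\in W)\ge 1/2$. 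Each bulk rank therefore contributes at least $c\sqrt\ell/n$, and the roughly $\ell/2$ bulk ranks together give $V_{n,\ell}\ge c\,\ell^{3/2}/n$. The hypothesis $\ell\to\infty$ is used to make the number of bulk ranks comparable to $\ell$ and to absorb the difference between $\ell+1$ and $\ell$. Alternatively, the lower bound \eqref{eq:nblower} evaluated at the mode would give the same pointwise estimate.

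The only step needing care is verifying the variance formula, or equivalently showing that the beta-binomial spread of $S_a$ is of order $n/\sqrt\ell$ uniformly over bulk ranks. This is a standard computation, obtained by writing $S_a-1$ as a sum of $a$ exchangeable gaps. I do not expect a genuine obstacle.
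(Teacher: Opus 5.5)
Your proposal is correct and follows the paper's proof essentially step for step. The paper uses the same Schur bound with constant test vectors for \eqref{eq:Pnorm}. For the upper bound it uses the same two-block negative-binomial estimate (your \eqref{eq:rankmatch} at $r=1$ is exactly the paper's \eqref{eq:rowmax}) together with $\sum_i p_{ai}^2\le\max_i p_{ai}$. For the lower bound it uses the same Chebyshev plus Cauchy--Schwarz argument with the beta-binomial variance. The only difference is that you invoke Lemma~\ref{lem:composition} at $k=2$ where the paper does a direct integral comparison, which is equivalent.

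There are two cosmetic slips. First, with the paper's gap convention the block sum is $Z_0=S_a-a$, not $S_a-1$; this is harmless because a shift changes neither the point masses nor the variance. Second, your alternative via \eqref{eq:nblower} does not by itself yield the row collision lower bound: that estimate controls the normalizing denominator $b_{L,p}(N)$, and a mode-mass bound alone gives no spread control. Your main Chebyshev argument already covers this step, so the aside can simply be dropped.
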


\begin{proof}
The rows of $P_0$ sum to one, and its columns sum to $\ell/n$. Hence
\[
\norm{P_0}_2
\leq
\sqrt{\norm{P_0}_1\norm{P_0}_\infty}
=
\sqrt{\ell/n}.
\]
The normalized constant vectors attain this bound, proving
\eqref{eq:Pnorm}.

To estimate the Frobenius norm, put
\[
L=\ell+1,
\qquad
N=n-\ell.
\]
The shifted order statistic
\[
Z_a:=S_a-a
\]
has the beta-binomial mass
\begin{equation}
\Prb\{Z_a=z\}
=
\binom n\ell^{-1}
\binom{z+a-1}{a-1}
\binom{N-z+L-a-1}{L-a-1},
\qquad
0\leq z\leq N.
\label{eq:betabinomial}
\end{equation}
Equivalently, $Z_a$ is the first component of a two-block
Dirichlet--multinomial distribution with block sizes $a$ and $L-a$.
Its variance is
\begin{equation}
\sigma_a^2
=
\frac{N\,a(L-a)(N+L)}
{L^2(L+1)}.
\label{eq:ordervar}
\end{equation}
Since $\ell/n\to0$, uniformly for $1\leq a\leq\ell$,
\begin{equation}
\sigma_a
\asymp
\frac{n\sqrt{a(L-a)}}{L^{3/2}},
\qquad
\sigma_a\longrightarrow\infty.
\label{eq:sigmaasymp}
\end{equation}

We first bound the maximum point mass. In the negative-binomial
representation used in Lemma~\ref{lem:factorial}, the two block sums
have shapes $a$ and $L-a$ and are conditioned to have total $N$.
Lemma~\ref{lem:nbmasses} gives, uniformly in $z$,
\begin{align}
\Prb\{Z_a=z\}
&\leq
\frac{
C p(aq)^{-1/2}\,
C p((L-a)q)^{-1/2}
}{
c p(Lq)^{-1/2}
}
\\
&\leq
C\frac{L^{3/2}}
{n\sqrt{a(L-a)}},
\label{eq:rowmax}
\end{align}
where $p=\ell/n$ and $q=1-p$. Therefore
\begin{equation}
\sum_i p_{ai}^2
\leq
\max_i p_{ai}
\leq
C\frac{L^{3/2}}
{n\sqrt{a(L-a)}}.
\label{eq:rowupper}
\end{equation}

For the lower bound, Chebyshev's inequality and
\eqref{eq:ordervar} give
\[
\Prb\{|Z_a-\E Z_a|\leq2\sigma_a\}
\geq
\frac34.
\]
The interval contains at most $4\sigma_a+2\leq C\sigma_a$ integers.
Cauchy--Schwarz then yields
\begin{equation}
\sum_i p_{ai}^2
\geq
\frac{(3/4)^2}{4\sigma_a+2}
\geq
c\frac{L^{3/2}}
{n\sqrt{a(L-a)}}.
\label{eq:rowlower}
\end{equation}

Finally,
\[
c
\leq
\sum_{a=1}^{L-1}
\frac1{\sqrt{a(L-a)}}
\leq
C.
\]
The upper bound follows by comparison with
\[
\int_0^1\frac{du}{\sqrt{u(1-u)}},
\]
while the lower bound follows by restricting the sum to
$a\in[L/4,3L/4]$. Summing \eqref{eq:rowupper} and
\eqref{eq:rowlower} over $a$ proves \eqref{eq:Vscale}.
\end{proof}

Set
\[
\xi_i=(-1)^{X_i},
\qquad
\eta_a=(-1)^{Y_a},
\]
and define
\begin{equation}
T(X,Y)
=
\sum_{a=1}^{\ell}\sum_{i=1}^{n}
p_{ai}\xi_i\eta_a.
\label{eq:score}
\end{equation}
Conditioned on the planted subset $S$,
\begin{equation}
\E[T\mid S]
=
m(S)
:=
\sum_{a=1}^{\ell}p_{a,S_a}.
\label{eq:mS}
\end{equation}

\begin{lemma}[Hypergeometric-bridge maximal bound]
\label{lem:bridge}
Fix $\delta\in(0,1/2)$. Let
\[
N_t=|S\cap[t]|,
\qquad
p=\ell/n.
\]
For $1\leq u=o(\sqrt\ell)$,
\begin{equation}
\Prb\left[
\max_{t\leq(1-\delta)n}
|N_t-pt|
>
u\sqrt\ell
\right]
\leq
2e^{-c_\delta u^2}.
\label{eq:bridge}
\end{equation}
Consequently, after changing $c_\delta$ if necessary, with probability at least
$1-2e^{-c_\delta u^2}$,
\begin{equation}
\max_{\delta\ell\leq a\leq(1-\delta)\ell}
\left|
\frac{S_a}{n}
-
\frac{a}{\ell}
\right|
\leq
\frac{u}{\sqrt\ell}.
\label{eq:quantile}
\end{equation}
\end{lemma}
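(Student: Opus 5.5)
We plan to prove \eqref{eq:bridge} by treating the hypergeometric bridge $N_t$ as a martingale after normalizing by its conditional drift, and then to apply Freedman's inequality \cite{Freedman1975}. Reveal positions $1,2,\dots,n$ in order. Given $N_{t-1}$, position $t$ is in $S$ with probability $(\ell-N_{t-1})/(n-t+1)$. Define
\[
W_t=\frac{N_t-pt}{n-t},\qquad 0\le t<n .
\]
A direct computation shows that $W_t$ is a martingale. Indeed, $\E[N_t-N_{t-1}\mid\mathcal F_{t-1}]=(\ell-N_{t-1})/(n-t+1)$, and this is exactly the drift that makes $(N_t-pt)/(n-t)$ conserved in conditional expectation. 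This is the standard Doob-type martingale for sampling without replacement.

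For $t\le(1-\delta)n$ the weights satisfy $1/(n-t)\le 1/(\delta n)$. Each increment of $W$ is therefore bounded by $C/(\delta n)$. The conditional variance of an increment is at most
\[
\frac{1}{(n-t)^2}\cdot\frac{\ell-N_{t-1}}{n-t+1}.
\]
On the event where $N$ stays below $2\ell$, which can be enforced by a stopping time, this is at most $C\ell/(\delta^3 n^3)$. Summing over $t\le(1-\delta)n$ gives a predictable quadratic variation bounded by $\sigma^2=C_\delta\,\ell/n^2$.

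Freedman's inequality then gives
\[
\Prb\Bigl[\max_{t\le(1-\delta)n}|W_t|>\frac{u\sqrt\ell}{n}\Bigr]
\le 2\exp\Bigl(-\frac{u^2\ell/n^2}{2(\sigma^2+C u\sqrt\ell/(\delta n^2))}\Bigr).
\]
Since $u=o(\sqrt\ell)$, the linear term $u\sqrt\ell$ is $o(\ell)$, so the exponent is at least $c_\delta u^2$. Multiplying back by $n-t\le n$ converts the deviation of $W$ into the bound $|N_t-pt|\le u\sqrt\ell$. The stopping at $N\ge2\ell$ does not affect this conclusion: on the good event $N_t\le pt+u\sqrt\ell<2\ell$, so the stopping time is never triggered and the optional-stopping argument is consistent. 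This yields \eqref{eq:bridge}.

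For \eqref{eq:quantile} we invert the counting process. Take $\delta\ell\le a\le(1-\delta)\ell$ and set $t_\pm=\lceil n(a/\ell\pm u/\sqrt\ell)\rceil$. Use \eqref{eq:bridge} with $\delta/2$ in place of $\delta$. Because $u/\sqrt\ell\to0$, we have $t_+\le(1-\delta/2)n$ for large $n$. On the good event,
\[
N_{t_+}\ge pt_+-u'\sqrt\ell\ge a+u\sqrt\ell-u'\sqrt\ell,
\]
which is at least $a$ if we take $u'=u/2$. Hence $S_a\le t_+$. In the same way, $N_{t_-}<a$ gives $S_a>t_-$. Absorbing the factor $1/2$ in $u$ and the rounding errors of order $1/n$ into $c_\delta$ then gives \eqref{eq:quantile}.

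The main obstacle is getting a variance proxy of order $\ell/n^2$ uniformly, because the drift $1/(n-t)$ blows up near $t=n$. Restricting to $t\le(1-\delta)n$ is exactly what keeps the constants $\delta$-dependent but finite. If the bookkeeping with the stopping time becomes awkward, the fallback is different. One can use the exchangeability of the indicators: $N_t$ is hypergeometric, and Hoeffding's sampling-without-replacement bound gives the fixed-$t$ bound. A chaining or reflection-type maximal inequality over a dyadic grid of $t$ then controls the maximum. The Freedman route handles the maximum directly, however, so we pursue it first.
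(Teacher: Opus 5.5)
Your proposal is correct and follows essentially the same route as the paper: the martingale $(N_t-pt)/(n-t)$, Freedman's inequality restricted to $t\le(1-\delta)n$ with increments $O(1/(\delta n))$ and variance proxy $O_\delta(\ell/n^2)$, and then inversion to the quantile bound after replacing $\delta$ by $\delta/2$. Two small differences are worth noting. Your stopping time at $N\ge 2\ell$ is vacuous, because $N_t\le\ell$ always, so the bound $(\ell-N_{t-1})/(n-t+1)\le \ell/(\delta n)$ holds deterministically (the paper instead stops when $|N_t-pt|\ge u\sqrt\ell$). The paper's inversion also evaluates the uniform bridge bound directly at $t=S_a$, using $N_{S_a}=a$, which avoids your $t_\pm$ bracketing and rounding bookkeeping.
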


\begin{proof}
Let
\[
\mathcal F_t
=
\sigma\!\left(
\bm1_{\{1\in S\}},\ldots,\bm1_{\{t\in S\}}
\right),
\]
and define
\[
Z_t=N_t-pt,
\qquad
M_t=\frac{Z_t}{n-t},
\qquad
0\leq t<n.
\]
The conditional probability of selecting position $t+1$ is
\[
q_t
=
\Prb(t+1\in S\mid\mathcal F_t)
=
\frac{\ell-N_t}{n-t}.
\]
Since
\[
q_t-p
=
-\frac{Z_t}{n-t},
\]
we have
\[
\E(M_{t+1}\mid\mathcal F_t)=M_t
\]
and
\[
M_{t+1}-M_t
=
\frac{\bm1_{\{t+1\in S\}}-q_t}
{n-t-1}.
\]
Thus $(M_t)$ is a martingale.

Let
\[
\tau
=
\inf\{t\geq0:|Z_t|\geq u\sqrt\ell\},
\]
and stop the process at
$\tau\wedge\lfloor(1-\delta)n\rfloor$.
Before the stopping time,
\[
q_t
\leq
p+\frac{u\sqrt\ell}{\delta n}
=
p\left(
1+\frac{u}{\delta\sqrt\ell}
\right)
\leq
2p
\]
for all sufficiently large $n$. The stopped increments are bounded by
\[
b_\delta
=
\frac{2}{\delta n},
\]
and the predictable quadratic variation is at most
\begin{align*}
\sum_{t<\tau\wedge(1-\delta)n}
\E[(M_{t+1}-M_t)^2\mid\mathcal F_t]
&\leq
2p
\sum_{t\leq(1-\delta)n}
\frac1{(n-t-1)^2}
\\
&\leq
C_\delta\frac{\ell}{n^2}
=:v_\delta.
\end{align*}

If $\tau\leq(1-\delta)n$, then
\[
|M_\tau|
\geq
\frac{u\sqrt\ell}{n}.
\]
Applying Freedman's inequality to $M$ and $-M$ with
\[
x=\frac{u\sqrt\ell}{n}
\]
gives
\[
\Prb(\tau\leq(1-\delta)n)
\leq
2\exp\left[
-\frac{x^2}
{2(v_\delta+b_\delta x/3)}
\right]
\leq
2e^{-c_\delta u^2}.
\]
This proves \eqref{eq:bridge}.

For \eqref{eq:quantile}, apply the preceding bound with $\delta/2$.
On the resulting event, for
$\delta\ell\leq a\leq(1-\delta)\ell$ and sufficiently large $n$,
\[
N_{\lfloor(1-\delta/2)n\rfloor}
\geq
(1-\delta/2)\ell-u\sqrt\ell
>
a.
\]
Hence $S_a\leq(1-\delta/2)n$. Since $N_{S_a}=a$,
\[
|a-pS_a|
\leq
u\sqrt\ell.
\]
Dividing by $\ell$ proves \eqref{eq:quantile}.
\end{proof}

\begin{lemma}[Uniform local hypergeometric lower bound]
\label{lem:localhyp}
Fix $\delta\in(0,1/2)$. There are constants
$c_\delta,C_\delta>0$ with the following property. Let
\[
H\sim\operatorname{Hyp}(N,K,m),
\qquad
\theta=K/N,
\]
and assume
\[
\theta\in[\delta,1-\delta],
\qquad
m\leq N/2,
\qquad
m\longrightarrow\infty.
\]
If $1\leq u=o(\sqrt m)$ and the integer $r$ satisfies
\begin{equation}
|r-m\theta|
\leq
u\sqrt m,
\label{eq:hypwindow}
\end{equation}
then, for all sufficiently large $N$,
\begin{equation}
\Prb(H=r)
\geq
\frac{c_\delta}{\sqrt m}
e^{-C_\delta u^2}.
\label{eq:localhyp}
\end{equation}
Consequently, suppose that $\ell\to\infty$ and $\ell/n\to0$. If
$\delta\ell\leq a\leq(1-\delta)\ell$ and
\begin{equation}
\left|
\frac{i}{n}
-
\frac{a}{\ell}
\right|
\leq
\frac{u}{\sqrt\ell},
\qquad
u=o(\sqrt\ell),
\label{eq:orderwindow}
\end{equation}
then, uniformly over such $a$ and $i$,
\begin{equation}
p_{ai}
\geq
c_\delta
\frac{\sqrt\ell}{n}
e^{-C_\delta u^2}.
\label{eq:pailower}
\end{equation}
\end{lemma}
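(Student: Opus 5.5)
The plan has two parts: first a local lower bound for the hypergeometric point mass \eqref{eq:localhyp}, then the consequence \eqref{eq:pailower} by writing $p_{ai}$ as a hypergeometric mass.

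\emph{Hypergeometric bound.} We compare $\Prb(H=r)$ with its mode using log-concavity and ratios of consecutive masses. Write $\Prb(H=r)=\binom Kr\binom{N-K}{m-r}/\binom Nm$. The ratio of consecutive masses is
\[
\rho(s)=\frac{\Prb(H=s+1)}{\Prb(H=s)}=\frac{(K-s)(m-s)}{(s+1)(N-K-m+s+1)}.
\]
For $s=m\theta+x$ with $|x|\le u\sqrt m=o(m)$, and using $m\le N/2$ and $\theta\in[\delta,1-\delta]$, expanding gives $|\log\rho(s)|\le C_\delta(|x|+1)/m$.

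Two facts then give the bound.
\begin{itemize}
\item Because $\sigma^2=\operatorname{Var}H\asymp_\delta m$, Chebyshev puts mass at least $3/4$ in a window of width $O(\sqrt m)$. Log-concavity, with the ratio estimate near the mean, shows the mode $r_*$ satisfies $|r_*-m\theta|\le C$ and $\Prb(H=r_*)\ge c_\delta/\sqrt m$.
\item Telescoping the log-ratios from $r_*$ to $r$ gives
\[
\log\frac{\Prb(H=r_*)}{\Prb(H=r)}\le\sum_{|x|\le u\sqrt m+C}C_\delta\frac{|x|+1}{m}\le C_\delta(u^2+1).
\]
Since $u\ge1$, this is at most $2C_\delta u^2$, which proves \eqref{eq:localhyp}.
\end{itemize}
Stirling's formula applied directly to the three binomials would also work, but the ratio route avoids tracking relative-entropy terms.

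\emph{Consequence for $p_{ai}$.} Use the identity
\[
p_{ai}=\frac{\binom{i-1}{a-1}\binom{n-i}{\ell-a}}{\binom n\ell}=\frac{\ell}{n}\,\Prb\bigl(H=a-1\bigr),\qquad H\sim\operatorname{Hyp}(n-1,\,i-1,\,\ell-1).
\]
This holds because $\binom n\ell=\frac n\ell\binom{n-1}{\ell-1}$. Take $N=n-1$, $K=i-1$, $m=\ell-1$, so $m\to\infty$ and $m\le N/2$ eventually.

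Next check the hypotheses. The window \eqref{eq:orderwindow}, with $u=o(\sqrt\ell)$ and $a/\ell\in[\delta,1-\delta]$, gives $\theta=(i-1)/(n-1)\in[\delta/2,1-\delta/2]$ for large $n$. It also gives
\[
|a-1-m\theta|\le \ell\,\Bigl|\frac{a}{\ell}-\frac in\Bigr|+O(1+\ell/n)\le u\sqrt\ell+O(1)\le 2u\sqrt m.
\]
Applying \eqref{eq:localhyp} with $\delta/2$ and $2u$ then yields
\[
p_{ai}\ge\frac{\ell}{n}\cdot\frac{c}{\sqrt\ell}\,e^{-4Cu^2}=c_\delta\frac{\sqrt\ell}{n}\,e^{-C_\delta u^2},
\]
uniformly over the admissible $a$ and $i$.

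The main obstacle is keeping the constants uniform, so that $C_\delta$ does not depend on $N$, $K$ or $m$. This requires carefully bounding $\log\rho$ with an error linear in $|x|/m$ plus $O(1/m)$ over the whole window, using $m\le N/2$ to keep $N-K-m\gtrsim_\delta N$.
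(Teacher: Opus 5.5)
Your proposal is correct and follows essentially the same route as the paper: a Chebyshev/averaging lower bound $c/\sqrt m$ on the modal mass, the four-factor expansion of the consecutive-mass ratio giving $|\log\rho|\le C_\delta(|x|+1)/m$, telescoping from the mode to get $C_\delta(u^2+1)$, and then the identity $p_{ai}=(\ell/n)\,\Prb[\operatorname{Hyp}(n-1,i-1,\ell-1)=a-1]$ with the window rechecked for $\theta=(i-1)/(n-1)$. The only difference is cosmetic. You locate the mode through the signed ratio expansion rather than the explicit formula $\lfloor (m+1)(K+1)/(N+2)\rfloor$, and you could even skip this step by telescoping from any point of mass at least $c/\sqrt m$ in the Chebyshev window.
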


\begin{proof}
Write
\[
h(k)=\Prb(H=k).
\]
The hypergeometric mass function is log-concave and hence unimodal. Its variance satisfies
\[
\operatorname{Var}(H)
=
m\theta(1-\theta)
\frac{N-m}{N-1}
\leq
\frac m4.
\]
Chebyshev's inequality gives
\[
\Prb\bigl(|H-m\theta|\leq\sqrt m\bigr)
\geq
\frac34.
\]
Since this interval contains at most $2\sqrt m+2$ integers, any mode $r_0$ satisfies
\begin{equation}
h(r_0)
=
\max_k h(k)
\geq
\frac{c}{\sqrt m}.
\label{eq:modemass}
\end{equation}

The ratio of consecutive masses is
\begin{equation}
\frac{h(k+1)}{h(k)}
=
\frac{(K-k)(m-k)}
{(k+1)(N-K-m+k+1)}.
\label{eq:hypratio}
\end{equation}
A mode may be chosen as
\[
r_0
=
\left\lfloor
\frac{(m+1)(K+1)}{N+2}
\right\rfloor,
\]
with the adjacent integer also a mode in the tie case. Moreover,
\[
\left|r_0-m\theta\right|
\leq
3.
\]

Put $k=m\theta+s$. After cancelling the constant factors in
\eqref{eq:hypratio},
\[
\frac{h(k+1)}{h(k)}
=
\frac{
\left(1-\frac{s}{\theta(N-m)}\right)
\left(1-\frac{s}{(1-\theta)m}\right)
}{
\left(1+\frac{s+1}{\theta m}\right)
\left(1+\frac{s+1}{(1-\theta)(N-m)}\right)
}.
\]
If $|s|+1=o(m)$, all four factors remain uniformly bounded away from zero and infinity. Therefore
\begin{equation}
\left|
\log\frac{h(k+1)}{h(k)}
\right|
\leq
C_\delta\frac{|s|+1}{m}.
\label{eq:ratiobound}
\end{equation}

Every integer between $r_0$ and $r$ satisfies
\[
|k-m\theta|
\leq
u\sqrt m+O(1)
=
o(m).
\]
Summing \eqref{eq:ratiobound} from $r_0$ to $r$ gives
\[
\left|
\log\frac{h(r)}{h(r_0)}
\right|
\leq
C_\delta
\left(
\frac{|r-r_0|^2}{m}+1
\right)
\leq
C_\delta(u^2+1).
\]
Combining this estimate with \eqref{eq:modemass} proves
\eqref{eq:localhyp}, after adjusting $c_\delta$.

For the order-statistic consequence, use
\[
p_{ai}
=
\frac{\ell}{n}
\Prb\!\left[
\operatorname{Hyp}(n-1,i-1,\ell-1)
=
a-1
\right].
\]
Set
\[
N=n-1,
\qquad
K=i-1,
\qquad
m=\ell-1,
\qquad
r=a-1.
\]
Under \eqref{eq:orderwindow}, the centrality of $a$ and the condition
$u=o(\sqrt\ell)$ imply, uniformly,
\[
\frac{K}{N}
\in
[\delta/2,1-\delta/2]
\]
and
\[
\left|
r-m\frac{K}{N}
\right|
\leq
C_\delta u\sqrt\ell.
\]
Applying \eqref{eq:localhyp} with adjusted constants proves
\eqref{eq:pailower}.
\end{proof}

\begin{lemma}[Planted score lower tail]
\label{lem:plantedscore}
Assume that
\[
V_{n,\ell}\to\infty,
\qquad
\ell/n\to0.
\]
There is a deterministic sequence $a_n$ such that
\begin{equation}
a_n\to\infty,
\qquad
\frac{a_n^2}{V_{n,\ell}}\to\infty,
\qquad
\frac{a_n}{\sqrt{\ell/n}}\to\infty,
\label{eq:anprops}
\end{equation}
and
\begin{equation}
P\left(
T<\frac{a_n}{2}
\right)
=
o(1).
\label{eq:planttail}
\end{equation}
One may take
\begin{equation}
a_n
=
c\frac{V_{n,\ell}}
{[\log(e+V_{n,\ell})]^K}
\label{eq:an}
\end{equation}
for suitable constants $c,K>0$.
\end{lemma}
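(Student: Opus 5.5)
We split $T=m(S)+(T-m(S))$, where $m(S)=\E[T\mid S]$ as in \eqref{eq:mS}, and bound each part separately under the planted law. This gives two estimates:
\[
\text{(i)}\quad m(S)\ge a_n \text{ with probability } 1-o(1),
\qquad
\text{(ii)}\quad |T-m(S)|\le a_n/2 \text{ with probability } 1-o(1).
\]
Together they give \eqref{eq:planttail}.

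\textbf{Step (i): the planted mean.} Fix $\delta=1/4$ and apply Lemma~\ref{lem:bridge} with $u=u_n=\sqrt{K_0\log(e+V_{n,\ell})}$. This $u_n$ tends to infinity but is $o(\sqrt\ell)$, because $V_{n,\ell}\le C\ell^{3/2}/n\le C\sqrt\ell$. With probability $1-2e^{-c u_n^2}=1-o(1)$, every central rank $a\in[\delta\ell,(1-\delta)\ell]$ satisfies $|S_a/n-a/\ell|\le u_n/\sqrt\ell$. On this event Lemma~\ref{lem:localhyp} gives
\[
p_{a,S_a}\ge c\,\frac{\sqrt\ell}{n}\,e^{-Cu_n^2}.
\]
Summing over the $\asymp\ell$ central ranks yields
\[
m(S)\ge c\,\frac{\ell^{3/2}}{n}\,(e+V)^{-CK_0}\ge c'\,\frac{V_{n,\ell}}{\log(e+V_{n,\ell})^K},
\]
where we used Lemma~\ref{lem:kernel}. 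The polynomial factor $(e+V)^{-CK_0}$ must be absorbed, and it cannot be absorbed by a polylogarithm. I therefore expect that $u_n$ has to be chosen as $\sqrt{\log\log(e+V)}$ rather than $\sqrt{\log(e+V)}$. Then $e^{-Cu_n^2}=\log(e+V)^{-CK_0}$, which gives exactly the form \eqref{eq:an}, and the failure probability $2e^{-cu_n^2}$ still tends to zero.

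\textbf{Verifying \eqref{eq:anprops}.} Since $V\to\infty$, we have $a_n\to\infty$ and $a_n^2/V=V/\mathrm{polylog}(V)\to\infty$. For the third condition, $a_n/\sqrt{\ell/n}\ge a_n\to\infty$ because $\ell/n\to0$.

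\textbf{Step (ii): fluctuations.} Condition on $S$. The randomness now comes only from the uniform $X$, and $\eta_a=\xi_{S_a}$, so
\[
T=\sum_{i,j}(A_S)_{ij}\,\xi_i\xi_j,
\qquad
(A_S)_{ij}=\sum_{a:\,S_a=j}p_{ai}.
\]
Equivalently $A_S=P_0^{\top}$ with its columns re-indexed onto the positions of $S$. Hence
\[
\norm{A_S}_F^2\le\norm{P_0}_F^2=V_{n,\ell},
\qquad
\norm{A_S}_2^2\le\norm{P_0}_2^2=\ell/n,
\]
by \eqref{eq:Pnorm}. The diagonal part of the quadratic form is deterministic and equal to $m(S)$, so $T-m(S)$ is an off-diagonal Rademacher chaos. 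Its variance is at most $2\norm{A_S}_F^2\le 2V$; Chebyshev's inequality alone already gives
\[
\Prb\bigl(|T-m(S)|>a_n/2\mid S\bigr)\le \frac{8V}{a_n^2}\to0.
\]
Hanson--Wright gives the sharper tail $\exp\!\bigl[-c\min\bigl(a_n^2/V,\;a_n/(\ell/n)^{1/2}\bigr)\bigr]$, which is where the third condition in \eqref{eq:anprops} enters. The bound is uniform in $S$, so averaging over $S$ finishes step (ii).

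\textbf{Main obstacle.} The delicate point is step (i). Lemma~\ref{lem:localhyp} loses a factor $e^{-Cu^2}$, while Lemma~\ref{lem:bridge} needs $u\to\infty$ to make its failure probability vanish. These requirements must be balanced: take $u^2\asymp\log\log(e+V)$, and verify that $u=o(\sqrt\ell)$ and the large-$n$ conditions of both lemmas hold uniformly. This balancing is what produces the polylogarithmic loss in \eqref{eq:an}.
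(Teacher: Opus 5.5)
Your proposal is correct and follows the paper's proof. It uses the same split $T=m(S)+(T-m(S))$, the same combination of Lemmas~\ref{lem:bridge} and~\ref{lem:localhyp} with your corrected choice $u_n^2\asymp\log\log(e+V_{n,\ell})$, and the same matrix $A_S=P_0^{T}R_S$ with $\norm{A_S}_F^2\le V_{n,\ell}$. Your two simplifications are also valid: Chebyshev applied to the off-diagonal chaos, whose variance is at most $2V_{n,\ell}$, already suffices in place of Hanson--Wright, and the bound $a_n/\sqrt{\ell/n}\ge a_n$ settles the third condition in \eqref{eq:anprops} more directly than the paper's $n^{1/6}g_n$ argument.
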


\begin{proof}
Fix $\delta\in(0,1/4)$ and choose
\[
u_n^2
=
c_0\log\log(e+V_{n,\ell})
\]
with $c_0>0$ fixed. Since
\[
V_{n,\ell}\to\infty
\qquad\text{and}\qquad
V_{n,\ell}=O(\sqrt\ell),
\]
we have
\[
u_n\to\infty,
\qquad
u_n=o(\sqrt\ell).
\]
Lemma~\ref{lem:bridge} shows that, with probability $1-o(1)$,
\[
\left|
\frac{S_a}{n}
-
\frac{a}{\ell}
\right|
\leq
\frac{u_n}{\sqrt\ell}
\]
simultaneously for all
$a\in[\delta\ell,(1-\delta)\ell]$.
Lemma~\ref{lem:localhyp} then gives
\[
p_{a,S_a}
\geq
c_\delta
\frac{\sqrt\ell}{n}
e^{-C_\delta u_n^2}
\]
for every such rank. Summing over the central ranks yields
\begin{align}
m(S)
&\geq
c_\delta
\frac{\ell^{3/2}}{n}
e^{-C_\delta u_n^2}
\\
&\geq
c\frac{V_{n,\ell}}
{[\log(e+V_{n,\ell})]^K}
=:a_n,
\label{eq:mSlower}
\end{align}
where Lemma~\ref{lem:kernel} was used in the second line.

The first two limits in \eqref{eq:anprops} follow from
\[
\frac{V}{[\log(e+V)]^K}\to\infty,
\qquad
\frac{V}{[\log(e+V)]^{2K}}\to\infty.
\]
For the third, Lemma~\ref{lem:kernel} gives
\[
\frac{a_n}{\sqrt{\ell/n}}
\asymp
\frac{\ell/\sqrt n}
{[\log(e+V_{n,\ell})]^K}.
\]
Since $V_{n,\ell}\to\infty$ is equivalent to
$\ell/n^{2/3}\to\infty$ in the present regime, write
\[
\ell=n^{2/3}g_n,
\qquad
g_n\to\infty.
\]
The last expression is then
\[
\frac{n^{1/6}g_n}
{[\log(e+V_{n,\ell})]^K}.
\]
Since $V_{n,\ell}=O(\sqrt\ell)=O(\sqrt n)$, one has
$\log(e+V_{n,\ell})=O(\log n)$. Therefore
$n^{1/6}/(\log n)^K\to\infty$, and the displayed expression tends to infinity.

It remains to control the fluctuations of $T$ about its conditional mean. Let $R_S$ be the $\ell\times n$ selector matrix whose $a$th row is $e_{S_a}^T$, and define
\[
B_S=P_0^TR_S,
\qquad
A_S=\frac{B_S+B_S^T}{2}.
\]
Then
\[
T=\xi^TA_S\xi,
\qquad
\Tr A_S=m(S).
\]
Since symmetrization does not increase either the Frobenius or operator norm, and
$R_SR_S^T=I_\ell$,
\[
\norm{A_S}_F^2
\leq
\norm{B_S}_F^2
=
\norm{P_0}_F^2
=
V_{n,\ell},
\]
and
\[
\norm{A_S}_2
\leq
\norm{B_S}_2
\leq
\norm{P_0}_2\norm{R_S}_2
=
\sqrt{\ell/n}.
\]

On the event \eqref{eq:mSlower}, the Hanson--Wright inequality gives
\[
\Prb\left(
T<\frac{a_n}{2}
\,\middle|\,
S
\right)
\leq
2\exp\left[
-c\min\left\{
\frac{a_n^2}{V_{n,\ell}},
\frac{a_n}{\sqrt{\ell/n}}
\right\}
\right]
=
o(1).
\]
Adding the $o(1)$ probability that \eqref{eq:mSlower} fails proves
\eqref{eq:planttail}.
\end{proof}

\subsection{Uniform product separation and Sibson information}

Sibson information involves an optimization over all distributions $Q_Y$. It is therefore necessary to distinguish the planted law uniformly from every product law $P_XQ_Y$, where $P_X$ is the fixed uniform input marginal.

For $y\in\{0,1\}^{\ell}$, define
\begin{equation}
c_i(y)
=
\sum_a p_{ai}(-1)^{y_a},
\qquad
v(y)
=
\sum_i c_i(y)^2
=
\norm{P_0^T\eta(y)}_2^2.
\label{eq:vy}
\end{equation}

\begin{lemma}[Uniform separation from product laws]
\label{lem:uniformproduct}
Under the assumptions of Lemma~\ref{lem:plantedscore}, there are events
$\mathcal A_n$ such that
\begin{equation}
P(\mathcal A_n)\longrightarrow1,
\qquad
\sup_{Q_Y}(P_XQ_Y)(\mathcal A_n)\longrightarrow0.
\label{eq:uniformsep}
\end{equation}
\end{lemma}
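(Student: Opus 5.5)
I will take the separating event to be
\[
\mathcal A_n=\Bigl\{T(X,Y)\ge \tfrac{a_n}{2}\Bigr\}\cap\bigl\{v(Y)\le w_n V_{n,\ell}\bigr\},
\]
where $a_n$ is the sequence of Lemma~\ref{lem:plantedscore} and $w_n\to\infty$ is a slowly growing deterministic sequence, fixed below.

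First, $P(\mathcal A_n)\to1$. Lemma~\ref{lem:plantedscore} already gives $P(T<a_n/2)=o(1)$. For the second constraint I compute $\E_{\rm pl}v(Y)$. Under the planted law $Y$ is uniform, so the signs $\eta_a$ are independent Rademacher variables. Hence
\[
\E v(Y)=\sum_i\sum_{a,b}p_{ai}p_{bi}\E\eta_a\eta_b=\sum_{a,i}p_{ai}^2=V_{n,\ell}.
\]
Markov's inequality then gives $P(v(Y)>w_nV_{n,\ell})\le 1/w_n\to0$.

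Second, the uniform bound under product laws. Because the event only restricts $Y$ through $v(Y)\le w_nV_{n,\ell}$, it suffices to prove the pointwise estimate
\[
\sup_{y:\,v(y)\le w_nV_{n,\ell}}\ \Prb_{X\sim P_X}\bigl(T(X,y)\ge a_n/2\bigr)\to0 .
\]
Once this holds, averaging over any $Q_Y$ gives $\sup_{Q_Y}(P_XQ_Y)(\mathcal A_n)\to0$. For fixed $y$, the score $T(X,y)=\sum_i c_i(y)\xi_i$ is a linear Rademacher form, since $X$ is uniform and independent of $y$. Hoeffding's inequality gives
\[
\Prb\bigl(T\ge a_n/2\bigr)\le\exp\!\Bigl(-\frac{a_n^2}{8v(y)}\Bigr)\le\exp\!\Bigl(-\frac{a_n^2}{8w_nV_{n,\ell}}\Bigr).
\]
Choose $w_n=\sqrt{a_n^2/V_{n,\ell}}$. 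Then $w_n\to\infty$ by \eqref{eq:anprops}, and $a_n^2/(w_nV_{n,\ell})=\sqrt{a_n^2/V_{n,\ell}}\to\infty$. So the bound tends to zero uniformly over the admissible $y$, and therefore uniformly over $Q_Y$.

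The main obstacle is the first property. The planted tail bound of Lemma~\ref{lem:plantedscore} has to survive intersection with the variance truncation. This needs the gap $a_n^2\gg V_{n,\ell}$: this gap leaves room for a divergent $w_n$ that makes the truncation probability vanish while still beating the Hoeffding exponent. With the choice $a_n=cV_{n,\ell}/[\log(e+V_{n,\ell})]^K$ the ratio is $a_n^2/V_{n,\ell}\asymp V_{n,\ell}/\log^{2K}$, which diverges because $V_{n,\ell}\to\infty$. So the balancing choice of $w_n$ above works. No uniform-in-$y$ control of the planted law is required: the truncation is checked only under $P$, and under $P_XQ_Y$ it is used deterministically.
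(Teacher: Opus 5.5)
Your proposal is correct and matches the paper's proof essentially step for step. Your $w_n=a_n/\sqrt{V_{n,\ell}}$ is exactly the paper's $h_n$, and the separating event, the Markov bound $P(v(Y)>h_nV_{n,\ell})\le h_n^{-1}$, and the Hoeffding estimate $e^{-h_n/8}$, uniform over $Q_Y$, all coincide with the paper's argument.
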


\begin{proof}
The planted marginal of $Y$ is uniform. Hence
\begin{equation}
\E_P v(Y)
=
\E_\eta\eta^TP_0P_0^T\eta
=
\Tr(P_0P_0^T)
=
V_{n,\ell}.
\end{equation}
Let
\[
h_n
=
\frac{a_n}{\sqrt{V_{n,\ell}}}.
\]
By Lemma~\ref{lem:plantedscore}, $h_n\to\infty$. Define
\[
\mathcal A_n
=
\left\{
T\geq\frac{a_n}{2}
\right\}
\cap
\left\{
v(Y)\leq h_nV_{n,\ell}
\right\}.
\]
Lemma~\ref{lem:plantedscore} and Markov's inequality give
\[
P(\mathcal A_n^c)
\leq
o(1)+h_n^{-1}
=
o(1).
\]

Now fix a distribution $Q_Y$. Under $P_XQ_Y$, conditionally on
$Y=y$,
\[
T
=
\sum_i c_i(y)\xi_i
\]
is a centered Rademacher sum. Whenever
$v(y)\leq h_nV_{n,\ell}$, Hoeffding's inequality gives
\[
(P_XQ_Y)
\left(
T\geq\frac{a_n}{2}
\,\middle|\,
Y=y
\right)
\leq
\exp\left[
-\frac{a_n^2}
{8h_nV_{n,\ell}}
\right]
=
e^{-h_n/8}.
\]
The bound is independent of $Q_Y$, which proves
\eqref{eq:uniformsep}.
\end{proof}

For classical probability distributions, let
\[
\operatorname{Aff}(P,R)
=
\sum_z\sqrt{P(z)R(z)}
\]
denote the Hellinger affinity. For the quantum quantities, we follow the conventions of Ref.~\cite{ChengDupuisGao2024}:
\begin{align}
H_\alpha^*(A|B)_\rho
&=
-\inf_{\sigma_B}
D_\alpha^*
\left(
\rho_{AB}
\middle\|
I_A\otimes\sigma_B
\right),
\\
I_\alpha^*(A:B)_\rho
&=
\inf_{\sigma_B}
D_\alpha^*
\left(
\rho_{AB}
\middle\|
\rho_A\otimes\sigma_B
\right).
\label{eq:starconventions}
\end{align}
Here $D_\alpha^*$ is the sandwiched R\'enyi divergence.

\begin{proposition}[Divergence of the optimized order-$1/2$ information]
\label{prop:sibson}
If
\[
\ell/n\to0,
\qquad
\ell/n^{2/3}\to\infty,
\]
then
\begin{equation}
I_{1/2}^{\rm S}(X:Y)_P
\longrightarrow
\infty.
\label{eq:sibsonhalf}
\end{equation}
Consequently,
\begin{equation}
I_{2/3}^{*}(C':D)_\tau
\longrightarrow
\infty.
\label{eq:qmi}
\end{equation}
\end{proposition}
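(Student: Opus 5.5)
The plan is to first establish \eqref{eq:sibsonhalf} from the uniform separation of Lemma~\ref{lem:uniformproduct}, and then transfer it to the quantum Choi state by data processing and monotonicity in the R\'enyi order.

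\emph{Classical part.} Sibson's order-$\alpha$ information is
\[
I_\alpha^{\rm S}(X:Y)_P=\inf_{Q_Y}D_\alpha(P_{XY}\,\|\,P_XQ_Y).
\]
At $\alpha=1/2$ the R\'enyi divergence equals $-2\log\operatorname{Aff}(P,R)$. It therefore suffices to show that
\[
\sup_{Q_Y}\operatorname{Aff}(P_{XY},P_XQ_Y)\to0.
\]
I split the affinity sum over $\mathcal A_n$ and its complement and apply Cauchy--Schwarz on each piece:
\[
\operatorname{Aff}(P,R)\le\sqrt{P(\mathcal A_n)R(\mathcal A_n)}+\sqrt{P(\mathcal A_n^c)R(\mathcal A_n^c)}\le\sqrt{R(\mathcal A_n)}+\sqrt{P(\mathcal A_n^c)}.
\]
Here $R=P_XQ_Y$. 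By Lemma~\ref{lem:uniformproduct}, both terms tend to zero uniformly in $Q_Y$, since the bound on $R(\mathcal A_n)$ there does not depend on $Q_Y$. Taking the infimum over $Q_Y$ then gives $I_{1/2}^{\rm S}\to\infty$. This step is routine once the uniform event separation is available.

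\emph{Quantum transfer.} Let $\mathcal M$ denote the computational-basis measurement on $C'D$. By construction, $\mathcal M(\tau_{C'D})$ is the planted joint law of $(X,Y)$. One subtlety is whether the measured $C'$ marginal is really the uniform input and the measured $D$ outcome really the subsequence $X_S$. This holds because the Choi state is $\binom n\ell^{-1}\sum_S\Phi_{C'_SD}\otimes\pi_{C'_{S^c}}$, and measuring $\Phi$ in the computational basis on both sides yields perfectly correlated bits. For every $\sigma_D$, the measurement is a product channel, so
\[
\mathcal M(\tau_{C'}\otimes\sigma_D)=P_X\otimes Q_Y
\]
for some distribution $Q_Y$ (the diagonal of $\sigma_D$). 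Data processing for the sandwiched divergence, valid for $\alpha\ge1/2$ \cite{FrankLieb2013,Beigi2013}, then gives
\[
D_{1/2}^*\bigl(\tau\,\|\,\tau_{C'}\otimes\sigma_D\bigr)\ge D_{1/2}\bigl(P\,\|\,P_XQ_Y\bigr)\ge I_{1/2}^{\rm S}(X:Y)_P.
\]
On commuting (classical) outputs the sandwiched and Petz quantities coincide. Taking the infimum over $\sigma_D$ yields $I_{1/2}^*(C':D)_\tau\ge I^{\rm S}_{1/2}(X:Y)_P$. Finally, $D_\alpha^*$ is nondecreasing in $\alpha$, so $I_{2/3}^*\ge I_{1/2}^*$ after taking infima over the same $\sigma_D$. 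This proves \eqref{eq:qmi}.

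The main point needing care is the use of data processing at order $1/2$, which is exactly the endpoint of the range where sandwiched data processing holds. I would cite \cite{FrankLieb2013} for $\alpha\in[1/2,1)$. If that endpoint is a concern, I would instead run the classical argument at $\alpha=2/3$ directly. The affinity bound generalizes via Hölder: for $\alpha<1$, $\sum P^\alpha R^{1-\alpha}$ splits over $\mathcal A_n$ and its complement into terms bounded by $R(\mathcal A_n)^{1-\alpha}$ and $P(\mathcal A_n^c)^{\alpha}$, both uniformly small. This gives $I_{2/3}^{\rm S}\to\infty$ and avoids the endpoint altogether.
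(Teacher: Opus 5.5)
Your proposal is correct and follows the paper's proof essentially step for step. You use the same event decomposition of the Hellinger affinity via Lemma~\ref{lem:uniformproduct}, followed by R\'enyi-order monotonicity and computational-basis data processing; the only difference is the order of the last two steps. The paper applies monotonicity on the classical side ($I_{2/3}^{\rm S}\ge I_{1/2}^{\rm S}$) and then data processing at order $2/3$, whereas you apply data processing at the endpoint $\alpha=1/2$ (valid, since $D_{1/2}^*(\rho\|\sigma)=-2\log_2\norm{\sqrt\rho\sqrt\sigma}_1$ is a fidelity) and then use monotonicity of $D_\alpha^*$ in $\alpha$, and both orderings work.
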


\begin{proof}
For any event $\mathcal A$,
\[
\operatorname{Aff}(P,R)
\leq
\sqrt{P(\mathcal A)R(\mathcal A)}
+
\sqrt{P(\mathcal A^c)R(\mathcal A^c)}.
\]
Applying Lemma~\ref{lem:uniformproduct} with
$R=P_XQ_Y$ gives
\[
\sup_{Q_Y}
\operatorname{Aff}(P_{XY},P_XQ_Y)
\longrightarrow0.
\]
Since
\[
D_{1/2}(P\|R)
=
-2\log_2\operatorname{Aff}(P,R),
\]
it follows that
\[
I_{1/2}^{\rm S}(X:Y)_P
:=
\inf_{Q_Y}
D_{1/2}(P_{XY}\|P_XQ_Y)
\longrightarrow
\infty.
\]

Monotonicity in the R\'enyi order gives
\[
I_{2/3}^{\rm S}(X:Y)_P
\geq
I_{1/2}^{\rm S}(X:Y)_P.
\]
Computational-basis measurement maps the Choi state to $P_{XY}$. Every classical law $Q_Y$ is obtained by measuring a diagonal conditioning state on $D$. Data processing for the sandwiched R\'enyi divergence, followed by optimization over the conditioning state, therefore gives
\[
I_{2/3}^{*}(C':D)_\tau
\geq
I_{2/3}^{\rm S}(X:Y)_P.
\]
This proves \eqref{eq:qmi}.
\end{proof}

\subsection{Haar recovery by joint state--channel decoupling}

Let $C=MH$ be the $n$-qubit input. The $k$-qubit diary $M$ is maximally entangled with $R$, while the remaining register $H$ is maximally mixed after its purifier $E$ is traced out. Thus
\[
\rho_{CR}
=
\Phi_{MR}\otimes\pi_H,
\qquad
H_2^{*}(C|R)_\rho
=
n-2k.
\]

Let
\[
\D^c_{n\to\ell}:C\to F_{\rm env}
\]
be a complementary channel, and let
$\omega_{C'DF_{\rm env}}$ be its pure normalized Choi state. Pure-state duality for sandwiched conditional entropy gives
\[
H_2^{*}(C'|F_{\rm env})_\omega
=
-H_{2/3}^{*}(C'|D)_\omega.
\]
Since $\omega_{C'}=\pi_{C'}$,
\[
H_{2/3}^{*}(C'|D)_\omega
=
n-I_{2/3}^{*}(C':D)_\tau.
\]
Therefore
\begin{equation}
H_2^{*}(C|R)_\rho
+
H_2^{*}(C'|F_{\rm env})_\omega
=
I_{2/3}^{*}(C':D)_\tau-2k.
\label{eq:balance}
\end{equation}

\begin{proposition}[Supercritical Haar recovery]
\label{prop:achieve}
If
\[
\ell/n\to0,
\qquad
\ell/n^{2/3}\to\infty,
\]
then the optimal entanglement fidelity tends to one in probability over a Haar-random scrambling unitary.
\end{proposition}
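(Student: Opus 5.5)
We will derive Proposition~\ref{prop:achieve} from the joint state--channel decoupling theorem of Ref.~\cite{ChengDupuisGao2024}, using the entropy balance \eqref{eq:balance} as input.

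\textbf{Setting up the decoupling bound.} Dilate the deletion channel by a Stinespring isometry $C\to DF_{\rm env}$. Apply a Haar-random $U$ on $C$ to the state $\rho_{CR}=\Phi_{MR}\otimes\pi_H$ and then the complementary channel $\D^c_{n\to\ell}$. The decoupling theorem, in its Haar (or $2$-design) form with sandwiched R\'enyi-$2$ conditional entropies, bounds
\[
\E_U\,\tfrac12\bigl\|\D^c_{n\to\ell}(U\rho_{CR}U^\dagger)-\D^c_{n\to\ell}(\pi_C)\otimes\rho_R\bigr\|_1
\le
c_0\,2^{-\frac12\left[H_2^*(C|R)_\rho+H_2^*(C'|F_{\rm env})_\omega\right]}.
\]
Here $c_0$ is an absolute constant, which we will take in the form $\sqrt3$ up to the convention of \cite{ChengDupuisGao2024}; this is what produces the $\log_23$ term in the exponent quoted in Sec.~\ref{sec:proofideas}. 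We need to check one point: the theorem applies to our pair of states. The input marginal $\rho_C$ is maximally mixed, so the averaged channel output $\D^c(\pi_C)$ is the correct comparison state. Substituting \eqref{eq:balance} then gives the exponent $-\frac12[I_{2/3}^*(C':D)_\tau-2k+\log_23]$ (up to a shift by the fixed constant).

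\textbf{Divergence and concentration.} Proposition~\ref{prop:sibson} gives $I_{2/3}^*(C':D)_\tau\to\infty$. Since $k$ is fixed, the mean decoupling error $\epsilon_n:=\E_U\Delta(U)$ tends to zero. By Markov's inequality, $\Prob_U[\Delta(U)>\sqrt{\epsilon_n}]\le\sqrt{\epsilon_n}\to0$.

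\textbf{From decoupling to recovery.} Fix $U$ with $\Delta(U)\le\sqrt{\epsilon_n}$. The global state on $RDEF_{\rm env}$ is pure, and its $RF_{\rm env}$ marginal is $\sqrt{\epsilon_n}$-close in trace distance to $\pi_R\otimes\D^c(\pi_C)$. That product state has purification $\Phi_{R\widehat M}\otimes\psi_{F_{\rm env}\cdots}$, where the complement consists of $\widehat M$ together with a purifying system held on $DE$. By Fuchs--van de Graaf and Uhlmann's theorem, there is an isometry $V:DE\to\widehat M\,W$ that maps the actual purification to within purified distance $O(\epsilon_n^{1/4})$ of this target. Tracing out $W$ defines the recovery channel $\mathcal R$. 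The resulting entanglement fidelity is at least $1-O(\epsilon_n^{1/4})$, so $\Fopt(U;\ell)\to1$ in probability.

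The main obstacle is matching conventions in the first step. We must verify that the form of the theorem in \cite{ChengDupuisGao2024} uses exactly $H_2^*(C|R)$ and $H_2^*(C'|F_{\rm env})$ of the normalized Choi state of the complementary channel. We must also confirm that a Haar average is permitted and that the constant ($\log_23$ or similar) is independent of $n$. If the paper's theorem is instead phrased for a smoothed or order-$2$ conditional entropy of $\omega$, then the duality step $H_2^*(C'|F_{\rm env})=-H_{2/3}^*(C'|D)$ (already recorded before \eqref{eq:balance}) is exactly what bridges to Proposition~\ref{prop:sibson}. No smoothing should be needed, since the R\'enyi information itself diverges.
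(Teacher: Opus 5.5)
Your proposal is correct and follows the same route as the paper's proof. Both apply the Cheng--Dupuis--Gao joint decoupling theorem at R\'enyi order two to $\rho_{CR}$ and the complementary channel, substitute \eqref{eq:balance}, invoke Proposition~\ref{prop:sibson}, pass to convergence in probability by Markov's inequality, and conclude with Uhlmann's theorem; you spell out that last step via Fuchs--van de Graaf, which the paper leaves implicit. The one difference is the constant: the paper's bound has $+\log_2 3$ inside $-\tfrac12[\cdot]$, i.e.\ a prefactor $3^{-1/2}$ rather than your $\sqrt3$, but any $n$-independent constant gives the same limit.
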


\begin{proof}
Apply Theorem~2 of Ref.~\cite{ChengDupuisGao2024} at R\'enyi order $2$ to the state $\rho_{CR}$ and the complementary channel
$\D^c:C\to F_{\rm env}$. The theorem is stated using natural logarithms and a normalized Choi state. In the present base-two convention, it gives
\[
\frac12
\E_U
\norm{
(\D^c\otimes\id_R)
\left(
U\rho_{CR}U^\dagger
\right)
-
\omega_{F_{\rm env}}\otimes\rho_R
}_1
\leq
2^{-\frac12[
H_2^{*}(C|R)_\rho+
H_2^{*}(C'|F_{\rm env})_\omega+
\log_2 3]}.
\]
By \eqref{eq:balance} and Proposition~\ref{prop:sibson}, the right-hand side tends to zero for fixed $k$. Markov's inequality then gives decoupling in probability over the Haar-random scrambling unitary.

Restoring the purifier $E$ of $H$, Uhlmann's theorem, or equivalently decoupling--recovery duality, provides a recovery channel acting on $DE$ whose entanglement fidelity tends to one.
\end{proof}

\subsection{Operational fixed-error threshold}

The degradation identity
\[
\D_{n\to\ell}
=
\D_{\ell+1\to\ell}
\circ
\D_{n\to\ell+1}
\]
shows that $\Fopt(U;\ell)$ is nondecreasing in $\ell$.

Fix $\varepsilon$ and $\delta$ as in
\eqref{eq:operational-threshold}. Suppose first that no positive constant lower bound in
\eqref{eq:ordered-theta} existed. There would then be a sequence $n_j$ such that
\[
\frac{
\ell_{\rec}(n_j;\varepsilon,\delta)
}{
n_j^{2/3}
}
\longrightarrow0.
\]
The universal subcritical theorem implies that, along this sequence, the optimal fidelity remains asymptotically at the baseline $4^{-k}$, contradicting the definition of
$\ell_{\rec}$.

For the upper bound, suppose that no finite constant in
\eqref{eq:ordered-theta} existed. Choose $C_j\to\infty$ and a sequence $n_j\to\infty$ such that
\[
C_j n_j^{-1/3}\longrightarrow0
\]
and
\[
\ell_{\rec}(n_j;\varepsilon,\delta)
>
C_j n_j^{2/3}.
\]
Set
\[
\ell_j
=
\left\lfloor
C_j n_j^{2/3}
\right\rfloor.
\]
Then
\[
\frac{\ell_j}{n_j^{2/3}}\longrightarrow\infty,
\qquad
\frac{\ell_j}{n_j}\longrightarrow0.
\]
The supercritical theorem therefore implies that
\[
\Prob_{U\sim\mathrm{Haar}}
\left[
\Fopt(U;\ell_j)\geq1-\varepsilon
\right]
\longrightarrow1.
\]
For all sufficiently large $j$, this probability exceeds $1-\delta$, contradicting
$\ell_j<\ell_{\rec}(n_j;\varepsilon,\delta)$.

The two bounds prove Corollary~\ref{cor:ordered-operational}. No information about the limiting crossover at
$\ell=c\,n^{2/3}$ is needed.

\section{Full proof of the block-resolved phase diagram}
\label{app:block-full}

This appendix proves Theorem~\ref{thm:block} and derives the phase diagram in Corollary~\ref{cor:phase}.

\subsection{Model and operational quantities}

Let $n=Bm$, with the microscopic positions divided into $B$ consecutive blocks
$I_1,\ldots,I_B$, each of size $m$. A uniformly random subset
$S\subset[n]$ of size $\ell$ survives. Define
\[
L_b=|S\cap I_b|,
\qquad
L=(L_1,\ldots,L_B).
\]
The receiver is given the count vector $L$ in a classical register $\mathsf L$. The surviving qubits are identified with a common output register and arranged block by block, in increasing microscopic order within each block. The channel is
\[
\D^{(B)}_{n\to\ell}(\rho)
=
\frac1{\binom n\ell}
\sum_{\substack{S\subset[n]\\|S|=\ell}}
|L(S)\rangle\!\langle L(S)|_{\mathsf L}
\otimes
J_S\Tr_{S^c}(\rho)J_S^\dagger,
\]
where $J_S$ is the canonical order-preserving identification.

A feasible count vector satisfies
\[
0\leq L_b\leq m,
\qquad
\sum_{b=1}^B L_b=\ell,
\]
and occurs with probability
\[
p_L
=
\frac{\displaystyle\prod_{b=1}^B\binom m{L_b}}
{\displaystyle\binom n\ell}.
\]
Conditioned on $L$, the surviving subsets in different blocks are independent and uniformly distributed. Up to the canonical identifications of the output registers,
\begin{equation}
\left.\D^{(B)}_{n\to\ell}\right|_L
\simeq
\bigotimes_{b=1}^B\D_{m\to L_b}.
\label{eq:factorization}
\end{equation}

The input state is
\[
\rho_{CR}
=
\Phi_{MR}\otimes\pi_H,
\qquad
C=MH,
\]
and $E$ purifies $H$, as in \eqref{eq:hp-input}. After applying the known scrambling unitary $U$ and the flagged channel, the accessible state is
$\rho^U_{RDE\mathsf L}$. The optimal fidelity is given by
\eqref{eq:fopt-def}, with a recovery channel
\[
\mathcal R:DE\mathsf L\longrightarrow\widehat M
\]
that may depend on $U,n,\ell$, and $B$, and may be controlled by the observed value of $L$.

For a single block containing $m$ microscopic positions and $r$ survivors, define the order-statistic kernel
\[
p^{(m,r)}_{ai}
=
\frac{\binom{i-1}{a-1}\binom{m-i}{r-a}}
{\binom mr},
\qquad
1\leq a\leq r,
\]
and its collision strength
\[
V_{m,r}
=
\sum_{a=1}^r\sum_{i=1}^m
\left(p^{(m,r)}_{ai}\right)^2,
\qquad
V_{m,0}=0.
\]
For a branch $L$, let
\[
X(L)
=
\sum_{b=1}^B V_{m,L_b}.
\]

\subsection{Collision bounds for discrete log-concave laws}

\begin{lemma}[Log-concave anti-concentration]
Let $(\pi_j)_{j\in\mathbb Z}$ be a log-concave probability mass function with interval support. Let
\[
\sigma^2=\operatorname{Var}(Z),
\qquad
M=\max_j\pi_j.
\]
There are universal constants $c,C>0$ such that
\begin{equation}
\frac{c}{1+\sigma}
\leq
\sum_j\pi_j^2
\leq
M
\leq
\frac{C}{1+\sigma}.
\label{eq:logconcave-collision}
\end{equation}
In particular,
\begin{equation}
M\leq C\sum_j\pi_j^2.
\label{eq:logconcave-max-collision}
\end{equation}
\end{lemma}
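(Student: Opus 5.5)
The middle inequality $\sum_j\pi_j^2\le M\sum_j\pi_j=M$ is trivial. The two outer bounds come from one structural fact: a log-concave mass function carries a constant fraction of its mass in a window of length of order $1+\sigma$ around its mode, and it decays at least geometrically outside windows of that length. The plan is to prove the upper bound $M\le C/(1+\sigma)$ by a variance comparison, and the lower bound on the collision sum by Chebyshev plus Cauchy--Schwarz, exactly as in the proof of Lemma~\ref{lem:kernel}. The final claim \eqref{eq:logconcave-max-collision} then follows by combining the two outer bounds: $M\le C/(1+\sigma)\le (C/c)\sum_j\pi_j^2$.

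\emph{Lower bound.} By Chebyshev, $\Prob(|Z-\E Z|\le2\sigma)\ge3/4$. The interval $[\E Z-2\sigma,\E Z+2\sigma]$ contains at most $4\sigma+1$ integers. Cauchy--Schwarz over those integers gives
\[
\sum_j\pi_j^2\ge\frac{(3/4)^2}{4\sigma+1}\ge\frac{c}{1+\sigma}.
\]
This step uses nothing about log-concavity.

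\emph{Upper bound.} This is the real content, and the main obstacle is making the bound uniform with universal constants. Let $j_0$ be a mode, so $\pi_{j_0}=M$. Set $K=\lceil 1/(2M)\rceil$; a short window around $j_0$ cannot hold all the mass, so we need log-concave decay to the right and left of the mode. Write $\rho=\pi_{j_0+1}/\pi_{j_0}\le1$ for the first ratio to the right. Log-concavity makes the ratios $\pi_{j+1}/\pi_j$ nonincreasing, so $\pi_{j_0+t}\le M\rho^t$ for $t\ge0$, and similarly on the left. I first try the standard route: compare the distribution with a two-sided geometric envelope dominated by $M$. If $M\ge1/4$, the claim is trivial when $\sigma\lesssim1$, and otherwise Chebyshev concentration forces $M\lesssim1/\sigma$ anyway. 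So assume $M$ small. Let $A=\{j:\pi_j\ge M/e\}$; it is an interval containing $j_0$, since log-concave sequences have convex superlevel sets. Its length $|A|$ satisfies $|A|\cdot M/e\le1$, so $|A|\le e/M$. Also $|A|\ge 1/M$ up to constants: outside $A$, log-concavity together with the value $\ge M/e$ at the last points of $A$ forces geometric decay with rate $e^{-1/|A_\pm|}$ per step. Here $|A_\pm|$ are the half-lengths of $A$ on each side of $j_0$, obtained by concavity of $\log\pi$ along the chord from $j_0$ to the edge of $A$. Hence the tail mass is $\le C M|A|$, and summing to one gives $M|A|\ge c$. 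The same geometric tail bound yields $\Prob(|Z-j_0|\ge t|A|)\le Ce^{-ct}$. Integrating gives $\sigma^2\le \E(Z-j_0)^2\le C|A|^2\le C/M^2$, that is, $M\le C/\sigma$. Combined with $M\le1$, this gives $M\le C/(1+\sigma)$. The delicate point is the chord argument for the decay rate outside $A$: $\log\pi$ is concave, so for $t$ beyond the edge $e_+$ of $A$ on the right,
\[
\log\pi_{j_0+t}\le\log M-\frac{t}{e_+-j_0},
\]
and this slope is what produces the exponential tail at scale $|A|$. Interval support is used here so that $\log\pi$ is a genuinely concave function on an interval, with no interior zeros breaking the chord argument.
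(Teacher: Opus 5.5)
Your proposal is correct and is essentially the paper's proof. The lower bound is the same Chebyshev--Cauchy--Schwarz count. For the upper bound you, like the paper, find the distance $h\lesssim 1/M$ at which $\pi$ first falls a constant factor below the mode (you use the level $M/e$, the paper uses $M/2$), use log-concavity (your chord slopes, the paper's nonincreasing ratios $r_j$) to get geometric decay at scale $h$, and integrate to get $\sigma^2\le CM^{-2}$.

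Two small slips do not affect the argument. First, the preliminary reduction to small $M$ is unnecessary, and Chebyshev could not justify it anyway, since Chebyshev only bounds $M$ from below. Second, in the chord estimate $e_+$ must be the first point outside $A$, where $\pi<M/e$, not the last point inside it.
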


\begin{proof}
The middle inequality follows from
\[
\sum_j\pi_j^2
\leq
M\sum_j\pi_j
=
M.
\]
If $\sigma=0$, the distribution is concentrated at one point and the result is immediate. Suppose that $\sigma>0$. Chebyshev's inequality places at least $3/4$ of the mass in an interval containing at most $4\sigma+3$ integers. Cauchy--Schwarz therefore gives
\[
\sum_j\pi_j^2
\geq
\frac{(3/4)^2}{4\sigma+3}
\geq
\frac{c}{1+\sigma}.
\]

It remains to bound $M$ from above. Translate a mode to the origin, so that
$\pi_0=M$. On the nonnegative half-line, let $t$ be the first positive integer for which
$\pi_t\leq M/2$. If the support ends before this happens, take $t$ to be the first point outside the support. Since
\[
\pi_j>M/2
\qquad
(0\leq j<t),
\]
normalization implies
\[
t\leq\frac{2}{M}.
\]

If $t$ lies outside the support, the nonnegative tail vanishes and there is nothing to prove. Otherwise, set
\[
r_j=\frac{\pi_{j+1}}{\pi_j}
\qquad(0\leq j<t).
\]
Log-concavity implies that $(r_j)$ is nonincreasing. Since
\[
\prod_{j=0}^{t-1}r_j
=
\frac{\pi_t}{\pi_0}
\leq
\frac12,
\]
we have $r_{t-1}\leq2^{-1/t}$. Consequently, for every $s\geq0$,
\[
\pi_{t+s}
\leq
\pi_t\,2^{-s/t}
\leq
\frac{M}{2}\,2^{-s/t}.
\]
The contribution of $0\leq j<t$ to $\sum_j j^2\pi_j$ is at most $t^2$, while summing the geometric tail gives a contribution at most $CMt^3$. Both terms are $O(M^{-2})$. Applying the same argument to the negative half-line gives
\[
\sigma^2
\leq
\E Z^2
\leq
CM^{-2}.
\]
Thus $M\leq C/\sigma$ when $\sigma\geq1$. For $\sigma<1$, the trivial bound
$M\leq1$ completes \eqref{eq:logconcave-collision}.
\end{proof}

\subsection{Uniform single-block estimates}

For fixed rank $a$, the shifted order statistic
\[
Z_a=S_a-a
\]
has the beta-binomial mass
\[
\pi_a(z)
=
\frac{
\binom{z+a-1}{a-1}
\binom{m-a-z}{r-a}
}{
\binom mr
},
\qquad
0\leq z\leq m-r.
\]
The ratio of consecutive masses is a product of two nonincreasing factors, so
$\pi_a$ is log-concave. Its variance is
\begin{equation}
\sigma_a^2
=
\frac{(m-r)a(r+1-a)(m+1)}
{(r+1)^2(r+2)}.
\label{eq:beta-binomial-variance}
\end{equation}

\begin{lemma}[Uniform collision scale]
There are absolute constants $c,C>0$ such that, for every
$1\leq r\leq m$,
\begin{equation}
c\,
\frac{r^{3/2}}
{\sqrt{m(m-r+1)}}
\leq
V_{m,r}
\leq
C\,
\frac{r^{3/2}}
{\sqrt{m(m-r+1)}}.
\label{eq:uniform-V}
\end{equation}
In particular, when $r=o(m)$,
\begin{equation}
V_{m,r}
\asymp
\frac{r^{3/2}}{m}.
\label{eq:sparse-block-V}
\end{equation}
\end{lemma}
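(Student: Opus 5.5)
The plan is to estimate $V_{m,r}$ one rank at a time. For each rank $a$ we use $\sum_i (p^{(m,r)}_{ai})^2=\sum_z\pi_a(z)^2$. Since $\pi_a$ is log-concave with interval support, the log-concave anti-concentration lemma \eqref{eq:logconcave-collision} gives
\[
\sum_z\pi_a(z)^2\asymp\frac{1}{1+\sigma_a},
\]
with absolute constants. This reduces the problem to two steps: estimate $\sigma_a$ from the exact variance \eqref{eq:beta-binomial-variance}, and then sum $(1+\sigma_a)^{-1}$ over $1\le a\le r$.

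\emph{Variance estimate.} From \eqref{eq:beta-binomial-variance}, using $(r+1)^2(r+2)\asymp r^3$ and $m-r\asymp m-r+1$ whenever $m-r\ge1$,
\[
\sigma_a\asymp \frac{\sqrt{m(m-r)\,a(r+1-a)}}{r^{3/2}} .
\]
The case $r=m$ is degenerate. There $\sigma_a=0$ and each row is a point mass, so $V_{m,m}=m$. This agrees with the claimed scale $r^{3/2}/\sqrt{m\cdot 1}=m$.

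For $r<m$ we then split into two regimes.

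\emph{Case $\sigma_a\gtrsim 1$ for the bulk ranks.} Here
\[
\sum_a(1+\sigma_a)^{-1}\asymp \frac{r^{3/2}}{\sqrt{m(m-r)}}\sum_{a=1}^r\frac{1}{\sqrt{a(r+1-a)}}.
\]
The last sum is bounded above and below by constants, by comparison with $\int_0^1 du/\sqrt{u(1-u)}$ exactly as in Lemma~\ref{lem:kernel}. This gives the claimed order.

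\emph{Case $\sigma_a$ small.} Each term $(1+\sigma_a)^{-1}$ lies in $[1/(1+\sigma_a),1]$. We therefore need a uniform two-sided bound of the form
\[
\sum_a\min\{1,\sigma_a^{-1}\}\asymp\frac{r^{3/2}}{\sqrt{m(m-r+1)}} .
\]

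\emph{Upper bound.} It suffices to use $\sum_a\min(1,\sigma_a^{-1})\le\sum_a\sigma_a^{-1}$ together with the computation above. Note $m-r\ge1$ makes this comparable to the target. The bound $\min(1,\cdot)$ is only needed when $\sigma_a^{-1}$ is large, and there the trivial bound $\le r$ is matched by the target, since $r^{3/2}/\sqrt{m(m-r+1)}\ge c\,r$ in that regime.

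\emph{Lower bound.} Restrict to the central ranks $a\in[r/4,3r/4]$, where $\sigma_a\asymp s:=\sqrt{m(m-r)}/r^{1/2}$. The contribution is then $\gtrsim r\min(1,1/s)$. If $s\ge1$ this is $\asymp r^{3/2}/\sqrt{m(m-r)}$. If $s<1$ it is $\asymp r$. In the latter case $m(m-r)<r\le m$, so $m-r+1\le 2$, and $r\asymp m$ gives a target of $\asymp r$ as well. The edge cases $r=1,2$, where there is no central range, are handled directly: $V_{m,1}=1/m$, matching the target $1/m$.

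\emph{Sparse regime.} When $r=o(m)$, we have $m-r+1\asymp m$, which yields \eqref{eq:sparse-block-V}.

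The main obstacle is the uniform bookkeeping across regimes in the lower bound, especially near $r\approx m$, where $\sigma_a$ may be below $1$ and the point masses saturate. Checking that $\min(1,\sigma_a^{-1})$ summed over ranks matches $r^{3/2}/\sqrt{m(m-r+1)}$ in all cases, including small $r$, is the step requiring care.
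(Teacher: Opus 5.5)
Your argument is correct and follows essentially the paper's route: row-wise log-concave anti-concentration $\sum_i(p^{(m,r)}_{ai})^2\asymp(1+\sigma_a)^{-1}$, the exact beta-binomial variance, and separate treatment of $r=m$ and $r=1$. It differs only in the summation, where you bound $\sum_a(1+\sigma_a)^{-1}$ above by $\sum_a\sigma_a^{-1}$ (legitimate since $\sigma_a>0$ for $r<m$) and below by the central ranks, while the paper evaluates $\sum_{a\le r/2}(1+\beta\sqrt a)^{-1}\asymp r/(1+\beta\sqrt r)$ directly and uses $(\beta\sqrt r)^2\asymp m(m-r)/r\gtrsim1$; your remark that the target is $\ge c\,r$ whenever some $\sigma_a^{-1}$ is large is false in general (for $r=m-\lfloor\sqrt m\rfloor$ one has $\sigma_1<1$ while the target is of order $m^{3/4}\ll r$), but neither your upper nor your lower bound depends on it.
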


\begin{proof}
Applying the preceding lemma to each order-statistic row gives
\[
\sum_i\left(p_{ai}^{(m,r)}\right)^2
\asymp
\frac1{1+\sigma_a},
\]
with constants independent of $m,r$, and $a$.

Set
\[
d=m-r,
\qquad
R=r+1.
\]
If $d=0$, every order statistic is deterministic. Hence
\[
V_{m,r}=r,
\]
which agrees with \eqref{eq:uniform-V} up to universal constants.

Suppose that $d\geq1$. From
\eqref{eq:beta-binomial-variance},
\[
\sigma_a
=
A\sqrt{a(R-a)},
\qquad
A
=
\frac{\sqrt{d(m+1)}}{R\sqrt{r+2}}.
\]
The case $r=1$ follows directly from
\[
V_{m,1}
=
\sum_{i=1}^m\frac1{m^2}
=
\frac1m.
\]
We may therefore assume that $r\geq2$.

By symmetry, and using
\[
\frac{aR}{2}
\leq
a(R-a)
\leq
aR
\qquad
\left(1\leq a\leq\frac R2\right),
\]
we obtain
\[
\sum_{a=1}^r\frac1{1+\sigma_a}
\asymp
\sum_{a\leq r/2}
\frac1{1+\beta\sqrt a},
\qquad
\beta=A\sqrt R.
\]
An integral comparison, or a decomposition at
$a=\beta^{-2}$, gives
\[
\sum_{a\leq r/2}
\frac1{1+\beta\sqrt a}
\asymp
\frac{r}{1+\beta\sqrt r}.
\]
Moreover,
\[
(\beta\sqrt r)^2
=
\frac{d(m+1)r}{(r+1)(r+2)}
\asymp
\frac{dm}{r}.
\]
For $d\geq1$, this quantity is bounded below by a positive universal constant. It follows that
\[
\frac{r}{1+\beta\sqrt r}
\asymp
\frac{r^{3/2}}{\sqrt{dm}}
\asymp
\frac{r^{3/2}}{\sqrt{m(d+1)}}.
\]
Since $d+1=m-r+1$, summing the row collision probabilities proves
\eqref{eq:uniform-V}.
\end{proof}

\begin{lemma}[Pointwise score control]
For a uniformly distributed ordered $r$-subset
$S=(S_1<\cdots<S_r)$ of $[m]$, define
\[
q_{m,r}(S)
=
\sum_{a=1}^r p^{(m,r)}_{a,S_a}.
\]
Then
\begin{equation}
\E_S q_{m,r}(S)
=
V_{m,r},
\qquad
0\leq q_{m,r}(S)\leq C V_{m,r}
\label{eq:pointwise-score-control}
\end{equation}
for every $S$.
\end{lemma}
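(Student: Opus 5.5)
The identity and the upper bound both reduce to row-by-row statements about the order-statistic kernel, so I would treat each rank $a$ separately and then sum over $a$.

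For the expectation, I would use the definition $p^{(m,r)}_{ai}=\Prob(S_a=i)$ for a uniform ordered $r$-subset of $[m]$. By linearity,
\[
\E_S\,p^{(m,r)}_{a,S_a}=\sum_{i=1}^m \Prob(S_a=i)\,p^{(m,r)}_{ai}=\sum_{i=1}^m\bigl(p^{(m,r)}_{ai}\bigr)^2 .
\]
Summing over $a=1,\ldots,r$ gives $\E_S q_{m,r}(S)=V_{m,r}$. Nonnegativity of $q_{m,r}(S)$ is immediate because every entry $p^{(m,r)}_{ai}$ is a probability. The case $r=0$ is trivial, since both sides vanish with $V_{m,0}=0$.

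For the pointwise upper bound, I would bound each summand by the maximum of its row:
\[
p^{(m,r)}_{a,S_a}\le \max_i p^{(m,r)}_{ai}.
\]
The row $i\mapsto p^{(m,r)}_{ai}$ is a translate of the beta-binomial mass $\pi_a$ of the shifted order statistic $Z_a=S_a-a$. As noted before the uniform collision-scale lemma, $\pi_a$ is log-concave with interval support $\{0,\ldots,m-r\}$. The log-concave anti-concentration lemma, in the form \eqref{eq:logconcave-max-collision}, therefore gives
\[
\max_i p^{(m,r)}_{ai}\le C\sum_i\bigl(p^{(m,r)}_{ai}\bigr)^2,
\]
with a universal constant $C$ independent of $m$, $r$ and $a$. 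Summing over $a$ yields $q_{m,r}(S)\le C\,V_{m,r}$ for every $S$, uniformly in $m$ and $r$.

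The only step that needs care is the uniformity of $C$. This is exactly what the log-concave lemma supplies, provided log-concavity of $\pi_a$ holds for all $a$, including the degenerate cases $m=r$ and $a\in\{1,r\}$. When $m=r$, each row is a point mass, so $\max_i p_{ai}=\sum_i p_{ai}^2=1$ and the bound holds with $C=1$. Otherwise, log-concavity follows from the ratio $\pi_a(z+1)/\pi_a(z)$ being a product of two nonincreasing positive factors on the support, which I would verify directly. No estimate on $V_{m,r}$ itself, such as \eqref{eq:uniform-V}, is needed for this lemma.
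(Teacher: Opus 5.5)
Your proposal is correct and follows the paper's proof essentially step for step. The identity comes from linearity using $p^{(m,r)}_{ai}=\Prob(S_a=i)$, and the pointwise bound comes from replacing each summand by its row maximum and applying the log-concave comparison \eqref{eq:logconcave-max-collision} to each row; your checks of log-concavity and of the case $m=r$ only make explicit what the paper asserts just before the uniform collision-scale lemma.
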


\begin{proof}
Expanding the expectation gives
\[
\E_S q_{m,r}(S)
=
\sum_{a=1}^r\sum_{i=1}^m
p^{(m,r)}_{ai}
\Prb(S_a=i)
=
\sum_{a=1}^r\sum_{i=1}^m
\left(p^{(m,r)}_{ai}\right)^2
=
V_{m,r}.
\]

For each rank $a$, the log-concave collision bound gives
\[
\max_i p^{(m,r)}_{ai}
\leq
C\sum_i\left(p^{(m,r)}_{ai}\right)^2.
\]
Therefore
\[
q_{m,r}(S)
\leq
\sum_{a=1}^r\max_i p^{(m,r)}_{ai}
\leq
C V_{m,r},
\]
which proves the pointwise estimate.
\end{proof}

\subsection{The scale set by coarse side information}

Set
\[
u=\frac{\ell}{B},
\qquad
\Lambda_{n,\ell,B}
=
\begin{cases}
B\ell/n, & u\leq1,\\[2mm]
\sqrt B\,\ell^{3/2}/n, & u\geq1.
\end{cases}
\]

\begin{lemma}[Deterministic lower bound and averaged upper bound]
Assume that $\ell=o(n)$. For every feasible count vector $L$,
\begin{equation}
X(L)\geq c\Lambda_{n,\ell,B}.
\label{eq:det-lower}
\end{equation}
If $L$ has the multivariate hypergeometric law induced by a uniformly chosen surviving set, then
\begin{equation}
\E_LX(L)\leq C\Lambda_{n,\ell,B}.
\label{eq:mean-upper}
\end{equation}
\end{lemma}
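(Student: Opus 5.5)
The plan is to reduce both bounds to the uniform single-block estimate \eqref{eq:uniform-V} together with two trivial bounds on $V_{m,r}$. Each row $(p^{(m,r)}_{ai})_i$ is a probability vector on $m$ points. Cauchy--Schwarz therefore gives $\sum_i(p^{(m,r)}_{ai})^2\ge 1/m$, and the sum is also at most $1$. Consequently
\[
\frac{r}{m}\le V_{m,r}\le r .
\]
Writing $u=\ell/B$ for the mean block occupancy, all estimates become statements about $\sum_b L_b$ and $\sum_b L_b^{3/2}$.

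\emph{Deterministic lower bound.} For $u\le1$ I would sum the Cauchy--Schwarz bound over blocks:
\[
X(L)\ge\sum_b\frac{L_b}{m}=\frac{\ell}{m}=\frac{B\ell}{n}=\Lambda_{n,\ell,B}.
\]
For $u\ge1$ I would use the lower half of \eqref{eq:uniform-V} together with $m-r+1\le m$, which gives $V_{m,r}\ge c\,r^{3/2}/m$. Convexity of $r\mapsto r^{3/2}$ (Jensen over the $B$ blocks, subject to $\sum_bL_b=\ell$) then gives
\[
X(L)\ge\frac{c}{m}\,B\Bigl(\frac{\ell}{B}\Bigr)^{3/2}=c\,\frac{\sqrt B\,\ell^{3/2}}{n}.
\]
Both cases hold for every feasible $L$, with no use of $\ell=o(n)$.

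\emph{Averaged upper bound.} I would split according to whether $L_b\le m/2$. If $r\le m/2$, then $m-r+1\ge m/2$, and \eqref{eq:uniform-V} gives $V_{m,r}\le C r^{3/2}/m$. If $r>m/2$, I would use $V_{m,r}\le r\le m$. Hence
\[
\E_L X(L)\le \frac{CB}{m}\,\E L_1^{3/2}+Bm\,\Prb(L_1>m/2),
\]
where I use exchangeability of the blocks, and each $L_b\sim\operatorname{Hyp}(n,m,\ell)$ has mean $u$ and variance at most $u$.

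For the main term I would use $\E L_1^{3/2}\le(\E L_1^2)^{3/4}\le(u+u^2)^{3/4}\le Cu^{3/2}$ when $u\ge1$. When $u\le1$ I would instead use $L^{3/2}\le L^2$ for integers, which gives $\E L_1^{3/2}\le u+u^2\le 2u$. Multiplying by $B/m$ yields $C\sqrt B\,\ell^{3/2}/n$ in the first case and $C\ell/m=C\,B\ell/n$ in the second, i.e.\ $C\Lambda_{n,\ell,B}$ in both.

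\emph{Main obstacle.} The hard part is the large-block tail $Bm\,\Prb(L_1>m/2)$, which must be shown to be $O(\Lambda_{n,\ell,B})$ even though the prefactor $Bm$ is large. Since $u/m=\ell/n\to0$, I would first try Hoeffding's comparison of the hypergeometric with $\operatorname{Bin}(m,\ell/n)$, whose Chernoff bounds transfer. This gives
\[
\Prb(L_1>m/2)\le\binom{m}{\lceil m/2\rceil}\Bigl(\frac{\ell}{n}\Bigr)^{m/2}\le \Bigl(\frac{4\ell}{n}\Bigr)^{m/2}.
\]
For $m\ge4$ this is at most $(4\ell/n)^{2}\le C(\ell/n)\,(\ell/n)$. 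The tail is therefore bounded by
\[
Bm\cdot C\,\frac{\ell}{n}\cdot\frac{\ell}{n}=C\,\frac{B\ell}{n}\cdot u\cdot\frac{B}{n}\,m,
\]
and this is $o(B\ell/n)\le o(\Lambda_{n,\ell,B})$ once $\ell/n\to0$. The small cases $m\le3$ I would handle directly: there $V_{m,r}\le r\le3$ and $\Lambda_{n,\ell,B}\ge c\,\ell/m$. I expect the bookkeeping of these edge cases, and of the crossover $u\asymp1$, to be the only delicate points.
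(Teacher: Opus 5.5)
\textbf{Gap in the large-block tail.} Your deterministic lower bound is correct and is the paper's argument. Your bound on the main term $\frac{CB}{m}\E L_1^{3/2}$ is also correct. The gap is in the large-block tail, which you correctly flagged as the main obstacle.

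After replacing $(4\ell/n)^{m/2}$ by $(4\ell/n)^2$, your tail estimate becomes $Bm\cdot 16(\ell/n)^2=16\ell^2/n$. Since $Bm=n$, the factor $\frac{B}{n}m$ in your display equals one. The bound is therefore $16\,u\,B\ell/n$, not $o(B\ell/n)$.
\begin{itemize}
\item For $u\le1$ it is still at most $16\Lambda_{n,\ell,B}$, so that branch survives.
\item For $u\ge1$ one has $\Lambda_{n,\ell,B}=\sqrt u\,B\ell/n$, so your tail bound equals $16\sqrt u\,\Lambda_{n,\ell,B}$. This is unbounded relative to $\Lambda_{n,\ell,B}$ when $u\to\infty$. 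For $B=1$ it reads $\ell^2/n$ against $\Lambda_{n,\ell,B}=\ell^{3/2}/n$.
\end{itemize}
The case $u\to\infty$ is exactly the many-survivors-per-block regime that produces the $n^{2/3}B^{-1/3}$ branch. So the averaged upper bound is not established there. (The paper avoids any tail for $u\le1$ by using $\max_i p^{(m,r)}_{ai}\le r/m$, hence $V_{m,r}\le r^2/m$ for all $r$. Your route for that branch is also fine.)

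\textbf{The missing idea.} You must keep the exponent $m/2$ and use that $m$ is large in this regime. Put $p=\ell/n\to0$. Then $u=pm\ge1$ forces $m\ge 1/p$, and $\Lambda_{n,\ell,B}=Bu^{3/2}/m\ge B/m$. Hence
\[
Bm\,(4p)^{m/2}\le \Lambda_{n,\ell,B}\; m^2(4p)^{m/2}.
\]
For small $p$, the map $m\mapsto m^2(4p)^{m/2}$ is decreasing on $m\ge1/p$, since its logarithmic derivative is $2/m+\tfrac12\log(4p)\le 2p+\tfrac12\log(4p)<0$. The last factor is therefore at most $p^{-2}(4p)^{1/(2p)}\to0$, so the tail is $o(\Lambda_{n,\ell,B})$. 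This is the paper's step, carried out with $(2ep)^{m/2}$ and the observation $m\ge p^{-1}$. With this correction your argument goes through.
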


\begin{proof}
Each order-statistic row is supported on at most $m$ points, so its collision probability is at least $1/m$. It follows that
\[
V_{m,r}\geq\frac{r}{m}.
\]
When $u\leq1$,
\[
X(L)
\geq
\frac1m\sum_{b=1}^B L_b
=
\frac{B\ell}{n}.
\]

When $u\geq1$, the lower bound in \eqref{eq:uniform-V} gives
\[
V_{m,r}
\geq
c\frac{r^{3/2}}{m}.
\]
The convexity of $r^{3/2}$ then yields
\[
X(L)
\geq
\frac{c}{m}\sum_{b=1}^B L_b^{3/2}
\geq
\frac{c}{m}
B\left(\frac{\ell}{B}\right)^{3/2}
=
c\frac{\sqrt B\,\ell^{3/2}}{n}.
\]
This proves \eqref{eq:det-lower}.

For the upper bound, let
\[
H\sim\operatorname{Hyp}(n,m,\ell).
\]
By symmetry of the blocks,
\[
\E_LX(L)
=
B\,\E V_{m,H},
\qquad
\E H=u.
\]
The identity
\[
p_{ai}^{(m,r)}
=
\frac{r}{m}
\Prb\!\left\{
\operatorname{Hyp}(m-1,i-1,r-1)=a-1
\right\}
\]
implies
\[
\max_i p_{ai}^{(m,r)}
\leq
\frac{r}{m}.
\]
Since the entries in each row sum to one,
\[
V_{m,r}
=
\sum_a\sum_i\left(p_{ai}^{(m,r)}\right)^2
\leq
\sum_a\max_i p_{ai}^{(m,r)}
\leq
\frac{r^2}{m}.
\]

Suppose first that $u\leq1$. Since
\[
\operatorname{Var}(H)\leq u,
\]
we have
\[
\E H^2
\leq
u^2+u
\leq
2u.
\]
Therefore
\[
\E V_{m,H}
\leq
\frac{\E H^2}{m}
\leq
C\frac{u}{m},
\]
and multiplication by $B$ gives the first branch of
\eqref{eq:mean-upper}.

Now assume that $u\geq1$. On the event $H\leq m/2$, the upper bound in
\eqref{eq:uniform-V} gives
\[
V_{m,H}
\leq
C\frac{H^{3/2}}{m}.
\]
Moreover,
\[
\E H^2
\leq
u^2+u
\leq
2u^2.
\]
Lyapunov's inequality therefore gives
\[
\E\!\left[
H^{3/2}\bm1_{\{H\leq m/2\}}
\right]
\leq
\E H^{3/2}
\leq
(\E H^2)^{3/4}
\leq
C u^{3/2}.
\]

It remains to control the event $H>m/2$. Since
\[
p=\frac{\ell}{n}=\frac{u}{m}=o(1),
\]
a hypergeometric Chernoff bound gives
\[
\Prb(H>m/2)
\leq
(2ep)^{m/2}
\]
for all sufficiently large $n$. On this event we use the trivial bound
\[
V_{m,H}\leq H\leq m.
\]
Because $u=mp\geq1$, one has $m\geq p^{-1}$. Furthermore,
\[
\frac{
m(2ep)^{m/2}
}{
u^{3/2}/m
}
=
\frac{\sqrt m}{p^{3/2}}(2ep)^{m/2}
\longrightarrow0.
\]
Indeed, the logarithm of the right-hand side is
\[
\frac m2\log(2ep)
+\frac12\log m
-\frac32\log p.
\]
Since $m\geq p^{-1}$, the magnitude of the negative first term is at least of order
$|\log p|/p$, which dominates the remaining logarithmic terms as $p\to0$. Hence
\[
m\,\Prb(H>m/2)
=
o\!\left(\frac{u^{3/2}}{m}\right).
\]
Combining the two parts gives
\[
\E V_{m,H}
\leq
C\frac{u^{3/2}}{m}.
\]
Multiplying by $B$ proves the second branch of
\eqref{eq:mean-upper}.
\end{proof}

\subsection{Branchwise converse}

Let $\tau_{m,r}$ be the normalized Choi state of
$\D_{m\to r}$ and define
\[
G_{m,r}
=
2^{m+r}\Tr\!\left(\tau_{m,r}^2\right).
\]
For $r=0$, one has $G_{m,0}=1$. For $r\geq1$, the uniform factorial-moment estimate
\eqref{eq:factorial}, together with \eqref{eq:uniform-V}, implies that there are constants
$v_0,C>0$ such that
\begin{equation}
V_{m,r}\leq v_0
\quad\Longrightarrow\quad
G_{m,r}-1
\leq
C V_{m,r}.
\label{eq:G-V}
\end{equation}

To verify that the constants are uniform, choose $v_0$ sufficiently small. Since
\[
V_{m,r}\geq\frac{r}{m},
\]
the condition $V_{m,r}\leq v_0$ then implies $r/m\leq1/2$. In this range,
\eqref{eq:uniform-V} gives
\[
\frac{(r+1)^{3/2}}{m}
\leq
C_0V_{m,r}.
\]
Substituting the factorial-moment estimate into
\[
4^M
=
\sum_{s\geq0}
3^s\frac{(M)_s}{s!}
\]
and summing the resulting geometric series proves \eqref{eq:G-V}.

\begin{proposition}[Branchwise converse]
There are constants $v_0,C>0$ such that every feasible branch satisfying
$X(L)\leq v_0$ obeys
\begin{equation}
F_{\rm e}^{\rm opt}(U\mid L)
\leq
4^{-k}
+
\frac12
\sqrt{
e^{CX(L)}-1
}
\label{eq:branch-converse-bound}
\end{equation}
for every scrambling unitary $U$. Consequently, for any sequence of feasible branches with
$X(L)\to0$,
\[
F_{\rm e}^{\rm opt}(U\mid L)
\longrightarrow
4^{-k}
\]
uniformly over $U$.
\end{proposition}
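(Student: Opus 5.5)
The plan is to mirror the proof of Proposition~\ref{prop:converse}, using the conditional factorization \eqref{eq:factorization} to reduce everything to single-block purities.

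\textbf{Purity factorization.} Conditioned on the branch $L$, the channel is $\bigotimes_b\D_{m\to L_b}$. Its normalized Choi state is the tensor product $\bigotimes_b\tau_{m,L_b}$, up to the canonical reordering of the input and output registers. The normalized purity $2^{n+\ell}\Tr(\cdot^2)$ is multiplicative under tensor products, and the reordering is unitary. Hence the branch Choi state $\tau_L$ satisfies
\[
G_L:=2^{n+\ell}\Tr(\tau_L^2)=\prod_{b=1}^B G_{m,L_b}.
\]
Here blocks with $L_b=0$ contribute $G_{m,0}=1$, since their Choi state is already maximally mixed.

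\textbf{Exponential bound on $G_L$.} Choose $v_0$ as in \eqref{eq:G-V}. If $X(L)\le v_0$, then each summand satisfies $V_{m,L_b}\le X(L)\le v_0$, because all $V$'s are nonnegative. Applying \eqref{eq:G-V} blockwise gives $G_{m,L_b}\le 1+CV_{m,L_b}\le e^{CV_{m,L_b}}$. Taking the product yields $G_L\le e^{CX(L)}$.

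\textbf{Trace distance and fidelity.} The branch Choi state has maximally mixed marginals, because each factor does. The Hilbert--Schmidt-to-trace-norm estimate from Proposition~\ref{prop:converse} then gives
\[
\tfrac12\norm{\tau_L-\pi}_1\le\tfrac12\sqrt{G_L-1}\le\tfrac12\sqrt{e^{CX(L)}-1}.
\]
The rest follows Proposition~\ref{prop:converse} verbatim:
\begin{itemize}
\item the old-black-hole input is maximally entangled between $C$ and $RE$, so the output on the branch is the Choi output, up to a unitary on the reference;
\item precomposition by $U$ does not change this, and any recovery map on $DE$ (with $L$ fixed) is trace-distance contractive;
\item the product state $\pi\otimes\pi$ yields overlap exactly $4^{-k}$ with $\Phi_{R\widehat M}$.
\end{itemize}
This proves \eqref{eq:branch-converse-bound} for every $U$. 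Since $X(L)\to0$ makes the right-hand side tend to $4^{-k}$ independently of $U$, and $4^{-k}$ is achievable by preparing a maximally mixed state, uniform convergence follows.

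\textbf{Main obstacle.} The delicate point is only the uniformity of \eqref{eq:G-V} across blocks of arbitrary sizes $L_b$, including $L_b=1$ and small $m$. This is already settled in the text preceding the statement, via $V_{m,r}\ge r/m$ forcing $r\le m/2$. I would nonetheless double-check that the hypothesis $\ell/n\le1/2$ of Lemma~\ref{lem:factorial} is met in the form $r/m\le1/2$, so that the lemma is applied with $(m,r)$ in place of $(n,\ell)$.
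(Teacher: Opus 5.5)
Your proposal is correct and follows the paper's proof: you factorize the branch Choi purity as $G(L)=\prod_b G_{m,L_b}$, apply \eqref{eq:G-V} blockwise using $V_{m,L_b}\le X(L)\le v_0$ to obtain $G(L)\le e^{CX(L)}$, and then invoke the Choi trace-distance converse of Proposition~\ref{prop:converse}, with the trivial decoder supplying the matching lower bound $4^{-k}$. Your check that the hypothesis of Lemma~\ref{lem:factorial} holds is settled the same way the paper does it, via $r/m\le V_{m,r}\le v_0\le 1/2$.
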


\begin{proof}
Conditioned on $L$, the channel factorizes according to
\eqref{eq:factorization}. Its normalized Choi state is therefore a tensor product, and its normalized purity is
\[
G(L)
=
\prod_{b=1}^B G_{m,L_b}.
\]
If $X(L)\leq v_0$, then
\[
V_{m,L_b}
\leq
X(L)
\leq
v_0
\]
for every block. Equation~\eqref{eq:G-V} gives
\[
G(L)
\leq
\prod_{b=1}^B
\left(1+CV_{m,L_b}\right)
\leq
\exp\left(
C\sum_{b=1}^B V_{m,L_b}
\right)
=
e^{CX(L)}.
\]
The Choi trace-distance converse from
Proposition~\ref{prop:converse} then gives
\[
F_{\rm e}^{\rm opt}(U\mid L)
\leq
4^{-k}
+
\frac12\sqrt{G(L)-1},
\]
which implies \eqref{eq:branch-converse-bound}. The lower bound
\[
F_{\rm e}^{\rm opt}(U\mid L)\geq4^{-k}
\]
is achieved by discarding the decoder input and preparing the maximally mixed diary. The stated limit follows.
\end{proof}

\begin{proposition}[Global converse]
If
\[
\Lambda_{n,\ell,B}\longrightarrow0,
\]
then
\[
F_{\rm e}^{\rm opt}(U;\ell,B)
\longrightarrow
4^{-k}
\]
uniformly over the scrambling unitary $U$.
\end{proposition}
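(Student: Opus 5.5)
The global fidelity is an average over branches, and I will bound it by combining the branchwise converse with the averaged upper bound \eqref{eq:mean-upper} and Markov's inequality. Since the flag $L$ is classical and revealed to the decoder, the optimal recovery decomposes branch by branch:
\[
F_{\rm e}^{\rm opt}(U;\ell,B)=\sum_L p_L\,F_{\rm e}^{\rm opt}(U\mid L).
\]
To justify this, I would observe that any recovery channel $\mathcal R:DE\mathsf L\to\widehat M$ acts on a block-diagonal state. It can therefore be replaced by the family of channels obtained by conditioning on $\ket{L}$, each of which is individually optimizable. The conditional state on $RDE$ given $L$ is the Hayden--Preskill input passed through $U$ and then through $\bigotimes_b\D_{m\to L_b}$.

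Fix a threshold $\eta_n\to0$ chosen so that $\Lambda_{n,\ell,B}/\eta_n\to0$; for instance $\eta_n=\sqrt{\Lambda_{n,\ell,B}}$, taken at most $v_0$ for large $n$. I split the branches into the good set $\{X(L)\le\eta_n\}$ and its complement. On the good set the branchwise converse \eqref{eq:branch-converse-bound} applies, giving
\[
F_{\rm e}^{\rm opt}(U\mid L)\le 4^{-k}+\tfrac12\sqrt{e^{C\eta_n}-1}
\]
uniformly in $U$. On the bad set I use the trivial bound $F\le1$. Its probability is controlled by Markov's inequality together with \eqref{eq:mean-upper}:
\[
\Prob_L\bigl(X(L)>\eta_n\bigr)\le\frac{\E_L X(L)}{\eta_n}\le\frac{C\Lambda_{n,\ell,B}}{\eta_n}\to0.
\]
Summing the two contributions yields
\[
F_{\rm e}^{\rm opt}(U;\ell,B)\le 4^{-k}+\tfrac12\sqrt{e^{C\eta_n}-1}+C\sqrt{\Lambda_{n,\ell,B}}\to4^{-k},
\]
uniformly in $U$. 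The matching lower bound $4^{-k}$ comes from the discard-and-prepare decoder.

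The one point needing care is the branch decomposition of the optimal fidelity, since $F_{\rm e}^{\rm opt}(U\mid L)$ must be exactly the quantity bounded in the branchwise converse. That converse was proved via the Choi trace-distance argument of Proposition~\ref{prop:converse} applied to the conditional product channel, so I only need to check that conditioning on $L$ leaves the reference-side structure (the maximally entangled $C'\simeq RE$ identification) unchanged. It does, because $L$ is independent of the input. Everything else is the routine Markov splitting.
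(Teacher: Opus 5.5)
Your proposal is correct and follows the paper's proof step for step. Both use the decomposition $F_{\rm e}^{\rm opt}(U;\ell,B)=\sum_L p_L F_{\rm e}^{\rm opt}(U\mid L)$ over the classical flag, the branchwise converse on $\{X(L)\le\eta_n\}$ with $\eta_n\le v_0$, the trivial bound $F_{\rm e}^{\rm opt}\le1$ on the complement, and Markov's inequality together with $\E_L X(L)\le C\Lambda_{n,\ell,B}$. The only difference is cosmetic: you take $\eta_n=\sqrt{\Lambda_{n,\ell,B}}$ where the paper takes $\eta_n=\sqrt{\E_L X(L)}$, and either choice gives a $U$-independent error of order $\Lambda_{n,\ell,B}^{1/4}$.
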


\begin{proof}
Equation~\eqref{eq:mean-upper} gives
\[
\E_LX(L)\longrightarrow0.
\]
Choose a deterministic sequence $\eta_n\downarrow0$ such that
\[
\Prb\{X(L)>\eta_n\}\longrightarrow0.
\]
For example, whenever $\E_LX(L)>0$, one may take
\[
\eta_n=\sqrt{\E_LX(L)}
\]
and apply Markov's inequality.

Because the flag is classical, the recovery channel may be chosen independently on each orthogonal branch. Hence
\[
F_{\rm e}^{\rm opt}(U;\ell,B)
=
\sum_L
p_L
F_{\rm e}^{\rm opt}(U\mid L).
\]
For sufficiently large $n$, one has $\eta_n\leq v_0$. On the event
$X(L)\leq\eta_n$, Equation~\ref{eq:branch-converse-bound} gives
\[
F_{\rm e}^{\rm opt}(U\mid L)
\leq
4^{-k}
+
\frac12
\sqrt{
e^{C\eta_n}-1
}.
\]
On the complementary event, we use the bound
$F_{\rm e}^{\rm opt}\leq1$. Therefore
\begin{align}
F_{\rm e}^{\rm opt}(U;\ell,B)
&\leq
4^{-k}
+
\frac12\sqrt{e^{C\eta_n}-1}
+
\Prb\{X(L)>\eta_n\}
\\
&=
4^{-k}+o(1),
\end{align}
uniformly over $U$. The trivial decoder gives the matching lower bound
$4^{-k}$, completing the proof.
\end{proof}

\subsection{Branchwise achievability}

Fix a feasible branch $L$. For each block, let
\[
P_b=\left(p_{ai}^{(m,L_b)}\right)_{a,i},
\qquad
V_b=V_{m,L_b},
\]
and define
\[
V=\sum_{b=1}^B V_b=X(L),
\qquad
V_{\max}=\max_{1\leq b\leq B}V_b.
\]
After computational-basis measurement, the conditional channel is a product of independent planted-subsequence models. We denote the input and output Rademacher vectors in block $b$ by $\xi_b$ and $\eta_b$.

\begin{lemma}[Reduction to a single block]
\label{lem:single-block-reduction}
Let $(m,r)$ be any sequence such that
\[
V_{m,r}\longrightarrow\infty.
\]
Then the measured Choi distribution of $\D_{m\to r}$ satisfies
\[
I_{1/2}^{\rm S}(X:Y)\longrightarrow\infty.
\]
\end{lemma}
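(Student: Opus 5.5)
The plan is to reduce everything to Proposition~\ref{prop:sibson}, which already gives $I_{1/2}^{\rm S}(X:Y)\to\infty$ for the measured Choi distribution of $\D_{m\to r}$ whenever $r/m\to0$ and $r/m^{2/3}\to\infty$. There are two things to handle. First, the only hypothesis here is $V_{m,r}\to\infty$, which allows dense branches with $r$ comparable to $m$. Second, the claim concerns an arbitrary sequence $(m,r)$. I would therefore use a subsequence argument. It suffices to show that every subsequence has a further subsequence along which $I_{1/2}^{\rm S}(X:Y)\to\infty$. Passing to a subsequence, I may assume either $r/m\to0$ or $r/m\geq c_0>0$ throughout. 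Since $V_{m,r}\leq\sum_a\max_ip^{(m,r)}_{ai}\leq r\leq m$, the hypothesis $V_{m,r}\to\infty$ forces $m\to\infty$ and $r\to\infty$.

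\emph{Sparse case, $r/m\to0$.} Here the uniform collision scale \eqref{eq:uniform-V} gives $V_{m,r}\asymp r^{3/2}/m$ with absolute constants. Hence $V_{m,r}\to\infty$ is equivalent to $r/m^{2/3}\to\infty$. Both hypotheses of Proposition~\ref{prop:sibson} then hold, with $(n,\ell)$ replaced by $(m,r)$, and the conclusion follows directly.

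\emph{Dense case, $r\geq c_0m$.} Here I would not redo the planted-score analysis, because Lemmas~\ref{lem:bridge}--\ref{lem:plantedscore} used $\ell/n\to0$. Instead I would use the degradation identity \eqref{eq:degradation-single}, iterated:
\[
\D_{m\to r'}=\D_{r\to r'}\circ\D_{m\to r}.
\]
Choose $r'=\lfloor m^{5/6}\rfloor$. For large $m$ this satisfies $r'\leq c_0m\leq r$, $r'/m\to0$ and $r'/m^{2/3}\to\infty$. The key observation is that the computational-basis measurement commutes with ordered deletion, because deletion only selects and reorders tensor factors. Consequently the measured model for $r'$ is obtained from the measured model for $r$ by applying a classical channel to $Y$ alone: keep a uniformly random ordered $r'$-subsequence of $Y$, independently of $X$. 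Explicitly, $(X,Y')$ has exactly the planted law with parameters $(m,r')$, since a uniform $r'$-subset of a uniform $r$-subset of $[m]$ is a uniform $r'$-subset of $[m]$.

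Next I need that Sibson information does not increase under a channel $W$ acting on $Y$. For any $Q_Y$, data processing for $D_{1/2}$ gives
\[
D_{1/2}(P_{XY}\|P_XQ_Y)\geq D_{1/2}(P_{XY'}\|P_X(WQ_Y)).
\]
The right-hand side is at least $I_{1/2}^{\rm S}(X:Y')$, and taking the infimum over $Q_Y$ yields $I_{1/2}^{\rm S}(X:Y)\geq I_{1/2}^{\rm S}(X:Y')$. The sparse case applied to $(m,r')$ shows that $I_{1/2}^{\rm S}(X:Y')\to\infty$, which completes the dense case.

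I expect the main subtlety to be checking that the degraded pair $(X,Y')$ is exactly the $(m,r')$ planted model, with $X$ untouched, so that the uniform-in-$Q_Y$ data-processing step is legitimate. Once that is in place, the rest reduces to the already-proved Proposition~\ref{prop:sibson} together with the case split on the ratio $r/m$.
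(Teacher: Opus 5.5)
Your proposal is correct and follows the paper's proof. In the sparse regime you use \eqref{eq:uniform-V} to turn $V_{m,r}\to\infty$ into $r/m^{2/3}\to\infty$ and apply Proposition~\ref{prop:sibson} directly; in the dense regime you degrade to an intermediate number of survivors and use data processing. The paper degrades to $\lfloor m^{3/4}\rfloor$ survivors where you use $\lfloor m^{5/6}\rfloor$. The remaining differences are cosmetic: the paper splits deterministically at $r=m^{3/4}$ rather than by passing to subsequences, and you spell out the classical-channel data-processing step for Sibson information that the paper leaves implicit.
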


\begin{proof}
The assumption implies $m\to\infty$. Suppose first that
$r\geq m^{3/4}$. Uniformly delete output systems until
\[
r_0=\lfloor m^{3/4}\rfloor
\]
survivors remain. Then
\[
r_0=o(m),
\qquad
\frac{r_0}{m^{2/3}}\longrightarrow\infty.
\]
Proposition~\ref{prop:sibson} applies to the resulting ordered-deletion channel. Since the channel with $r_0$ survivors is a degradation of the one with $r$ survivors, data processing gives the result for the original output.

Now suppose that $r<m^{3/4}$. In this case $r=o(m)$, and
\eqref{eq:uniform-V} gives
\[
V_{m,r}
\leq
C\frac{r^{3/2}}{m}.
\]
Thus $V_{m,r}\to\infty$ implies
\[
\frac{r}{m^{2/3}}\longrightarrow\infty,
\]
so Proposition~\ref{prop:sibson} applies directly.
\end{proof}

For a fixed branch $L$, let $P_{XY}^{(L)}$ denote the measured planted distribution and let $\tau^{(L)}_{C'D}$ be the corresponding conditional Choi state.

\begin{proposition}[Information divergence on product branches]
\label{prop:branch-information}
For every deterministic sequence of feasible branches $L=L_n$ such that
\[
X(L)\longrightarrow\infty,
\]
one has
\begin{equation}
I_{1/2}^{\rm S}(X:Y)_{P^{(L)}}
\longrightarrow\infty.
\label{eq:branch-sibson-divergence}
\end{equation}
Consequently,
\begin{equation}
I_{2/3}^{*}(C':D)_{\tau^{(L)}}
\longrightarrow\infty.
\label{eq:branch-quantum-divergence}
\end{equation}
\end{proposition}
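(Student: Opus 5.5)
We split into two cases according to whether some block is individually supercritical. Pass to subsequences: it suffices to show that every subsequence has a further subsequence along which $I_{1/2}^{\rm S}\to\infty$.

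\emph{Case 1: $V_{\max}\to\infty$ along the subsequence.} Let $b^*$ be a block attaining $V_{\max}$. Discarding the inputs and outputs of all other blocks is a data-processing step. Because the branch is a product of independent planted models, the marginal on block $b^*$ is exactly the measured Choi distribution of $\D_{m\to L_{b^*}}$. Optimizing over $Q_Y$ only becomes easier after marginalization, because the infimum over product laws on the full output is bounded below by the infimum on the sub-output. Lemma~\ref{lem:single-block-reduction} then gives divergence.

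\emph{Case 2: $V_{\max}$ stays bounded while $V=X(L)\to\infty$.} Here we combine the blocks through the summed score $T=\sum_b\xi_b^TB_b\eta_b$, with $B_b=P_b^T$. Conditionally on the planted subsets $S_b$, its mean is $m=\sum_b q_{m,L_b}(S_b)$. The $S_b$ are independent, so $\E m=V$. By the pointwise score control, each summand lies in $[0,CV_b]$, hence $\operatorname{Var}(m)\le\sum_b C V_b^2\le CV_{\max}V=o(V^2)$. Chebyshev then gives $m\ge V/2$ with probability $1-o(1)$. This replaces the Freedman and local-hypergeometric step of Lemma~\ref{lem:plantedscore}.

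For the fluctuation of $T$ given $S$, the score is again a Rademacher quadratic form in the concatenated $\xi$. Its matrix is block diagonal, with Frobenius norm squared at most $\sum_bV_b=V$ and operator norm at most $\max_b\sqrt{L_b/m}\le 1$. I would bound $\norm{A}_2$ more sharply by $\max_b\norm{P_b}_2\le\max_b\sqrt{L_b/m}$. Since $V_b\ge L_b/m$, this is at most $\sqrt{V_{\max}}=O(1)$. Hanson--Wright at threshold $V/4$ then yields failure probability $\exp[-c\min(V^2/V,\,V/O(1))]\to0$. So $P(T<V/4)=o(1)$.

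Uniform separation then follows as in Lemma~\ref{lem:uniformproduct}. Put $v(y)=\sum_b\norm{P_b^T\eta_b(y)}_2^2$. Its planted mean is $V$, because the planted output marginal is uniform in each block. Set $h_n=\sqrt V$ and
\[
\mathcal A_n=\{T\ge V/4\}\cap\{v(Y)\le h_nV\}.
\]
Markov's inequality gives $P(\mathcal A_n)\to1$. Under $P_XQ_Y$ with $Y=y$ fixed, $T$ is a centered Rademacher sum with variance proxy $v(y)$, so Hoeffding's inequality bounds its probability by $\exp[-V^2/(32h_nV)]=e^{-\sqrt V/32}$, uniformly in $Q_Y$. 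The Hellinger affinity argument of Proposition~\ref{prop:sibson} then yields \eqref{eq:branch-sibson-divergence}.

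The quantum statement \eqref{eq:branch-quantum-divergence} follows from the same three steps used before: monotonicity in R\'enyi order, computational-basis measurement of $\tau^{(L)}$ (a product of block Choi states), and data processing for $D^*_{2/3}$ with optimization over diagonal conditioning states.

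The main obstacle is Case 2's concentration of the planted mean and the operator-norm bound. The bound $q_{m,r}\le CV_{m,r}$ is essential here, since it makes the variance $o(V^2)$ precisely when no single block dominates. One should also check that a mixed situation, with $V_{\max}$ neither bounded nor divergent, is covered by the subsequence dichotomy.
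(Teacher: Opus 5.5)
Your proposal is correct and follows essentially the same route as the paper. A dominant-block case is reduced by data processing to the single-block reduction lemma. The spread case combines three ingredients: the pointwise score bound with Chebyshev; Hanson--Wright with $\norm{A}_F^2\le V$ and $\norm{A}_2\le 1$; and the event $\{T\ge V/4\}\cap\{v(Y)\le V^{3/2}\}$ with Hoeffding, uniformly in $Q_Y$.

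The only difference is cosmetic. You split by a subsequence dichotomy ($V_{\max}\to\infty$ versus bounded, which does cover all cases), whereas the paper splits at the threshold $V_{\max}>\sqrt V$. The paper's split gives $\operatorname{Var}(Q\mid L)\le CV^{3/2}$ rather than your $O(V)$, and either suffices.
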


\begin{proof}
If
\[
V_{\max}>\sqrt V,
\]
then some block satisfies
\[
V_b>\sqrt V\longrightarrow\infty.
\]
Keeping only that block and applying
Lemma~\ref{lem:single-block-reduction} proves
\eqref{eq:branch-sibson-divergence} by data processing.

It remains to consider the case
\[
V_{\max}\leq\sqrt V.
\]
Define the global alignment score
\[
T
=
\sum_{b=1}^B
\xi_b^T P_b^T\eta_b.
\]
Conditioned on the hidden subsets $S_b$, one has
\[
\eta_b=R_{S_b}\xi_b,
\]
where $R_{S_b}$ is the corresponding selection matrix. Therefore
\[
\E[T\mid S,L]
=
Q(S,L)
:=
\sum_{b=1}^B q_{m,L_b}(S_b).
\]

The subsets $S_b$ are independent conditioned on $L$, and
\[
\E[Q\mid L]=V.
\]
By the pointwise score bound,
\[
0\leq q_{m,L_b}(S_b)\leq C V_b,
\qquad
\E q_{m,L_b}(S_b)=V_b.
\]
Hence
\begin{align}
\operatorname{Var}(Q\mid L)
&=
\sum_b
\operatorname{Var}\!\left(q_{m,L_b}(S_b)\right)
\\
&\leq
C\sum_bV_b^2
\\
&\leq
C V_{\max}V
\\
&\leq
C V^{3/2}
=
o(V^2).
\end{align}
It follows that
\begin{equation}
\Prb\!\left(Q\geq\frac V2\,\middle|\,L\right)
\longrightarrow1.
\label{eq:Q-concentration}
\end{equation}

For each block, define
\[
A_b
=
\frac12
\left(
P_b^TR_{S_b}+R_{S_b}^TP_b
\right),
\qquad
A=\bigoplus_{b=1}^B A_b.
\]
Then
\[
T=\xi^TA\xi,
\qquad
\operatorname{tr}A=Q(S,L).
\]
Moreover,
\begin{align}
\norm{A}_F^2
&\leq
\sum_b
\norm{P_b^TR_{S_b}}_F^2
=
\sum_b\norm{P_b}_F^2
=
V,
\\
\norm{A}_2
&\leq
\max_b\norm{P_b}_2
=
\max_b\sqrt{\frac{L_b}{m}}
\leq1.
\end{align}
Indeed, each $P_b$ has row sums equal to one and column sums equal to
$L_b/m$. Hence
\[
\norm{P_b}_2
\leq
\sqrt{\norm{P_b}_1\norm{P_b}_\infty}
=
\sqrt{\frac{L_b}{m}},
\]
and the normalized constant vectors attain equality.
On the event $Q\geq V/2$, the Hanson--Wright inequality gives
\[
\Prb\!\left(
T<\frac V4
\,\middle|\,
S,L
\right)
\leq
2\exp\left[
-c\min\left\{
\frac{V^2}{V},
\frac{V}{1}
\right\}
\right]
\leq
2e^{-cV}.
\]
Together with \eqref{eq:Q-concentration}, this shows that
\begin{equation}
P^{(L)}
\left(
T\geq\frac V4
\right)
\longrightarrow1.
\label{eq:global-score-large}
\end{equation}

For an output word $y$, define
\[
v(y)
=
\sum_{b=1}^B
\norm{P_b^T\eta_b(y)}_2^2.
\]
The planted output marginal is uniform, and therefore
\[
\E_{P^{(L)}}v(Y)
=
\sum_b\norm{P_b}_F^2
=
V.
\]
Consider the event
\[
\mathcal A_L
=
\left\{
T\geq\frac V4
\right\}
\cap
\left\{
v(Y)\leq V^{3/2}
\right\}.
\]
Equation~\eqref{eq:global-score-large} and Markov's inequality give
\[
P^{(L)}(\mathcal A_L)
\longrightarrow1.
\]

Under $P_XQ_Y$, where $P_X$ is the fixed uniform input marginal and $Q_Y$ is arbitrary, conditionally on $Y=y$,
\[
T
=
\sum_b
\xi_b^TP_b^T\eta_b(y)
\]
is a centered Rademacher sum with variance proxy $v(y)$. On
$\mathcal A_L$, Hoeffding's inequality gives
\[
(P_XQ_Y)(\mathcal A_L)
\leq
\exp\left(
-c\frac{V^2}{V^{3/2}}
\right)
=
e^{-c\sqrt V},
\]
uniformly over $Q_Y$. Thus
\[
\sup_{Q_Y}
(P_XQ_Y)(\mathcal A_L)
\longrightarrow0.
\]

The standard event decomposition for the Hellinger affinity now gives
\[
\sup_{Q_Y}
\operatorname{Aff}
\left(
P_{XY}^{(L)},
P_XQ_Y
\right)
\longrightarrow0.
\]
Since
\[
D_{1/2}(P\|R)
=
-2\log_2\operatorname{Aff}(P,R),
\]
this proves \eqref{eq:branch-sibson-divergence}. Monotonicity in the R\'enyi order and computational-basis data processing then give
\eqref{eq:branch-quantum-divergence}.
\end{proof}

\subsection{Proof of the block-resolved threshold}

\begin{proof}[Proof of Theorem~\ref{thm:block}]
The subcritical statement follows from the global converse proved above.

Suppose now that
\[
\Lambda_{n,\ell,B}\longrightarrow\infty.
\]
The deterministic lower bound \eqref{eq:det-lower} gives
\[
\inf_{\substack{L\ \mathrm{feasible}\\\sum_bL_b=\ell}}
X(L)
\longrightarrow\infty.
\]

For a feasible branch, define
\[
\alpha_n(L)
=
\sup_{Q_Y}
\operatorname{Aff}
\left(
P_{XY}^{(L)},
P_XQ_Y
\right).
\]
Proposition~\ref{prop:branch-information} implies
\begin{equation}
\sup_L\alpha_n(L)\longrightarrow0.
\label{eq:uniform-branch-affinity}
\end{equation}
Indeed, otherwise there would be a subsequence and feasible branches
$L_n$ for which $\alpha_n(L_n)$ remained bounded away from zero. The deterministic lower bound would give $X(L_n)\to\infty$, contradicting the proposition.

Let $P_{XY\mathsf L}$ be the full measured distribution, including the accessible classical flag. For an arbitrary distribution
\[
Q_{Y\mathsf L}
=
\sum_L q_L Q_{Y|L}\otimes|L\rangle\!\langle L|,
\]
the affinity decomposes as
\begin{align}
\operatorname{Aff}
\left(
P_{XY\mathsf L},
P_XQ_{Y\mathsf L}
\right)
&=
\sum_L
\sqrt{p_Lq_L}\,
\operatorname{Aff}
\left(
P_{XY}^{(L)},
P_XQ_{Y|L}
\right)
\\
&\leq
\left(\sup_L\alpha_n(L)\right)
\sum_L\sqrt{p_Lq_L}
\\
&\leq
\sup_L\alpha_n(L).
\end{align}
Here the last inequality follows from Cauchy--Schwarz. Equation~\eqref{eq:uniform-branch-affinity} therefore implies
\[
I_{1/2}^{\rm S}(X:Y\mathsf L)
\longrightarrow\infty.
\]
Monotonicity in the R\'enyi order and data processing under computational-basis measurement give
\begin{equation}
I_{2/3}^{*}(C':D\mathsf L)_\tau
\longrightarrow\infty.
\label{eq:flagged-quantum-information}
\end{equation}

Let $F_{\rm env}$ denote the environment of a pure Choi dilation of the flagged channel
\[
C'\longrightarrow D\mathsf L.
\]
Pure-state duality gives
\[
H_2^*(C'|F_{\rm env})
=
-H_{2/3}^*(C'|D\mathsf L).
\]
Since the input Choi marginal is maximally mixed,
\[
H_{2/3}^*(C'|D\mathsf L)
=
n-I_{2/3}^*(C':D\mathsf L)_\tau.
\]
Together with
\[
H_2^*(C|R)=n-2k,
\]
this gives
\[
H_2^*(C|R)
+
H_2^*(C'|F_{\rm env})
=
I_{2/3}^*(C':D\mathsf L)_\tau-2k.
\]

The joint state--channel decoupling theorem of
Ref.~\cite{ChengDupuisGao2024}, at R\'enyi order two, bounds the mean complementary-channel error by
\[
2^{-\frac12[
I_{2/3}^*(C':D\mathsf L)_\tau
-2k+\log_2 3]}.
\]
By \eqref{eq:flagged-quantum-information}, this quantity tends to zero. Markov's inequality gives decoupling in probability over the Haar-random scrambling unitary. After restoring the early-radiation purifier $E$, Uhlmann's theorem provides a recovery channel on
$DE\mathsf L$ whose entanglement fidelity tends to one.
\end{proof}

\subsection{Operational recovery scale}

The block-resolved channels are degraded as the number of survivors decreases. More precisely, uniformly deleting one of the $\ell+1$ output systems and decrementing the corresponding block count defines a channel
$\mathcal G_{\ell+1\to\ell}^{(B)}$ such that
\begin{equation}
\D^{(B)}_{n\to\ell}
=
\mathcal G_{\ell+1\to\ell}^{(B)}
\circ
\D^{(B)}_{n\to\ell+1}.
\label{eq:block-degradation-full}
\end{equation}
Hence
\[
F_{\rm e}^{\rm opt}(U;\ell,B)
\]
is nondecreasing in $\ell$.

\begin{proof}[Proof of Corollary~\ref{cor:phase}]
Recall
\[
s(n,B)
=
\begin{cases}
n^{2/3}B^{-1/3}, & B\leq\sqrt n,\\[1mm]
n/B, & B\geq\sqrt n.
\end{cases}
\]
At $\ell=s(n,B)$, the corresponding value of
$\Lambda_{n,\ell,B}$ is of order one. More precisely, the first branch of $s$ lies in the regime $\ell/B\geq1$, while the second lies in the regime $\ell/B\leq1$. Since
$\Lambda_{n,\ell,B}$ is continuous and increasing in $\ell$,
\begin{align}
\frac{\ell}{s(n,B)}\longrightarrow0
&\quad\Longrightarrow\quad
\Lambda_{n,\ell,B}\longrightarrow0,
\label{eq:scale-to-subcritical}
\\
\frac{\ell}{s(n,B)}\longrightarrow\infty,
\qquad
\ell=o(n)
&\quad\Longrightarrow\quad
\Lambda_{n,\ell,B}\longrightarrow\infty.
\label{eq:scale-to-supercritical}
\end{align}

Suppose that no uniform positive lower constant existed. Then there would be a sequence
$c_j\downarrow0$ and pairs $(n_j,B_j)$, with $B_j\mid n_j$, such that
\[
\ell_{\rec}(n_j,B_j;\varepsilon,\delta)
<
c_j s(n_j,B_j).
\]
Since $\ell_{\rec}\geq1$, this forces
\[
s(n_j,B_j)\longrightarrow\infty.
\]
Consequently,
\[
\frac{
\ell_{\rec}(n_j,B_j;\varepsilon,\delta)
}{
s(n_j,B_j)
}
\longrightarrow0.
\]
Because $s(n,B)\leq n^{2/3}$, this sequence also satisfies
$\ell_{\rec}=o(n)$. Equation~\eqref{eq:scale-to-subcritical} and the subcritical part of Theorem~\ref{thm:block} imply that the optimal fidelity approaches $4^{-k}$ uniformly over $U$. This contradicts the defining fixed-error property of
$\ell_{\rec}$, since
\[
1-\varepsilon>4^{-k}.
\]

For the upper bound, suppose that no uniform finite constant existed. Choose
\[
C_j\longrightarrow\infty
\]
and $N_j\to\infty$ such that
\[
C_jN_j^{-1/3}\longrightarrow0.
\]
The assumed failure of a uniform bound allows us to choose
$n_j\geq N_j$ and $B_j\mid n_j$ such that
\[
\ell_{\rec}(n_j,B_j;\varepsilon,\delta)
>
C_j s(n_j,B_j).
\]
Set
\[
\ell_j
=
\left\lfloor
C_j s(n_j,B_j)
\right\rfloor.
\]
Since $s(n,B)\leq n^{2/3}$,
\[
\frac{\ell_j}{n_j}
\leq
C_jn_j^{-1/3}
\leq
C_jN_j^{-1/3}
\longrightarrow0.
\]
Moreover,
\[
\frac{\ell_j}{s(n_j,B_j)}
\longrightarrow\infty.
\]
Equation~\eqref{eq:scale-to-supercritical} gives
\[
\Lambda_{n_j,\ell_j,B_j}\longrightarrow\infty.
\]
The supercritical part of Theorem~\ref{thm:block} therefore implies
\[
\Prb_{U\sim\mathrm{Haar}}
\left[
F_{\rm e}^{\rm opt}(U;\ell_j,B_j)
\geq1-\varepsilon
\right]
\longrightarrow1.
\]
For all sufficiently large $j$, this probability exceeds $1-\delta$, contradicting
\[
\ell_j
<
\ell_{\rec}(n_j,B_j;\varepsilon,\delta).
\]

Thus there are constants
\[
0<c_{k,\varepsilon,\delta}
\leq
C_{k,\varepsilon,\delta}
<\infty
\]
such that
\[
c_{k,\varepsilon,\delta}s(n,B)
\leq
\ell_{\rec}(n,B;\varepsilon,\delta)
\leq
C_{k,\varepsilon,\delta}s(n,B)
\]
for all sufficiently large $n$, uniformly over divisors $B$ of $n$.
\end{proof}

\section{Exact spectral decomposition of the chronological channel}
\label{app:chron-exact}

This appendix derives the operator decomposition used in
Sec.~\ref{sec:chronological}. For
\[
\bm m=(m_x,m_y,m_z),
\qquad
|\bm m|=m_x+m_y+m_z=r,
\]
let $\mathcal P_{\bm m}^{(n)}$ be the set of $n$-qubit Pauli strings containing $m_x$ copies of $X$, $m_y$ copies of $Y$, $m_z$ copies of $Z$, and identities on the remaining sites. Define
\begin{equation}
 F_{\bm m}^{(n)}
 =
 \left[
 \frac{n!}{m_x!m_y!m_z!(n-r)!}
 \right]^{-1/2}
 \sum_{P\in\mathcal P_{\bm m}^{(n)}} P.
 \label{eq:symmetric-pauli-normalization}
\end{equation}
These operators are orthonormal with respect to the normalized Hilbert--Schmidt inner product,
\begin{equation}
 2^{-n}\Tr\left[
 (F_{\bm m}^{(n)})^\dagger F_{\bm m'}^{(n)}
 \right]
 =\delta_{\bm m,\bm m'}.
\end{equation}

The permutation twirl projects onto the permutation-invariant operator subspace spanned by these symmetric Pauli sums. Counting the terms that survive the partial trace gives
\begin{equation}
 \Ochan_{n\to\ell}\!\left(2^{-n}F_{\bm m}^{(n)}\right)
 =
 2^{-\ell}
 \sqrt{\frac{(\ell)_r}{(n)_r}}\,
 F_{\bm m}^{(\ell)}.
 \label{eq:chron-action-app}
\end{equation}
For each degree $r$, the number of admissible multiindices is
\begin{equation}
 d_r
 =
 \#\left\{
 (m_x,m_y,m_z)\in\mathbb Z_{\geq0}^3:
 m_x+m_y+m_z=r
 \right\}
 =
 \binom{r+2}{2}.
\end{equation}
Let $\tau_{C'D}^{\rm chron}$ be the normalized Choi state of $\Ochan_{n\to\ell}$ and set
\begin{equation}
 L_{n,\ell}^{\rm chron}
 :=2^{n+\ell}\tau_{C'D}^{\rm chron}.
 \label{eq:chron-likelihood-definition}
\end{equation}
Its operator-Schmidt decomposition is
\begin{equation}
 L_{n,\ell}^{\rm chron}
 =
 \sum_{r=0}^{\ell}
 \sqrt{\frac{(\ell)_r}{(n)_r}}
 \sum_{|\bm m|=r}
 (F_{\bm m}^{(n)})^T
 \otimes
 F_{\bm m}^{(\ell)}.
 \label{eq:chron-schmidt-app}
\end{equation}
The transpose is taken in the computational basis used to define the Choi state. Thus the degree-$r$ singular value is
\[
\lambda_{n,\ell,r}
=
\sqrt{\frac{(\ell)_r}{(n)_r}},
\]
with multiplicity $\binom{r+2}{2}$.

Orthogonality of the symmetric Pauli sums then gives
\begin{equation}
 G_{n,\ell}^{\rm chron}
 =2^{-(n+\ell)}\Tr\left[(L_{n,\ell}^{\rm chron})^2\right]
 =
 \sum_{r=0}^{\ell}
 \binom{r+2}{2}
 \frac{(\ell)_r}{(n)_r}.
 \label{eq:chron-purity-app}
\end{equation}
Suppose that
\[
\frac{\ell}{n}\longrightarrow\alpha<1.
\]
Extend the summand in \eqref{eq:chron-purity-app} by zero for $r>\ell$, with the convention $(\ell)_r=0$ in that range. For each fixed $r\geq0$,
\[
\frac{(\ell)_r}{(n)_r}
\longrightarrow
\alpha^r.
\]
Choose $\alpha_+$ with
\[
\alpha<\alpha_+<1.
\]
For all sufficiently large $n$, $\ell/n\leq\alpha_+$, and
\begin{equation}
0
\leq
\frac{(\ell)_r}{(n)_r}
=
\prod_{j=0}^{r-1}
\frac{\ell-j}{n-j}
\leq
\left(\frac{\ell}{n}\right)^r
\leq
\alpha_+^r.
\end{equation}
Since
\[
\sum_{r=0}^{\infty}
\binom{r+2}{2}\alpha_+^r
<
\infty,
\]
dominated convergence yields
\begin{equation}
 G_{n,\ell}^{\rm chron}
 \longrightarrow
 \sum_{r=0}^{\infty}
 \binom{r+2}{2}\alpha^r
 =
 \frac{1}{(1-\alpha)^3}.
 \label{eq:chron-purity-limit-app}
\end{equation}

These identities give the exact singular spectrum and purity limit used in Sec.~\ref{sec:chronological}.

\end{document}